\documentclass[submission, copyright, creativecommons]{eptcs}



\usepackage{amssymb}
\usepackage{amsmath}
\usepackage{centernot}
\usepackage{stmaryrd}
\usepackage{xspace}
\usepackage{listings}
\usepackage{hyperref}
\usepackage{macros}
\usepackage{pifont}
\usepackage[dvipsnames]{xcolor}
\usepackage{tikz-cd}
\usepackage{graphicx}
\usepackage{array}
\usepackage{todonotes}

\newif\ifsubmit
\submittrue
\ifsubmit
 
\newcommand{\DAComm}[1]{} 
\else
 
\newcommand{\DAComm}[1]{{\scriptsize\textcolor{magenta}{[\bf{Davide: }#1}]}}
\fi

\lstdefinelanguage{rml}{
  morekeywords={not,false,true,none,any,empty,matches,all,let,if,else},
  morecomment=[l]{//},
}

\lstset{
  language=rml,
  mathescape=true,
  basicstyle=\ttfamily\small,
  keywordstyle=\color{OliveGreen}\ttfamily\bfseries,
  commentstyle=\ttfamily\emph,
  captionpos=b,
}

\title{Can determinism and compositionality coexist in \rml?}

\author{Davide Ancona \qquad
Viviana Mascardi
\institute{DIBRIS, University of Genova, Italy}
\email{\{Davide.Ancona,Viviana.Mascardi\}@unige.it}
\and
Angelo Ferrando
\institute{University of Manchester, UK}
\email{angelo.ferrando@manchester.ac.uk}
}

\allowdisplaybreaks
\begin{document}

\maketitle

\begin{abstract}
Runtime verification (RV) consists in dynamically verifying that
the event traces generated by single runs of a system under scrutiny (SUS)
are compliant with the formal specification of its expected properties.
\rml (Runtime Monitoring Language) is a simple but expressive
Domain Specific Language for RV; its semantics is based on a trace calculus formalized by a deterministic rewriting  system
which drives the implementation of the interpreter of the monitors
generated by the \rml compiler from the specifications.
While determinism of the trace calculus ensures better performances of the
generated monitors, it makes the semantics of its operators less intuitive.
In this paper we move a first step towards a compositional semantics of
the \rml trace calculus, by interpreting its basic operators 
as operations on sets of instantiated event traces and by proving that such
an interpretation is equivalent to the operational semantics of the calculus. 
\end{abstract}

\section{Introduction}
RV \cite{rv,FalconeKRT18,BartocciFFR18} consists in dynamically verifying that the event traces generated by single runs of a SUS
are compliant with the formal specification of its expected properties.

The RV process needs as inputs the SUS and the specification of the properties to be verified,
usually defined with either a domain specific (DSL) or a programming language,
to denote the set of valid event traces; RV is performed by monitors, automatically generated from
the specification, which consume the observed events of the SUS, emit verdicts and, in case they work online
while the SUS is executing, feedback useful for error recovery.

RV is complimentary to other verification methods:
analogously to formal verification, it uses a specification formalism, but, as opposite to it, scales well to real systems and complex properties
and it is not exhaustive as happens in software testing;
however, it also exhibits several distinguishing features: it is quite useful to check control-oriented properties \cite{AhrendtCPS17},
and offers opportunities for fault protection when the monitor runs online. Many RV approaches adopt a DSL language
to specifiy properties to favor portability and reuse of specifications and interoperability of the generated monitors and to provide
stronger correctness guarantees: monitors automatically generated from a higher level DSL are more reliable than
ad hoc code implemented in a ordinary programming language to perform RV.

\rml\footnote{\url{https://rmlatdibris.github.io}} \cite{FranceschiniPhD2020} is a simple but expressive DSL for RV
which can be used in practice for RV of complex non Context-Free properties, as FIFO properties, which can be verified
by the generated monitors in time linear in the size of the inspected trace; the language design and implementation is based on previous work on trace expressions and global types \cite{AnconaDM12,CastagnaEtAl12,AnconaBB0CDGGGH16,AnconaFM17}, which have been adopted for RV in several contexts.
Its semantics is based on a trace calculus formalized by a rewriting  system
which drives the implementation of the interpreter of the monitors generated by the \rml compiler from the specifications;
to allow better performances, the rewriting system is fully deterministic \cite{AnconaFFM19} \added{by adopting a left-preferential evaluation strategy
for binary operators} and, thus, no monitor backtracking is needed and
exponential explosion of the space allocated for the states of the monitor is avoided. \added{A similar strategy is followed by 
  mainstream programming languages in predefined libraries for regular expressions for efficient incremental matching of input sequences,
  to avoid the issue of Regular expression Denial of Service (ReDoS) \cite{DavisCSL18}: for instance,
  given the regular expression \lstinline{a?(ab)?} (optionally \lstinline{a} concatenated with optionally \lstinline{ab}) and the input sequence
  \lstinline{ab}, the Java method \lstinline{lookingAt()} of class \lstinline{java.util.regex.Matcher} matches \lstinline{a} instead of the entire input
  sequence \lstinline{ab} because the evaluation of concatenation is deterministically left-preferential.}
  
\added{As explained more in details in Section~\ref{sect:related_work}, with respect to other existing RV formalisms,
  \rml has been designed as an extension of regular expressions and deterministic context-free grammars, which are widely used in RV
  because they are well-understood among software developers as opposite to other more sophisticated approaches, as temporal logics.
As shown in previous papers \cite{AnconaDM12,CastagnaEtAl12,AnconaBB0CDGGGH16,AnconaFM17}, the calculus at the basis of \rml 
allows users to define and efficiently check complex parameterized properties
and it has been proved to be more expressive than LTL \cite{AnconaFM16}}.

Unfortunately, while determinism ensures better performances,
it makes the \added{compositional} semantics of its operators less intuitive; \added{for instance, the example above concerning the
  regular expression  \lstinline{a?(ab)?} with deterministic left-preferential concatenation applies also to \rml, which is more expressive than regular
  expressions: the compositional semantics of concatenation does not correspond to standard language concatenation, because 
  \lstinline{a?} and \lstinline{(ab)?} denote the formal languages $\{\emptyevtr,a\}$ and $\{\emptyevtr,ab\}$, respectively, where $\emptyevtr$ denotes
  the empty string, while, if concatenation is deterministically left-preferential, then the semantics of \lstinline{a?(ab)?} is $\{\emptyevtr,a,aab\}$ which
  does not coincide with the language $\{\emptyevtr,a,ab,aab\}$ obtained by concatenating $\{\emptyevtr,a\}$ with $\{\emptyevtr,ab\}$. In Section~\ref{sec:comp} we show that the semantics of left-preferential concatenation can still be given compositionally, although the corresponding operator is more complicate than standard language concatenation. Similar results follow for the other binary operators of \rml (union, intersection and shuffle);
  in particular, the compositional semantics of left-preferential shuffle is more challenging. Furthermore, the fact that \rml supports parametricity
makes the compositional semantics more complex, since traces must be coupled with the corresponding substitutions generated by event matching. To this aim, as a first step towards a compositional semantics of
the \rml trace calculus,} we provide an interpretation of the basic operators of the \rml trace calculus 
as operations on sets of instantiated event traces, that is, pairs of trace of events and
substitutions computed to bind the variables occurring in the event type patterns used in the specifications and to associate
them with the data values carried by the matched events.
\added{Furthermore} we prove that such
an interpretation is equivalent to the original operational semantics of the calculus based on the deterministic rewriting  system.

The paper is structured as follows: Section~\ref{sec:back} introduces the basic definitions which are used in the subsequent technical sections,
Section~\ref{sec:calculus} formalizes the \rml trace calculus and its operational semantics, while Section~\ref{sec:comp} introduces the
semantics based on sets of instantiated event traces and formally proves its equivalence with the operational semantics; finally,
\added{Section~\ref{sect:related_work} is devoted to the related work and} Section~\ref{sec:conclu}
draws conclusions and directions for further work.
For space limitations, some proof details can be found in the extended version \cite{ancona2020determinism} of this paper.

\section{Technical background}\label{sec:back}
This section introduces some basic definitions and propositions used in the next technical sections.
\paragraph{Partial functions:} Let $\fun{f}{D}{C}$ be a partial function; then $\dom(f)\subseteq D$ denotes the set
of elements $d\in D$ s.t. $f(d)$ is defined (hence, $f(d)\in C$).

A partial function over natural numbers $\fun{f}{\nat}{N}$, with $N\subseteq\nat$,  is \emph{strictly increasing} iff
for all $n_1,n_2\in\dom(f)$, $n_1<n_2$ implies $f(n_1)<f(n_2)$. From this definition one can easily deduce that
a strictly increasing partial function over natural numbers is always injective, and, hence, it is bijective iff it is surjective. 

\begin{proposition}\label{prop:totality}
Let $\fun{f}{\nat}{N}$, with $N\subseteq\nat$, be a strictly
increasing partial function. Then for all $n_1,n_2\in\dom(f)$, if $f(n_1)<f(n_2)$, then
$n_1<n_2$.
\end{proposition}

\begin{proposition}\label{prop:identity}
  Let $\fun{f}{\nat}{N}$, with $N\subseteq\nat$, be a strictly
  increasing partial function satisfying the following conditions:
  \begin{enumerate}
  \item $f$ is surjective (hence, bijective);
  \item for all $n\in\nat$, if $n+1\in\dom(f)$, then $n\in\dom(f)$;
  \item for all $n\in\nat$, if $n+1\in N$, then $n\in N$;
  \end{enumerate}
  Then, for all $n\in\nat$, if $n\in\dom(f)$, then $f(n)=n$, hence $f$ is the identity over $\dom(f)$, and $\dom(f)=N$.
\end{proposition}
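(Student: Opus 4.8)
The plan is to establish the pointwise identity $f(n)=n$ for every $n\in\dom(f)$ by induction on $n$, and then to read off $\dom(f)=N$ as an immediate corollary. The first thing I would record is the structural meaning of conditions 2 and 3: they say precisely that $\dom(f)$ and $N$ are prefix-closed subsets of $\nat$, hence \emph{initial segments}, so that whenever $m$ belongs to one of these sets, every $k<m$ belongs to it as well. If $\dom(f)=\emptyset$ the statement is vacuous, so I assume $\dom(f)\neq\emptyset$; then $0\in\dom(f)$, the image $N$ is nonempty and prefix-closed, and therefore $\min N=0$.

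For the base case $n=0$, strict monotonicity makes $f(0)$ the least value attained by $f$, since $0$ is the least element of $\dom(f)$; surjectivity onto $N$ then forces this least value to be $\min N$, and combined with $\min N=0$ this yields $f(0)=0$.

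For the inductive step I would assume $f(k)=k$ for every $k\in\dom(f)$ with $k<n$ and prove $f(n)=n$ under the hypothesis $n\in\dom(f)$; note that prefix-closedness guarantees that all such $k$ are indeed in $\dom(f)$, so the induction hypothesis applies to each of them. One inequality is immediate from monotonicity: $f(n)>f(n-1)=n-1$, whence $f(n)\geq n$. The reverse inequality is the crux of the argument and the step I expect to be the main obstacle, because it is where surjectivity must be exploited. Suppose, for contradiction, that $f(n)\geq n+1$; since $f(n)\in N$ and $N$ is prefix-closed, we obtain $n\in N$, so by surjectivity some $m\in\dom(f)$ satisfies $f(m)=n$. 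A short case analysis rules this out: if $m<n$ then $f(m)=m<n$ by the induction hypothesis; if $m=n$ then $f(m)=f(n)\geq n+1>n$; and if $m>n$ then $f(m)>f(n)\geq n+1>n$. In every case $f(m)\neq n$, contradicting the choice of $m$. Hence $f(n)\leq n$, and together with $f(n)\geq n$ this gives $f(n)=n$.

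Finally, having shown that $f$ acts as the identity on $\dom(f)$, its range is exactly $\dom(f)$; but the range equals $N$ by surjectivity, so $\dom(f)=N$, completing the proof.
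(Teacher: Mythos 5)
Your proof is correct: the induction on $n$ with the lower bound from strict monotonicity and the upper bound extracted from surjectivity plus prefix-closedness of $N$ is exactly the standard argument, and each case of your contradiction step checks out. The paper states this proposition without proof (deferring details to its extended version), so there is nothing to contrast with; your argument is the natural one and fills that gap soundly.
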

  
\paragraph{Event traces:}
Let $\eventSet$ denotes a possibly infinite set $\eventSet$ of events, called the \emph{event universe}.
An event trace over the event universe $\eventSet$ is a partial function $\fun{\evtr}{\nat}{\eventSet}$ s.t.
for all $n\in\nat$, if $n+1\in\dom(\evtr)$, then $n\in\dom(\evtr)$. We call $\evtr$ \emph{finite}/\emph{infinite}
iff $\dom(\evtr)$ is finite/infinite, respectively; when $\evtr$ is finite, its length $\lgth{\evtr}$ coincides with the cardinality
of $\dom(\evtr)$, while $\lgth{\evtr}$ is undefined for infinite traces $\evtr$.
From the definitions above one can easily deduce that if $\evtr$ is finite, then $\dom(\evtr)=\{n\in\nat \mid n < \lgth{\evtr}\}$.
We denote with
$\emptyevtr$ the unique trace over $\eventSet$ s.t. $\lgth{\emptyevtr}=0$; when not ambiguous, we denote with $\ev$ the
trace $\evtr$ s.t. $\lgth{\evtr}=1$ and $\evtr(0)=\ev$.

For simplicity, in the rest of the paper we implicitly assume that all considered event traces are defined over the same event universe.

\paragraph{Concatenation:}
The concatenation $\evtr_1\catop\evtr_2$ of event trace $\evtr_1$ and $\evtr_2$  is the trace $\evtr$ s.t.
\begin{itemize}
\item if $\evtr_1$ is infinite, then $\evtr=\evtr_1$; 
\item if $\evtr_1$ is finite, then $\evtr(n)=\evtr_1(n)$ for all $n\in\dom(\evtr_1)$,
  $\evtr(n+\lgth{\evtr_1})=\evtr_2(n)$ for all $n\in\dom(\evtr_2)$, and if $\evtr_2$ is finite, then
  $\dom(\evtr)=\{n\mid n < \lgth{\evtr_1}+\lgth{\evtr_2}\}$.
\end{itemize}
From the definition above one can easily deduce that $\emptyevtr$ is the identity of $\catop$, and that
$\evtr_1\catop\evtr_2$ is infinite iff $\evtr_1$ or $\evtr_2$ is infinite.
The trace $\evtr_1$ is a prefix of $\evtr_2$, denoted with $\evtr_1\pref\evtr_2$, iff there exists $\evtr$ s.t. $\evtr_1\catop\evtr=\evtr_2$.
If $\evtrSet_1$ and $\evtrSet_2$ are two sets of event traces over $\eventSet$, then $\evtrSet_1\catop\evtrSet_2$ is the
set $\{ \evtr_1\catop\evtr_2 \mid \evtr_1\in\evtrSet_1, \evtr_2\in\evtrSet_2\}$. We write $\evtr_1\pref\evtrSet$ to mean that
there exists $\evtr_2\in\evtrSet$ s.t. $\evtr_1\pref\evtr_2$.

\paragraph{Shuffle:}
The shuffle $\evtr_1\shuffleop\evtr_2$ of event trace $\evtr_1$ and $\evtr_2$ is the set of traces $\evtrSet$ s.t. $\evtr\in\evtrSet$ iff
$\dom(\evtr)$ can be partitioned into $N_1$ and $N_2$ in such a way that there exist two strictly increasing and bijective\footnote{Actually, the sufficient condition is surjectivity, but bijectivity can be derived from the fact that the functions are strictly increasing over natural numbers.} partial functions
$\fun{f_1}{\dom(\evtr_1)}{N_1}$ and $\fun{f_2}{\dom(\evtr_2)}{N_2}$ s.t.
\begin{itemize}
\item[] $\evtr_1(n_1)=\evtr(f_1(n_1))$ and $\evtr_2(n_2)=\evtr(f_2(n_2))$, for all $n_1\in\dom(\evtr_1)$, $n_2\in\dom(\evtr_2)$.
\end{itemize}  
From the definition above, the definition of $\emptyevtr$ and Proposition~\ref{prop:identity} one can deduce that
$\emptyevtr\shuffleop\evtr=\evtr\shuffleop\emptyevtr=\{\evtr\}$; it is easy to show that for all $\evtr\in\evtr_1\shuffleop\evtr_2$,
$\evtr$ is infinite iff $\evtr_1$ or $\evtr_2$ is infinite, and $\lgth{\evtr}=n$ iff $\lgth{\evtr_1}=n_1$, $\lgth{\evtr_2}=n_2$ and $n=n_1+n_2$. 

If $\evtrSet_1$ and $\evtrSet_2$ are two sets of event traces over $\eventSet$, then $\evtrSet_1\shuffleop\evtrSet_2$ is the
set $\bigcup_{\evtr_1\in\evtrSet_1,\evtr_2\in\evtrSet_2}(\evtr_1\shuffleop\evtr_2)$.

\paragraph{Left-preferential shuffle:}
The left-preferential shuffle $\evtr_1\lpshuffleop\evtr_2$ of event trace $\evtr_1$ and $\evtr_2$ is the set of traces
$\evtrSet\subseteq\evtr_1\shuffleop\evtr_2$ s.t. $\evtr\in\evtrSet$ iff
$\dom(\evtr)$ can be partitioned into $N_1$ and $N_2$ in such a way that there exist two strictly increasing and bijective partial functions
$\fun{f_1}{\dom(\evtr_1)}{N_1}$ and $\fun{f_2}{\dom(\evtr_2)}{N_2}$ s.t.
\begin{itemize}
\item $\evtr_1(n_1)=\evtr(f_1(n_1))$ and $\evtr_2(n_2)=\evtr(f_2(n_2))$, for all $n_1\in\dom(\evtr_1)$, $n_2\in\dom(\evtr_2)$;  
\item for all $n_2\in\dom(\evtr_2)$, if $m=\min\{ n_1\in\dom(\evtr_1) \mid f_2(n_2)<f_1(n_1) \}$, then $\evtr_1(m)\neq\evtr_2(n_2)$.  
\end{itemize}  
In the definition above, if\footnote{This happens iff in $\evtr$ all events of $\evtr_1$ precede position $n_2$, hence, event $\evtr_2(n_2)$.} $\{ n_1\in\dom(\evtr_1) \mid f_2(n_2)<f_1(n_1) \}=\emptyset$, then the second condition trivially holds.

\added{As an example, if we have two traces of events $\evtr_1 = \ev_1\catop\ev_2$, and $\evtr_2 = \ev_2\catop\ev_3$, by applying the left-preferential shuffle we obtain the set of traces $\evtr_1\lpshuffleop\evtr_2 = \{ \ev_1\catop\ev_2\catop\ev_2\catop\ev_3, \ev_2\catop\ev_3\catop\ev_1\catop\ev_2, \ev_2\catop\ev_1\catop\ev_3\catop\ev_2, \ev_2\catop\ev_1\catop\ev_2\catop\ev_3 \}$. With respect to $\evtr_1\shuffleop\evtr_2$, the trace $\ev_1\catop\ev_2\catop\ev_3\catop\ev_2$ has been excluded, since this can be obtained only when the first occurrence of $\ev_2$ belongs to $\evtr_2$; formally, this
  correponds to the functions $\fun{f_1}{\{0,1\}}{\{0,3\}}$ and $\fun{f_2}{\{0,1\}}{\{1,2\}}$ s.t. $f_1(0)=0$, $f_1(1)=3, f_2(0)=1, f_2(1)=2$, which
  satisfy the first item of the definition, but not the second, because $\min\{ n_1\in\{0,1\} \mid f_2(0)=1<f_1(n_1) \}=1$ and
  $\evtr_1(1)=\ev_2=\evtr_2(0)$; the functions $f'_1$ and $f'_2$ s.t. $f'_1(0)=0$, $f'_1(1)=1, f'_2(0)=3, f'_2(1)=2$ satisfy both items, but $f'_2$ is \textbf{not}
strictly increasing.}

\paragraph{Generalized left-preferential shuffle:}
Given a set of event traces $\evtrSet$, the generalized left-preferential shuffle $\evtr_1\glpshuffleop{\evtrSet}\evtr_2$ of event trace $\evtr_1$ and $\evtr_2$ w.r.t. $\evtrSet$ is the set of traces $\evtrSet'\subseteq\evtr_1\lpshuffleop\evtr_2$ s.t. $\evtr\in\evtrSet'$ iff
$\dom(\evtr)$ can be partitioned into $N_1$ and $N_2$ in such a way that there exist two strictly increasing and bijective partial functions
$\fun{f_1}{\dom(\evtr_1)}{N_1}$ and $\fun{f_2}{\dom(\evtr_2)}{N_2}$ s.t.
\begin{itemize}
\item $\evtr_1(n_1)=\evtr(f_1(n_1))$ and $\evtr_2(n_2)=\evtr(f_2(n_2))$, for all $n_1\in\dom(\evtr_1)$, $n_2\in\dom(\evtr_2)$;  
\item for all $n_2\in\dom(\evtr_2)$, if $m=\min\{ n_1\in\dom(\evtr_1) \mid f_2(n_2)<f_1(n_1) \}$, then $\evtr'(m)\neq\evtr_2(n_2)$ for all
$\evtr'\in\evtrSet$ s.t. $m\in\dom(\evtr')$.  
\end{itemize}  
From the definitions of the shuffle operators above one can easily deduce that
$\evtr_1\glpshuffleop{\emptyset}\evtr_2=\evtr_1\shuffleop\evtr_2$ and $\evtr_1\glpshuffleop{\{\evtr_1\}}\evtr_2=\evtr_1\lpshuffleop\evtr_2$, 
for all event traces $\evtr_1$, $\evtr_2$.
\added{This generalisation of the left-preferential shuffle is needed to define the compositional semantics of the shuffle in Section~\ref{sec:comp}. Let us consider $T_1=\{\ev_1\catop\ev_2,\ev_3\catop\ev_4\}$ and $T_2=\{\ev_1\catop\ev_5\}$; one might be tempted to define
  $T_1 \lpshuffleop T_2$ as the set $\{ \evtr \;|\; \evtr_1\in T_1, \evtr_2\in T_2, \evtr\in\evtr_1\lpshuffleop\evtr_2\}$, which corresponds to $\{ \ev_1\catop\ev_2\catop\ev_1\catop\ev_5, \ev_1\catop\ev_1\catop\ev_2\catop\ev_5, \ev_1\catop\ev_1\catop\ev_5\catop\ev_2, \ev_3\catop\ev_4\catop\ev_1\catop\ev_5, \ev_3\catop\ev_1\catop\ev_4\catop\ev_5, \ev_3\catop\ev_1\catop\ev_5\catop\ev_4, \ev_1\catop\ev_5\catop\ev_3\catop\ev_4, \ev_1\catop\ev_3\catop\ev_4\catop\ev_5, \ev_1\catop\ev_3\catop\ev_5\catop\ev_4 \}$. But, the last three traces, where $\ev_1$ is consumed from $T_2$ as first event, are not correct,
  because the event $\ev_1$ in $T_1$ must take the precedence. Thus, the correct definition is given by 
$\{ \evtr \;|\; \evtr_1\in T_1, \evtr_2\in T_2, \evtr\in\evtr_1\glpshuffleop{\evtrSet_1}\evtr_2\}$, which does not contain the three traces mentioned above.} 

\section{The \rml trace calculus}\label{sec:calculus}

In this section we define the operational semantics of the trace calculus
on which \rml is based on.
An \rml specification is compiled into a term of the trace calculus, which is used as an Intermediate Representation, and then
a SWI-Prolog\footnote{\url{http://www.swi-prolog.org/}}
monitor is generated; its execution employs the interpreter of the trace calculus, whose SWI-Prolog implementation
is directly driven by the reduction rules defining the labeled transition system of the calculus.

\paragraph{Syntax.} 
The syntax of the calculus is defined in Figure~\ref{fig:syn}.
\begin{figure}
\input{syntax}
\caption{Syntax of the \rml trace calculus: $\eventTy$ is defined inductively, $\te$ is defined coinductively on the set of cyclic terms.}\label{fig:syn}
\end{figure}
The main basic building block of the calculus is provided by the notion of \emph{event type pattern}, an expression 
consisting of a name $\evTyName$ of an \emph{event type}, applied to arguments which are \emph{basic data expressions}
denoting either variables or the data values (of primitive, array, or object type) associated with the events perceived by the monitor.
An event type is a predicate which defines a possibly infinite set of events; an event type pattern specifies the set of events that are expected to
occur at a certain point in the event trace; since event type patterns
can contain variables, upon a successful match a substitution is computed to bind the variables of the pattern with the data values carried by the
matched event.

\rml is based on a general object model where events are represented as 
JavaScript object literals;
for instance, the event type \lstinline{open($\mathit{fd}$)} of arity 1 may represent all events 
stating `function call \lstinline{fs.open} has returned file descriptor  $\mathit{fd}$' and having shape
\lstinline!{event:'func_post', name:'fs.open', res:$\mathit{fd}$}!.
The argument $\mathit{fd}$ consists of the file descriptor (an integer value)
returned by a call to \lstinline{fs.open}. The definition is parametric in the variable $\mathit{fd}$
which can be bound only when the corresponding event is matched with the information of the
file descriptor associated with the property \lstinline{res}; for instance, \lstinline{open(42)}
matches all events of shape \lstinline!{event:'func_post', name:'fs.open', res:42}!, that is, all
returns from call to \lstinline{fs.open} with value 42.  

Despite \rml offers to the users the possibility to define the event types that are used in the specification,
for simplicity the calculus is independent of the language used to define event types; correspondingly,
the definition of the rewriting system of the calculus
is parametric in the relation $\mtch$ assigning a semantics to event types (see below).

A specification is represented by a trace expression $\te$ built on top of the constant
$\emptyTrace$ (denoting the singleton set with the empty trace),
event type patterns $\eventTy$ (denoting the sets of all traces of length 1 with events matching
$\eventTy$), the binary operators (able to combine together sets of traces) of concatenation (juxtaposition), intersection ($\andop$),
union ($\orop$) and shuffle ($\shuffleop$), and a let-construct to define the scope of variables used in event type patterns.

Differently from event type patterns, which are inductively defined terms, 
trace expressions are assumed to be cyclic (a.k.a. regular or rational) \cite{Courcelle83,FrischEtAl08,AnconaC14,AnconaC16}
to provide an abstract support to recursion, since no explicit constructor is needed for it:
the depth of a tree corresponding to a trace expression is allowed to be infinite, but the number of its different subtrees must be finite.
This condition is proved to be equivalent \cite{Courcelle83} to requiring that a trace expression
can always be defined by a \emph{finite} set\footnote{The internal representation of cyclic terms in SWI-Prolog is indeed based on such approach.} of possibly recursive syntactic equations.

Since event type patterns are inductive terms, the definition of free variables for them is standard.
\begin{definition}\label{def:fv}
The set of free variables $\pfv(\eventTy)$ occurring in an event type pattern $\eventTy$ is inductively defined as follows:
$$
\begin{array}{l}
\pfv(\dvar)=\{\dvar\} \qquad \pfv(\pl)=\emptyset \\  
\pfv(\evTyName(\bv_1,\ldots,\bv_n))=\pfv(\{ \pk_1{:}\bv_1,\ldots,\pk_n{:}\bv_n \})=\pfv([\bv_1,\ldots,\bv_n])=\bigcup_{i=1\ldots n}\pfv(\bv_i)
\end{array}
$$
\end{definition}

Given their cyclic nature, a similar inductive definition of free variables for trace expressions does not work;
for instance, if \lstinline[basicstyle=\ttfamily\normalsize]{$\te=$open($\mathit{fd}$)$\catop\te$}, a definition of $\fv$ given by induction
on trace expressions would work only for non-cyclic terms and would be undefined for $\fv(\te)$. Unfortunately,
neither a coinductive definition could work correctly since the set $S$ returned by $\fv(\te)$ has to satisfy
the equation $S=\{\mathit{fd}\}\cup S$ which has \added{infinitely many solutions}; hence, while an inductive definition of $\fv$ leads to
a partial function which is undefined for all cyclic terms, a coinductive definition results in a non-functional relation $\fv$;
luckily, such a relation always admits the ``least solution'' which corresponds to the intended semantics.

\begin{fact}
Let $\fvp$ be the predicate on trace expressions and set of variables, coinductively defined as follows:
$$
\begin{array}{l}
  \RuleNoName{}{\fvp(\emptyTrace,\emptyset)}{} \qquad \RuleNoName{}{\fvp(\eventTy,S)}{\pfv(\eventTy)=S} \qquad
  \RuleNoName{\fvp(\te,S)}{\fvp(\block{\dvar}{\te},S\setminus\{\dvar\})}{}\qquad
  \RuleNoName{\fvp(\te_1,S_1)\quad\fvp(\te_2,S_2)}{\fvp(\te_1\op\te_2,S_1\cup S_2)}{\op\in\{\shuffleop,\catop,\andop,\orop\}} 
\end{array}
$$
Then, for any trace expression $\te$, if $L=\bigcap\{S \mid \fvp(\te,S) \mbox{ holds}\}$, then $\fvp(\te,L)$ holds.    
\end{fact}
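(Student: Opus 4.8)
The plan is to read $\fvp$ as the greatest fixed point $\nu\Phi$ of the monotone functional $\Phi$ on relations between trace expressions and sets of variables that is induced by the four rules (monotone because each premise occurs positively, so $\nu\Phi$ exists), and to prove membership by \emph{coinduction}: to establish $\fvp(\te,L)$ it suffices to exhibit a relation $X$ with $(\te,L)\in X$ and $X\subseteq\Phi(X)$. I would fix, once and for all,
\[
L_{\te'}=\textstyle\bigcap\{S\mid\fvp(\te',S)\text{ holds}\},\qquad X=\{(\te',L_{\te'})\mid \te'\text{ a trace expression}\},
\]
so that the statement becomes exactly $(\te,L_\te)\in X\subseteq\nu\Phi$. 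The argument then reduces to two points: that every $L_{\te'}$ is a genuine subset (the intersected family is nonempty), and that $X$ is backward-closed under the rules.

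First I would settle \emph{nonemptiness}: for every $\te'$ there is at least one $S$ with $\fvp(\te',S)$, otherwise the intersection defining $L_{\te'}$ degenerates. Since trace expressions are cyclic, $\te'$ is the solution of a finite system of syntactic equations over its finitely many distinct subterms $t_1,\dots,t_k$, each equation reading off the relevant rule ($\emptyset$, $\pfv(\cdot)$, $S_{j}\setminus\{\dvar\}$, or $S_a\cup S_b$). This induces an operator $F$ on the powerset lattice $(\mathcal P(\mathsf{Var}))^k$, with $\mathsf{Var}$ the universe of variables, which is monotone — deleting a fixed $\dvar$ and taking unions are both monotone, so the set difference is harmless — whence Knaster--Tarski yields a fixed point $(\hat S_1,\dots,\hat S_k)$. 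The relation $\{(t_i,\hat S_i)\}$ then satisfies each local rule from premises inside itself, i.e. it is a post-fixed point of $\Phi$ and hence is contained in $\nu\Phi=\fvp$; in particular the solution set of each $\te'$ is nonempty and $L_{\te'}$ is well defined.

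Next I would give a compositional description of $L_{\te'}$ by \emph{inversion}: the four rule conclusions have pairwise distinct head constructors, so for a fixed $\te'$ exactly one rule can derive $(\te',S)$, and the fixed-point equation $\fvp=\Phi(\fvp)$ can be inverted at the top constructor. This gives $L_{\emptyTrace}=\emptyset$ and $L_{\eventTy}=\pfv(\eventTy)$ immediately; for the binder, $\{S\mid\fvp(\block{\dvar}{\te''},S)\}=\{S''\setminus\{\dvar\}\mid\fvp(\te'',S'')\}$, and since $\bigcap_i(A_i\setminus\{\dvar\})=(\bigcap_i A_i)\setminus\{\dvar\}$ over a nonempty family, $L_{\block{\dvar}{\te''}}=L_{\te''}\setminus\{\dvar\}$; for a binary operator, $\{S\mid\fvp(\te_1\op\te_2,S)\}=\{S_1\cup S_2\mid\fvp(\te_1,S_1),\fvp(\te_2,S_2)\}$ with $S_1,S_2$ ranging \emph{independently}, and the set identity $\bigcap_{S_1,S_2}(S_1\cup S_2)=(\bigcap_{S_1}S_1)\cup(\bigcap_{S_2}S_2)$ gives $L_{\te_1\op\te_2}=L_{\te_1}\cup L_{\te_2}$.

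With these identities, each $(\te',L_{\te'})$ is derivable by its unique applicable rule from premises $(\te'',L_{\te''})$, respectively $(\te_1,L_{\te_1})$ and $(\te_2,L_{\te_2})$, that again lie in $X$; hence $X\subseteq\Phi(X)$, and coinduction delivers $X\subseteq\nu\Phi$, that is, $\fvp(\te,L_\te)$. I expect the main obstacle to be the binary case: the identity $\bigcap(S_1\cup S_2)=\bigcap S_1\cup\bigcap S_2$ hinges on the two families being \emph{independent} and both nonempty (for a putative counterexample $x$ one picks witnesses $S_1^\ast\not\ni x$ and $S_2^\ast\not\ni x$ and forms $S_1^\ast\cup S_2^\ast$), so this step is precisely where the preliminary nonemptiness fact is indispensable; by contrast the $\setminus\{\dvar\}$ of the binder is comparatively benign, since deleting a fixed element commutes with intersection.
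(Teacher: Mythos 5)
Your proposal is correct and takes essentially the same route as the paper, whose entire proof is the one-line ``by case analysis on $\te$ and coinduction on the definition of $\fvp(\te,S)$'': you exhibit the relation $\{(\te',L_{\te'})\}$ as a post-fixed point of the rule functional and check each top constructor. The supporting details you supply --- nonemptiness of each family $\{S\mid\fvp(\te',S)\}$ via a fixed point of the finite equation system, and the identities $\bigcap_i(A_i\setminus\{\dvar\})=(\bigcap_i A_i)\setminus\{\dvar\}$ and $\bigcap_{S_1,S_2}(S_1\cup S_2)=\bigcap S_1\cup\bigcap S_2$ for independent nonempty families --- are exactly what the paper leaves implicit, and they are all sound.
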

\begin{proof}
By case anaysis on $\te$ and coinduction on the definition of $\fvp(\te,S)$. 
\end{proof}
\begin{definition}
The set of free variables $\fv(\te)$ occurring in a trace expression is defined by $\fv(\te)=\bigcap\{S \mid \fvp(\te,S) \mbox{ holds}\}$.
\end{definition}

\paragraph{Semantics.}
\begin{figure}
\input{semantics}
\caption{Transition system for the trace calculus.}
\label{fig:semantics}
\end{figure}
The semantics of the calculus depends on three judgments, inductively defined by the inference rules in Figure \ref{fig:semantics}.
Events $\ev$ range over a fixed universe of events $\eventSet$.
The judgment $\der\isEmpty(\te)$ is derivable iff $\te$ accepts the empty trace $\emptyevtr$ and is auxiliary to
the definition of the other two judgments $\te_1\extrans{\ev}\te_2;\subs$ and $\te\notrans{\ev}$;
the rules defining it are straightforward and are independent from the remaining judgments, hence
a stratified approach is followed and $\der\isEmpty(\te)$ and its negation $\notder\isEmpty(\te)$ are safely used
in the side conditions of the rules for $\te_1\extrans{\ev}\te_2;\subs$ and $\te\notrans{\ev}$ (see below).

The judgment $\te_1\extrans{\ev}\te_2;\subs$ defines the single reduction steps of the labeled
transition system on which the semantics of the calculus is based;
$\te_1\extrans{\ev}\te_2;\subs$ is derivable iff the event $\ev$ can be consumed, with the generated substitution $\subs$, by the expression
$\te_1$, which then reduces to $\te_2$. The judgment $\te\notrans{\ev}$
is derivable iff there are no reduction steps for event $\ev$ starting from expression $\te$ and is needed to
enforce a deterministic semantics and to guarantee that the rules are monotonic and, hence, the existence of the least fixed-point;
the definitions of the two judgments are mutually recursive. 

Substitutions are finite partial maps from variables to data values which
are produced by successful matches of event type patterns; the domain of $\subs$ and the empty substitution are denoted by $\dom(\subs)$ and
$\emptysub$, respectively, while $\restrict{\subs}{\dvar}$ and $\remove{\subs}{\dvar}$ denote
the substitutions obtained from $\subs$ by restricting its domain to $\{\dvar\}$ and removing $\dvar$ from its domain, respectively.
We simply write $\te_1\extrans{\ev}\te_2$ to mean $\te_1\extrans{\ev}\te_2;\emptysub$.
Application of a substitution $\subs$ to an event type patter $\eventTy$ is denoted by $\subs\eventTy$, and defined by induction on $\eventTy$:
$$
\begin{array}{l}
\subs\dvar=\subs(\dvar)\mbox{ if $\dvar\in\dom(\subs)$, } \subs\dvar=\dvar \mbox{ otherwise} \qquad \subs\pl=\pl\\
\subs\{ \pk_1{:}\bv_1,\ldots,\pk_n{:}\bv_n \}=\{ \pk_1{:}\subs\bv_1,\ldots,\pk_n{:}\subs\bv_n \} \qquad
\subs[\bv_1,\ldots,\bv_n]=[\subs\bv_1,\ldots,\subs\bv_n] \\
\subs\evTyName(\bv_1,\ldots,\bv_n)=\evTyName(\subs\bv_1,\ldots,\subs\bv_n)
\end{array}
$$

Application of a substitution $\subs$ to a trace expression $\te$ is denoted by $\subs\te$, and defined by coinduction 
on $\te$:
$$
\begin{array}{l}
  \subs\emptyTrace=\emptyTrace \qquad \subs\eventTy=\subs\evTyName(\bv_1,\ldots,\bv_n) \mbox{ if $\eventTy=\evTyName(\bv_1,\ldots,\bv_n)$}  \\
  \subs(\te_1\op\te_2)=\subs\te_1\op\subs\te_2 \mbox{ for $\op\in\{\catop,\andop,\orop,\shuffleop\}$} \qquad \subs\block{\dvar}{\te}=\block{\dvar}{\remove{\subs}{\dvar}\te}
\end{array}
$$
Since the calculus does not cover event type definitions,
the semantics of event types is parametric in the auxiliary partial function $\mtch$, used in the side condition of rules
(prefix) and (n-prefix): $\mtch(\ev,\eventTy)$ returns the substitution $\subs$ iff event $\ev$ matches event type $\subs\eventTy$
and fails (that is, is undefined) iff there is no substitution $\subs$ for which $\ev$ matches $\subs\eventTy$. The substitution is expected
to be the most general one and, hence, its domain to be included in the set of free variables in $\eventTy$ (see Def.~\ref{def:fv}).

As an example of how $\mtch$ could be derived from the definitions of event types in \rml, if
we consider again the event type \lstinline{open($\mathit{fd}$)} and $\ev$=\lstinline!{event:'func_post', name:'fs.open', res:42}!, then
\lstinline{$\mtch(\ev,$open(fd)$)=\{$fd$\mapsto 42\}$}, while \lstinline{$\mtch(\ev,$open(23)$)$} is undefined.

Except for intersection, which is intrinsically deterministic since both operands need to be reduced, the rules defining the semantics of the other binary
operators depend on the judgment  $\te\notrans{\ev}$ to force determinism; in particular, the judgment is used to ensure a left-to-right evaluation
strategy: reduction of the right operand is possible only if the left hand side cannot be reduced. 

The side condition of rule (and) uses the partial binary operator $\subsMerge$ to merge substitutions:
$\subs_1\subsMerge\subs_2$ returns the union of 
$\subs_1$ and $\subs_2$, if they coincide on the intersection of their domains, and is undefined otherwise. 

Rule (cat-r) uses the judgment $\isEmpty(\te_1)$ in its side condition: event $\ev$ consumed by $\te_2$ 
can also be consumed by $\te_1\catop\te_2$ only if $\ev$ is not consumed by $\te_1$ (premise $\te_1\notrans{\ev}$
forcing left-to-right deterministic reduction), and the empty trace is accepted by $\te_1$ (side condition $\der\isEmpty(\te_1)$).

Rule (par-t) can be applied when variable $\dvar$ is in the domain of the 
substitution $\subs$ generated by the reduction step from $\te$ to $\te'$:
the substitution $\restrict{\subs}{\dvar}$ restricted to $\dvar$  is applied to $\te'$, and $\dvar$ is removed from the domain of $\subs$, together with its corresponding declaration.
If $\dvar$ is not in the domain of $\subs$ (rule (par-f)), no substitution and no declaration removal is performed.


The rules defining $\te\notrans{\ev}$ are complementary to those for $\te\extrans{\ev}\te'$, and the definition of $\te\notrans{\ev}$
depends on the judgment $\te\extrans{\ev}\te'$ because of rule (n-and):
there are no reduction steps for event $\ev$ starting from expression $\te_1\andop\te_2$, even when
$\te_1\extrans{\ev}\te'_1;\subs_1$ and $\te_2\extrans{\ev}\te'_2;\subs_2$ are derivable, if the two
generated substitutions $\subs_1$ and $\subs_2$ cannot be successfully merged together; this happens when
there are two event type patterns that match event $\ev$ for two incompatible values of the same variable. 

\added{Let us consider an example of a cyclic term with the let-construct: $\te=\block{\fd}{open(\fd) \catop close(\fd) \catop \te}$. The
  trace expression declares a local variable $\fd$ (the file descriptor), and requires that two immediately subsequent open and close events share the same
  file descriptor. Since the recursive occurrence of $\te$ contains a nested let-construct, the subsequent open and
  close events can involve a different file descriptor, and this can happen an infinite number of times. In terms of derivation,
  starting from $\te$, if the event \lstinline{\{event:'func_post', name:'fs.open', res:42\}} is observed, which matches \lstinline{open(42)}, then the substitution $\{\fd \mapsto 42\}$ is computed. As a consequence, the residual term $close(42) \catop \te$ is obtained, by substituting
  $\fd$ with $42$ and removing the let-block. After that, the only valid event which can be observed is \lstinline{\{event:'func_pre',name:'close',args:[42]\}}, matching $close(\fd)$. Thus, after this rewriting step we get $\te$ again; the behavior continues as before, but
  a different file descriptor can be matched because of the let-block which hides the outermost declaration of $\fd$; indeed,
  the substitution is not propagated inside the nested let-block. Differently from $\te$, the term $\block{\fd}{\te'}$ with
  $\te'=open(\fd) \catop close(\fd) \catop \te'$ would require all open and close events to match a unique global file descriptor. As
  further explained in Section~\ref{sect:related_work}, such example shows how the let-construct is a solution more flexible than the mechanism of trace slicing
used in other RV tools to achieve parametricty.}

The following lemma can be proved by induction on the rules defining $\te\extrans{\ev}\te';\subs$.
\begin{lemma}\label{lemma:vars}
If $\te\extrans{\ev}\te';\subs$ is derivable, then $\dom(\subs)\cup\fv(\te')\subseteq\fv(\te)$.
\end{lemma}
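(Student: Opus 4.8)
The plan is to proceed by rule induction on the derivation of $\te\extrans{\ev}\te';\subs$, which is a finite object even when $\te$ is a cyclic term, since the transition judgments are defined inductively by the rules of Figure~\ref{fig:semantics}. Before starting the induction I would record the inclusions on free variables that the cases will rely on, deriving them from the characterisation $\fv(\te)=\bigcap\{S\mid\fvp(\te,S)\}$ together with the Fact guaranteeing that $\fvp(\te,\fv(\te))$ holds. Inverting the unique applicable clause of $\fvp$ on $\fvp(\te_1\op\te_2,\fv(\te_1\op\te_2))$ (inversion being valid for the coinductive predicate, as every element of a greatest fixed point is supported by a rule with premises in the fixed point) yields sets $S_1,S_2$ with $\fvp(\te_i,S_i)$ and $\fv(\te_1\op\te_2)=S_1\cup S_2$; since $\fv(\te_i)$ is the least such set, $\fv(\te_i)\subseteq S_i\subseteq\fv(\te_1\op\te_2)$ for every $\op\in\{\catop,\andop,\orop,\shuffleop\}$. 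The same inversion on the let clause gives $\fv(\te)\setminus\{\dvar\}\subseteq\fv(\block{\dvar}{\te})$, and trivially $\fv(\emptyTrace)=\emptyset$ and $\fv(\eventTy)=\pfv(\eventTy)$. These are exactly the component-into-whole inclusions needed to lift the induction hypothesis.

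The second ingredient I would isolate is an auxiliary fact about substitution application: for any substitution $\rho$, whose range consists of ground data values, and any trace expression $\te'$, one has $\fv(\rho\,\te')\subseteq\fv(\te')\setminus\dom(\rho)$. This holds because $\rho$ replaces every free occurrence of a variable in $\dom(\rho)$ by a value containing no variables, while the coinductive clause $\subs\block{\dvar}{\te}=\block{\dvar}{\remove{\subs}{\dvar}\te}$ makes the operation capture-avoiding, so no new free variable is ever created; formally this is again established through $\fvp$. For the lemma I only need the instance $\rho=\restrict{\subs}{\dvar}$, namely $\fv(\restrict{\subs}{\dvar}\te')\subseteq\fv(\te')\setminus\{\dvar\}$.

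With these facts in hand the bulk of the induction is routine. The base case (single) uses $\te'=\emptyTrace$ and the assumption that $\mtch$ returns the most general substitution, whence $\dom(\subs)\subseteq\pfv(\eventTy)=\fv(\eventTy)$. For the congruence rules (or-l), (or-r), (cat-l), (cat-r), (shuffle-l), (shuffle-r) the induction hypothesis bounds $\dom(\subs)\cup\fv(\te'_i)$ by $\fv(\te_i)$, and the free variables of the residual are contained in $\fv(\te'_i)\cup\fv(\te_j)$ (or just $\fv(\te'_i)$ for the collapsing rules (cat-r), (or-r)), so the component-into-whole inclusions close the case. For (and) I would additionally exploit that $\subs=\subs_1\subsMerge\subs_2$ is defined only when it equals $\subs_1\cup\subs_2$, so $\dom(\subs)=\dom(\subs_1)\cup\dom(\subs_2)$, together with $\fv(\te'_1\andop\te'_2)=\fv(\te'_1)\cup\fv(\te'_2)$.

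The main obstacle is the (par-t) case, the only one that combines both auxiliary ingredients. Here $\te=\block{\dvar}{\te_0}$ reduces, from $\te_0\extrans{\ev}\te';\subs$ with $\dvar\in\dom(\subs)$, to the residual $\restrict{\subs}{\dvar}\te'$ carrying substitution $\remove{\subs}{\dvar}$. Using $\dom(\remove{\subs}{\dvar})=\dom(\subs)\setminus\{\dvar\}$, the substitution fact $\fv(\restrict{\subs}{\dvar}\te')\subseteq\fv(\te')\setminus\{\dvar\}$, and the induction hypothesis $\dom(\subs)\cup\fv(\te')\subseteq\fv(\te_0)$, I would derive
\[
\dom(\remove{\subs}{\dvar})\cup\fv(\restrict{\subs}{\dvar}\te')\subseteq(\dom(\subs)\cup\fv(\te'))\setminus\{\dvar\}\subseteq\fv(\te_0)\setminus\{\dvar\}\subseteq\fv(\block{\dvar}{\te_0}),
\]
the last step being the let inclusion recorded in the first paragraph. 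Case (par-f) is the simpler variant: since $\dvar\notin\dom(\subs)$ the substitution domain is unchanged and the same chain goes through without the substitution fact. The only delicate point throughout is to keep all reasoning about $\fv$ at the level of the inclusions extracted from $\fvp$, rather than treating $\fv$ as obeying naive equalities, precisely because $\fv$ is defined as the least solution of a relation over cyclic terms.
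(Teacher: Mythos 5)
Your proof is correct and follows exactly the route the paper indicates, namely rule induction on the (finite, inductively defined) derivation of $\te\extrans{\ev}\te';\subs$; the paper states no further detail, and your elaboration of the needed $\fv$-inclusions via inversion and minimality of the coinductive $\fvp$ characterisation, together with the substitution fact $\fv(\subs\te)\subseteq\fv(\te)\setminus\dom(\subs)$ for the (par-t) case, is the natural way to fill in those details. The case analysis, including the delicate (par-t)/(par-f) cases, is handled correctly.
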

Since trace expressions are cyclic, they can only contain a finite set of free variables,
therefore the domains of all substitutions generated by a possibly infinite sequence of consecutive reduction steps
starting from $\te$ are all contained in $\fv(\te)$.

\subsection{Semantics based on the transition system}

The reduction rules defined above provide the basis for the semantics of the calculus; because of computed substitutions and free variables, 
the semantics of a trace expression is not just a set of event traces: every accepted trace must be equipped with a substitution specifying
how variables have been instantiated during the reduction steps. We call it an \emph{instantiated event trace};
this can be obtained from the pairs of event and substitution traces yield by the possibly infinite reduction steps, by considering the
disjoint union of all returned substitutions. \added{Such a notion is needed\footnote{See the example in Section~\ref{sec:comp}.} to allow a compositional semantics.} The notion of substitution trace can be given in an analogous way as done for event traces in
Section~\ref{sec:back}.
By the considerations related to Lemma~\ref{lemma:vars}, the substitution associated with an instantiated event trace has always a finite domain,
even when the trace is infinite; this means that the
substitution is always fully defined after a finite number of reduction steps.

\begin{definition}
A  \emph{concrete instantiated event trace} over the event universe $\eventSet$ is a pair $(\evtr,\overline{\subs})$ of event traces over $\eventSet$,
and substitution traces s.t. either  $\evtr$ and $\overline{\subs}$ are both infinite, or they are both finite and have the same length, all
the substitutions in  $\overline{\subs}$ have mutually disjoint domains and $\bigcup\{\dom(\subs') \mid \subs'\in\substr\}$ is finite.

An \emph{abstract instantiated event trace} (instantiated event trace, for short) over $\eventSet$ is a pair $(\evtr,\subs)$
where $\evtr$ is an event trace over $\eventSet$ and $\subs$ is a substitution.
We say that $(\evtr,\subs)$ is derived from the concrete instantiated event trace $(\evtr,\substr)$, written $(\evtr,\substr)\leadsto(\evtr,\subs)$, iff
$\subs=\bigcup\{\subs' \mid \subs'\in\substr\}$.
\end{definition}
In the rest of the paper we use the meta-variable $\its$ to denote sets of instantiated event traces.
We use the notations $\tproj{\its}$ and $\sproj{\its}$ to denote the two projections
$\{\evtr \mid (\evtr,\subs)\in\its \}$ and $\{\subs \mid (\evtr,\subs)\in\its \}$, respectively;
we write $\evtr\pref\its$ to mean $\evtr\pref\tproj{\its}$.
The notation $\inft{\its}$ denotes the set $\{(\evtr,\subs) \mid (\evtr,\subs) \in \its, \evtr\ \mathit{infinite} \}$ restricted to infinite traces.


We can now define the semantics of trace expressions.
\begin{definition}\label{def:sem}
The concrete semantics $\csem{\te}$ of a trace expression $\te$ is the set of concrete instantiated event traces coinductively defined as follows:
\begin{itemize}
\item $(\emptyevtr,\emptysubtr)\in\csem{\te}$ iff $\der\isEmpty(\te)$ is derivable;
\item $(\ev\catop\evtr,\subs\catop\substr)\in\csem{\te}$ iff $\te\extrans{\ev}\te';\subs$ is derivable and $(\evtr,\substr)\in\csem{\subs\te'}$.
\end{itemize}
The (abstract) semantics $\sem{\te}$ of a trace expression $\te$ is the set of instantiated event traces
$\{(\evtr,\subs) \mid  (\evtr,\substr)\in\csem{\te}, (\evtr,\substr)\leadsto(\evtr,\subs) \}$.

\end{definition}

The following propositions show that the concrete semantics of a trace expression $\te$ as given in Definition~\ref{def:sem} is always well-defined.

\begin{proposition}\label{prop:sem-one}
If $(\evtr,\substr)\in\csem{\te}$ and $\evtr$ is finite, then $\lgth{\evtr}=\lgth{\substr}$.
\end{proposition}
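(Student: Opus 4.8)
The plan is to prove the statement by induction on the length $n=\lgth{\evtr}$ of the finite event trace, generalising over $\te$ so that the induction hypothesis can later be applied to the residual expression $\subs\te'$. Although $\csem{\te}$ is defined coinductively, the two defining clauses are mutually exclusive on the first component: the empty clause applies exactly when $\evtr=\emptyevtr$, whereas the second clause requires $\evtr$ to decompose as $\ev\catop\evtr'$ and hence to have length at least $1$. Because each application of the second clause strictly decreases the length of the event trace, a \emph{finite} $\evtr$ can only be supported by a finite unfolding of the definition, which necessarily bottoms out in the empty clause; this is precisely what licenses an ordinary induction on $\lgth{\evtr}$ despite the greatest-fixpoint definition.

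For the base case $n=0$ we have $\evtr=\emptyevtr$. Since the second clause can only produce a non-empty event trace, membership $(\emptyevtr,\substr)\in\csem{\te}$ must be witnessed by the first clause, which forces $\substr=\emptysubtr$; hence $\substr$ is finite and $\lgth{\substr}=0=\lgth{\evtr}$. For the inductive step $n\geq 1$, $\evtr$ is non-empty, so only the second clause can witness membership: there are $\ev,\evtr',\subs,\te'$ with $\evtr=\ev\catop\evtr'$, $\substr=\subs\catop\substr'$, $\te\extrans{\ev}\te';\subs$ derivable, and $(\evtr',\substr')\in\csem{\subs\te'}$. From the definitions of concatenation and length in Section~\ref{sec:back} we get $\lgth{\evtr'}=n-1$, so $\evtr'$ is finite and the induction hypothesis instantiated at $\subs\te'$ yields $\lgth{\substr'}=n-1$; in particular $\substr'$ is finite. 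Since concatenation of substitution traces is defined analogously to that of event traces, $\substr=\subs\catop\substr'$ is finite with $\lgth{\substr}=1+\lgth{\substr'}=n=\lgth{\evtr}$, closing the induction.

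I expect the only genuinely delicate point to be the justification that length-based induction is legitimate under the coinductive definition; everything else reduces to the elementary arithmetic of lengths under concatenation already established in Section~\ref{sec:back}. The argument to stress is that the two clauses are mutually exclusive on the emptiness of $\evtr$, so a finite first component cannot be sustained by an infinite chain of second-clause unfoldings and its witness terminates after exactly $n$ steps in the empty clause. Once this observation is in place, the coinductive definition behaves inductively on the finite fragment, and the finiteness of $\substr$ emerges as a byproduct of the same induction rather than needing to be assumed.
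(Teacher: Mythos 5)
Your proof is correct and takes essentially the same route one expects here: ordinary induction on $\lgth{\evtr}$, generalised over $\te$, using the fact that the two defining clauses of $\csem{\te}$ are discriminated by whether the first component is empty, so that only the ``destruction'' direction of the coinductive definition is needed and each unfolding strictly decreases the length. The base and inductive cases are handled correctly, and the finiteness of $\substr$ indeed falls out of the same induction.
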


\begin{proposition}\label{prop:sem-two}
If $(\evtr,\substr)\in\csem{\te}$ and $\evtr$ is infinite, then $\substr$ is infinite as well.
\end{proposition}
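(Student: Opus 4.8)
The plan is to reduce the claim ``$\substr$ is infinite'' to the statement that every natural number lies in $\dom(\substr)$, and to establish the latter by ordinary induction on $n$. Concretely, I would prove the auxiliary claim: \emph{for every trace expression $\te$ and every $(\evtr,\substr)\in\csem{\te}$ with $\evtr$ infinite, we have $n\in\dom(\substr)$}, by induction on $n\in\nat$, with $\te$ and the pair $(\evtr,\substr)$ universally quantified so that the induction hypothesis can be reapplied to residual expressions. Since the domain of a substitution trace is downward closed, $\dom(\substr)=\nat$ is exactly infiniteness of $\substr$, which then yields the proposition.

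The single engine of the argument is the unfolding of the coinductive Definition~\ref{def:sem}. Whenever $\evtr$ is infinite it is in particular non-empty, so the pair $(\evtr,\substr)$ cannot satisfy the first clause (which forces $\evtr=\emptyevtr$); hence it must satisfy the second, yielding a decomposition $\evtr=\ev\catop\evtr'$, $\substr=\subs\catop\substr'$ with $\te\extrans{\ev}\te';\subs$ derivable and $(\evtr',\substr')\in\csem{\subs\te'}$. Two bookkeeping facts accompany this step: first, because $\ev$ is finite, the characterisation of infiniteness of a concatenation recalled in the Concatenation paragraph gives that $\evtr'$ is again infinite; second, since $\subs$ is a length-$1$ substitution trace, $\dom(\subs\catop\substr')=\{0\}\cup\{m+1\mid m\in\dom(\substr')\}$.

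With these in hand the induction is immediate. For $n=0$, the decomposition above already gives $\substr=\subs\catop\substr'$, so $0\in\dom(\substr)$. For the step, I apply the induction hypothesis to the residual pair $(\evtr',\substr')\in\csem{\subs\te'}$, whose first component is infinite, to obtain $n\in\dom(\substr')$; the domain identity for $\subs\catop\substr'$ then yields $n+1\in\dom(\substr)$. Hence $\dom(\substr)=\nat$ and $\substr$ is infinite.

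The only delicate point is methodological rather than computational: the semantics is a greatest fixed point, so one cannot argue by structural induction on a derivation of membership. The device that makes the proof go through is to keep the claim indexed by $n$ and to consume exactly one layer of the coinductive definition per induction step, using the ``iff'' of the second clause only in its forward direction (from membership to the decomposition and the residual membership). I expect no genuine obstacle beyond ensuring the quantification over $\te$ in the auxiliary claim is in place, since the residual expression $\subs\te'$ differs from $\te$ and the induction hypothesis must apply to it. This argument is the exact dual of the one for Proposition~\ref{prop:sem-one}, where finiteness of $\evtr$ instead licenses an induction on $\lgth{\evtr}$.
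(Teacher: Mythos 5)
Your proof is correct: reducing infiniteness of $\substr$ to $\dom(\substr)=\nat$ and establishing $n\in\dom(\substr)$ by induction on $n$, peeling exactly one layer of the coinductive definition per step (with $\te$ and the pair universally quantified so the hypothesis applies to the residual $\subs\te'$), is the natural argument and matches the structure the paper uses for these well-definedness propositions, whose details it defers to the extended version. The only point worth making explicit is the one you already flag: a pair with a non-empty first component can only belong to the greatest fixed point via the second clause, which is what licenses the decomposition $\substr=\subs\catop\substr'$ at every step.
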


\begin{proposition}\label{prop:sem-three}
If $(\evtr,\substr)\in\csem{\te}$, then for all $n,m\in\nat$, $n\neq m$ implies $\dom(\substr(n))\cap\dom(\substr(m))=\emptyset$.
\end{proposition}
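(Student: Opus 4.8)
The plan is to read off, from the coinductive membership $(\evtr,\substr)\in\csem{\te}$, the underlying chain of residual trace expressions and to control how their sets of free variables shrink at each reduction step. Repeatedly unfolding the second clause of Definition~\ref{def:sem} yields a family of trace expressions $\te_0=\te,\te_1,\te_2,\dots$ (one for each index in $\dom(\substr)$, together with the starting one), events $\ev_k$, and residuals $\te_k'$ such that, for every $k\in\dom(\substr)$, the transition $\te_k\extrans{\ev_k}\te_k';\substr(k)$ is derivable and $\te_{k+1}=\substr(k)\,\te_k'$. The whole proposition then reduces to understanding $\fv(\te_{k+1})$ in terms of $\fv(\te_k)$ and $\dom(\substr(k))$.

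The core of the argument is the following shrinking property: $\fv(\te_{k+1})\subseteq\fv(\te_k)\setminus\dom(\substr(k))$ for every $k\in\dom(\substr)$. To obtain it I would combine two ingredients. The first is Lemma~\ref{lemma:vars} applied to $\te_k\extrans{\ev_k}\te_k';\substr(k)$, which gives $\dom(\substr(k))\cup\fv(\te_k')\subseteq\fv(\te_k)$; in particular $\dom(\substr(k))\subseteq\fv(\te_k)$, a bound I reuse at the end. The second, and the real technical kernel, is an auxiliary fact about substitution application on trace expressions, namely $\fv(\subs\,\te')\subseteq\fv(\te')\setminus\dom(\subs)$ for every substitution $\subs$ and trace expression $\te'$. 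Granting this fact, since $\te_{k+1}=\substr(k)\,\te_k'$ we get $\fv(\te_{k+1})\subseteq\fv(\te_k')\setminus\dom(\substr(k))\subseteq\fv(\te_k)\setminus\dom(\substr(k))$, as required.

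With the shrinking property in hand the conclusion follows by elementary monotonicity. Given $n,m\in\dom(\substr)$ with $n<m$ (the statement being symmetric in $n,m$, this is enough), iterating the inclusion $\fv(\te_{k+1})\subseteq\fv(\te_k)$ for $k=n+1,\dots,m-1$ and applying the full shrinking property at $k=n$ yields the chain $\fv(\te_m)\subseteq\fv(\te_{n+1})\subseteq\fv(\te_n)\setminus\dom(\substr(n))$, so $\fv(\te_m)$ is disjoint from $\dom(\substr(n))$. Since Lemma~\ref{lemma:vars} also gives $\dom(\substr(m))\subseteq\fv(\te_m)$, we conclude $\dom(\substr(m))\cap\dom(\substr(n))=\emptyset$, which proves the proposition.

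The hard part is the auxiliary fact $\fv(\subs\,\te')\subseteq\fv(\te')\setminus\dom(\subs)$, precisely because $\fv$ is defined as the least solution $\bigcap\{S\mid\fvp(\te',S)\}$ of the non-functional, coinductively specified relation $\fvp$, and because substitution application on trace expressions is itself coinductive and must respect the binder of the let-construct. I would establish the strengthened statement ``if $\fvp(\te',S)$ holds then $\fvp(\subs\,\te',S\setminus\dom(\subs))$ holds'' by coinduction on the rules defining $\fvp$. The delicate case is $\block{\dvar}{\cdot}$, where $\subs\block{\dvar}{\te''}=\block{\dvar}{\remove{\subs}{\dvar}\te''}$ forces one to track the set-theoretic identity $(S\setminus(\dom(\subs)\setminus\{\dvar\}))\setminus\{\dvar\}=(S\setminus\{\dvar\})\setminus\dom(\subs)$, while the single-event case rests on the routine inductive identity $\pfv(\subs\,\eventTy)=\pfv(\eventTy)\setminus\dom(\subs)$ for event type patterns. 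Instantiating the strengthened statement at $S=\fv(\te')$, which satisfies $\fvp(\te',\fv(\te'))$ by the Fact preceding Definition~\ref{def:sem}, then gives exactly $\fv(\subs\,\te')\subseteq\fv(\te')\setminus\dom(\subs)$.
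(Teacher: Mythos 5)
Your proof is correct, and it follows exactly the route the paper signals for this result: the remark immediately after Lemma~\ref{lemma:vars} (the domains of all substitutions generated along a reduction sequence stay within $\fv(\te)$, and applying $\substr(k)$ to the residual removes $\dom(\substr(k))$ from its free variables) is precisely your shrinking property, with the only genuinely technical ingredient being the coinductive fact $\fv(\subs\,\te')\subseteq\fv(\te')\setminus\dom(\subs)$, which you handle correctly, including the let-binder case. The paper itself defers the detailed proof to the extended version, but your argument is a faithful reconstruction of the intended one.
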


\begin{proposition}\label{prop:sem-four}
If $(\evtr,\substr)\in\csem{\te}$, then for all $n\in\nat$ $\dom(\substr(n))\subseteq\fv(\te)$.
\end{proposition}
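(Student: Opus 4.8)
The plan is to prove the statement by induction on the position \(n\), unfolding the coinductive definition of \(\csem{\te}\) once per step, with the statement quantified uniformly over all trace expressions so that the induction hypothesis can be reused at a different expression. Fix \((\evtr,\substr)\in\csem{\te}\); if \(n\notin\dom(\substr)\) there is nothing to prove, so assume \(n\in\dom(\substr)\). Since the domains of substitution traces are prefix-closed, \(\substr\) is then non-empty, so by Definition~\ref{def:sem} we must be in the second case: \(\evtr=\ev\catop\evtr'\) and \(\substr=\subs\catop\substr'\) with \(\te\extrans{\ev}\te';\subs\) derivable and \((\evtr',\substr')\in\csem{\subs\te'}\). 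In particular \(\substr(0)=\subs\) and \(\substr(n+1)=\substr'(n)\).

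For the base case \(n=0\) we have \(\substr(0)=\subs\), and Lemma~\ref{lemma:vars} applied to the derivable step \(\te\extrans{\ev}\te';\subs\) gives \(\dom(\subs)\subseteq\fv(\te)\) directly. For the inductive step I would apply the induction hypothesis to the continuation \((\evtr',\substr')\in\csem{\subs\te'}\), obtaining \(\dom(\substr(n+1))=\dom(\substr'(n))\subseteq\fv(\subs\te')\). It then remains only to show \(\fv(\subs\te')\subseteq\fv(\te)\), which I would establish through the chain \(\fv(\subs\te')\subseteq\fv(\te')\subseteq\fv(\te)\): the second inclusion is the \(\fv(\te')\subseteq\fv(\te)\) part of Lemma~\ref{lemma:vars}, while the first records that applying a substitution never introduces new free variables, since substitutions map variables to closed data values. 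Note that the induction is genuinely on the natural number \(n\): each unfolding of \(\csem{\cdot}\) strips exactly one position, so position \(n+1\) of \(\substr\) is reached after finitely many steps even when \(\evtr\) is infinite.

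The main obstacle I expect is precisely the auxiliary monotonicity fact \(\fv(\subs\te')\subseteq\fv(\te')\). Because \(\fv\) is not defined by structural recursion but as the least solution \(\bigcap\{S\mid\fvp(\te',S)\text{ holds}\}\) of the coinductive predicate \(\fvp\), and because substitution application \(\subs\te'\) is itself defined coinductively on cyclic terms, this inclusion cannot be read off by a one-line structural argument. I would obtain it by proving that whenever \(\fvp(\te',S)\) holds there is \(S'\subseteq S\) with \(\fvp(\subs\te',S')\) holds, reasoning by coinduction on the definition of \(\fvp\) together with a case analysis on the shape of \(\te'\). The only delicate cases are the event type pattern, where \(\pfv(\subs\eventTy)\subseteq\pfv(\eventTy)\) because applying \(\subs\) removes the variables in \(\dom(\subs)\) and adds none, and the let-construct, where \(\subs\block{\dvar}{\te''}=\block{\dvar}{\remove{\subs}{\dvar}\te''}\) forces one to track the interaction between the bound variable \(\dvar\) and the restricted substitution \(\remove{\subs}{\dvar}\). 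Intersecting over all witnessing sets then yields \(\fv(\subs\te')\subseteq\fv(\te')\), completing the argument.
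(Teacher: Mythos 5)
Your proof is correct and follows essentially the same route the paper intends: induction on the position $n$, peeling off one reduction step per unfolding of $\csem{\cdot}$ and iterating Lemma~\ref{lemma:vars}, exactly as the paper's remark after that lemma suggests. The one place you go beyond the paper's sketch is in isolating and proving the auxiliary inclusion $\fv(\subs\te')\subseteq\fv(\te')$, which is genuinely needed because the continuation lives in $\csem{\subs\te'}$ rather than $\csem{\te'}$; your coinductive argument on $\fvp$ (producing a witness $S'\subseteq S$ rather than $S$ itself, since $\pfv(\subs\eventTy)$ may shrink) is the right way to handle it given that $\fv$ is defined as an intersection over $\fvp$-witnesses.
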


\section{Towards a compositional semantics}\label{sec:comp}

In this section we show how each basic trace expression operator can be interpreted as
an operation over sets of instantiated event traces and we formally prove that such an interpretation is
equivalent to the semantics derived from the transition system of the calculus in Definition~\ref{def:sem}, if one considers only
contractive terms.

\subsection{Composition operators}\label{sec:op-sem}

\paragraph{Left-preferential union:}
The left-preferential union $\its_1\corop\its_2$ of sets of instantiated event traces $\its_1$ and $\its_2$ is defined as follows:
$
\its_1\corop\its_2 = \its_1 \bigcup \{(\evtr,\subs) \in \its_2 \mid \evtr=\emptyevtr \mbox{ or } (\evtr=\ev\catop\evtr',\ev \nopref \its_1) \}.
$

In the deterministic left-preferential version of union, instantiated event traces in $\its_2$ are kept only if they start
with an event which is not the first element of \added{any of the traces} in $\its_1$ (the condition vacuously holds for the empty trace); since reduction steps can involve only one of the two operands at time, no restriction on the substitutions of the instantiated event traces is required.

\paragraph{Left-preferential concatenation:}
The left-preferential concatenation $\its_1\ccatop\its_2$ of sets of instantiated event traces $\its_1$ and $\its_2$ is defined as follows:
$
\its_1\ccatop\its_2 = 
\inft{\its_1} \cup 
\{ (\evtr_1\catop\evtr_2,\subs) \mid (\evtr_1,\subs_1) \in \its_1, (\evtr_2,\subs_2) \in \its_2, \subs=\subs_1\subsMerge\subs_2, (\evtr_2=\emptyevtr \mbox{ or } (\evtr_2=\ev\catop \evtr_3, (\evtr_1\catop\ev) \nopref \its_1)) \}.
$

The left operand $\inft{\its_1}$ of the union corresponds to the fact that 
in the deterministic left-preferential version of concatenation, all infinite instantiated event traces
in $\its_1$ belong to the semantics of concatenation. The right operand of the union specifies the behavior
for all finite instantiated event traces $\evtr_1$ in $\its_1$; in such cases, the trace in $\its_1\ccatop\its_2$ can continue with 
$\evtr_2$ in $\its_2$ if $\evtr_1$ is not allowed to continue in $\its_1$ with
the first event $\ev$ of $\evtr_2$ ($(\evtr_1\catop\ev) \nopref \its_1$, the condition vacuously holds if $\evtr_2$ is the empty trace).
Since the reduction steps corresponding to $\evtr_2$ follow those for $\evtr_1$, the overall substitution $\subs$ must
meet the constraint $\subs=\subs_1\subsMerge\subs_2$ ensuring that $\subs_1$ and $\subs_2$ match on the shared variables 
of the two operands.
\paragraph{Intersection:}
The intersection $\its_1\candop\its_2$ of sets of instantiated event traces $\its_1$ and $\its_2$ is defined as follows:
$
\its_1\candop\its_2 = 
\{(\evtr,\subs) \mid (\evtr,\subs_1)\in\its_1, (\evtr,\subs_2)\in\its_2,  \subs=\subs_1\subsMerge\subs_2\}.
$

Since intersection is intrinsically deterministic, its semantics throws no surprise. 

\paragraph{Left-preferential shuffle:}
The left-preferential shuffle $\its_1\cshuffleop\its_2$ of sets of instantiated event traces $\its_1$ and $\its_2$
is defined as follows:
$
\its_1\cshuffleop\its_2 = \{(\evtr,\subs) \mid (\evtr_1,\subs_1) \in \its_1, (\evtr_2,\subs_2) \in \its_2, \subs=\subs_1\subsMerge\subs_2,
\evtr\in\evtr_1\glpshuffleop{\tproj{\its_1}}\evtr_2 \}.
$

The definition is based on the generalized left-preferential shuffle defined in Section~\ref{sec:back};
an event in $\evtr_2$ at a certain position $n$ can contribute to the shuffle only if no trace in
$\tproj{\its_1}$ could contribute with the same event at the same position $n$. 
Since the reduction steps corresponding to $\evtr_1$ and $\evtr_2$ are interleaved, the overall substitution $\subs$ must
meet the constraint $\subs=\subs_1\subsMerge\subs_2$ ensuring that $\subs_1$ and $\subs_2$ match on the shared variables 
of the two operands.

\paragraph{Variable deletion:}

The deletion $\remove{\its}{\dvar}$ of $\dvar$ from the set of instantiated event traces $\its$
is defined as follows:
$
\remove{\its}{\dvar} = \{ (\evtr,\remove{\subs}{\dvar}) \mid (\evtr,\subs) \in \its \}.
$

As expected, variable deletion only affects the domain of the computed substitution. 

\added{
  The definitions above show that instantiated event traces are needed to allow a compositional semantics; let us consider
  the following simplified variation of the example given in Section~\ref{sec:calculus}:
  $\te'=\block{\fd}{open(\fd)\catop close(\fd)}$. If we did not keep track of substitutions, then the compositional semantics of
  $open(\fd)$ and $close(\fd)$ would contain all traces of length 1 matching $open(\fd)$ and $close(\fd)$, respectively, for any value
  $\fd$, and, hence,  the semantics of $open(\fd)\catop close(\fd)$ could not constrain $open$ and $close$ events to be on the
  same file descriptor. Indeed, such a constraint is obtained by checking that the substitution of the event trace matching
  $open(\fd)$ can be successfully merged with the substitution of the event trace matching $close(\fd)$, so that the two substitutions agree
on $\fd$.}

\subsection{Contractivity}
Contractivity is a condition on trace expressions which is statically enforced by the \rml compiler;
such a requirement avoids infinite loops when an event does not match
the specification and the generated monitor would try to build an infinite derivation.
Although the generated monitors could dynamically check potential loops dynamically, a syntactic condition
enforced statically by the compiler allows monitors to be relieved of such a check, and, thus, to be more efficient. 

Contractivity can be seen as a generalization of absence of left recursion in grammars \cite{left-rec}; loops
in cyclic terms are allowed only if they are all guarded by a concatenation where the left operand $\te$
cannot contain the empty trace (that is, $\notder\isEmpty(\te)$ holds), and the loop continues in
the right operand of the concatenation. If such a condition holds, then it is not possible to build
infinite derivations for  $\te_1\extrans{\ev}\te_2$.

Interestingly enough, such a condition is also needed to prove that the interpretation of operators
as given in Section~\ref{sec:op-sem} is equivalent to the semantics given in Definition~\ref{def:sem}.
Indeed, the equivalence result proved in Theorem~\ref{theo} is based on Lemma~\ref{lemma1}
stating that for all contractive term $\te_1$ and event $\ev$, there exist $\te_2$ and $\subs$ s.t.  $\te_1\extrans{\ev}\te_2;\subs$ is derivable
if and only if $\te_1\notrans{\ev}$ is not derivable; such a claim does not hold for a non contractive term as $\te=\te\orop\te$, because for all
$\ev$, $\te'$ and $\subs$, $\te\extrans{\ev}\te';\subs$ and $\te\notrans{\ev}$ are not derivable. This is due to the fact that both judgments are defined by an inductive inference system.
\added{Intuitively, from a contractive term we cannot derive a new term without passing through at least one concatenation. For instance, considering the term $\te=\ev\catop\te$, we have contractivity because we have to consume $\ev$ before going inside the loop. But, if we swap the operands, we obtain instead $\te=\te\catop\ev$, where contractivity does not hold; in fact, deriving the concatenation we go first inside the head, but it is cyclic. Since the $\extrans{}$ and $\notrans{}$ judgements are defined inductively, both are not derivable because a finite derivation tree cannot be derived for neither of them.}

\begin{definition}
  Syntactic contexts $\ctx$ are inductively defined as follows:
$$
\begin{array}{rcll}
\production{\ctx}{\Box\mid\ctx\op\te\mid\te\op\ctx\mid\block{\dvar}{\ctx}}{with $\op\in\{\andop,\orop,\shuffleop,\catop\}$}
\end{array}
$$
\end{definition}

\begin{definition}
  A  syntactic context $\ctx$ is contractive if one of the following conditions hold:
  \begin{itemize}
  \item $\ctx=\block{\dvar}{\ctx'}$ and $\ctx'$ is contractive;    
  \item $\ctx=\ctx'\op\te$, $\ctx'$ is contractive and $\op\in\{\catop,\andop,\orop,\shuffleop\}$;    
  \item $\ctx=\te\op\ctx'$, $\ctx'$ is contractive and $\op\in\{\andop,\orop,\shuffleop\}$;    
  \item $\ctx=\te\catop\ctx'$, $\der\isEmpty(\te)$ and $\ctx'$ is contractive;    
  \item $\ctx=\te\catop\ctx'$ and $\notder\isEmpty(\te)$.
  \end{itemize}
\end{definition}

\begin{definition}
A term is part of $\te$ iff it belongs to the least set  $\partof(\te)$ matching the following definition:
$$
\begin{array}{l}
  \partof(\emptyTrace)=\partof(\eventTy)=\emptyset \quad \partof(\block{\dvar}{\te})=
  \{\te\} \cup \partof(\te)\\
  \partof(\te_1\op\te_2)=\{\te_1,\te_2\}\cup\partof(\te_1)\cup\partof(\te_2)  \mbox{ for $\op\in\{\shuffleop,\catop,\andop,\orop\}$}
\end{array}
$$
\end{definition}
Because trace expressions can be cyclic, the definition of $\partof$ follows the same pattern adopted for $\fv$.
One can prove that a term $\te$ is cyclic iff there exists $\te'\in\partof(\te)$ s.t. $\te'\in\partof(\te')$. 

\begin{definition}
  A term $\te$ is \emph{contractive} iff the following conditions old:
  \begin{itemize}
  \item  for any syntactic context $\ctx$, if $\te=\ctx[\te]$ then $\ctx$ is contractive;
  \item  for any term $\te'$, if $\te'\in\partof(\te)$, then $\te'$ is contractive.
  \end{itemize}
\end{definition}

\subsection{Main Theorem}

We first list all the auxiliary lemmas used to prove Theorem~\ref{theo}.

\begin{lemma}\label{lemma1}
For all contractive term $\te_1$ and event $\ev$, there exist $\te_2$ and $\subs$ s.t.  $\te_1\extrans{\ev}\te_2;\subs$ is derivable
if and only if $\te_1\notrans{\ev}$ is not derivable.
\end{lemma}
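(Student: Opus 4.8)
The statement is an "iff" between the existence of a reduction step $\te_1 \extrans{\ev} \te_2; \subs$ and the non-derivability of the blocking judgment $\te_1 \notrans{\ev}$. The natural strategy is to prove the two directions separately, exploiting the fact that both judgments are defined by \emph{inductive} inference systems (so a judgment holds exactly when it has a finite derivation tree). The plan is to establish mutual exclusivity (at most one of the two can hold) and exhaustivity (at least one holds, under contractivity) by structural induction, where the subtlety is that trace expressions are cyclic, so a plain induction on the term will not terminate; contractivity is precisely the ingredient that restores a well-founded measure.

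\textbf{Direction 1: the two judgments cannot both hold.} The plan is first to prove, by a straightforward induction on the rules defining $\te_1 \notrans{\ev}$, that if $\te_1 \notrans{\ev}$ is derivable then there is no $\te_2,\subs$ with $\te_1 \extrans{\ev} \te_2;\subs$ derivable. This direction does not need contractivity: the rules for $\notrans{}$ were deliberately designed as the complement of the rules for $\extrans{}$ (e.g.\ rule (n-or) requires both operands to block, (n-cat-l) requires $\te_1$ to block while $\notder\isEmpty(\te_1)$, (n-and) covers the incompatible-merge case, and so on), and one checks case by case that each way of deriving $\te_1 \notrans{\ev}$ rules out every rule that could produce a transition. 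The side conditions ($\der\isEmpty$ versus $\notder\isEmpty$, $\subs_1 \subsMerge \subs_2$ defined versus undefined) are exactly what make the cases disjoint.

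\textbf{Direction 2: at least one of the two holds.} This is where contractivity is essential and where I expect the main obstacle to lie. The goal is to show that for a contractive $\te_1$, if $\te_1 \notrans{\ev}$ is \emph{not} derivable then some transition $\te_1 \extrans{\ev} \te_2;\subs$ \emph{is} derivable. The plan is to argue by induction on a well-founded measure supplied by contractivity rather than on the raw (possibly infinite) term. The key observation, hinted at in the surrounding text, is that from a contractive term one cannot return to the same subterm without passing through the left operand of a concatenation whose left side $\te$ satisfies $\notder\isEmpty(\te)$ (or, when $\der\isEmpty(\te)$ holds, one has descended structurally). Concretely, I would define the measure as the finite depth at which one first reaches either a base case ($\emptyTrace$ or an event type pattern $\eventTy$) or a guarding concatenation $\te\catop\te'$ with $\notder\isEmpty(\te)$; contractivity guarantees every cyclic path is so guarded, so this measure is finite and strictly decreases along the cases of the induction. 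At the base cases, determinism of $\mtch$ settles the dichotomy for $\eventTy$ (either $\mtch(\ev,\eventTy)$ is defined, giving rule (single), or it is undefined, giving (n-single)); the inductive cases for each binary operator and for the let-construct then combine the induction hypotheses on the strictly smaller operands, again using the complementarity of the rule pairs.

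\textbf{Main obstacle.} The hard part is making the well-founded measure precise and verifying that contractivity forces strict decrease in every recursive case, in particular for (cat-r)/(n-cat-l)/(n-cat-r) where the left operand $\te_1$ may itself be cyclic: here one must use the contractivity clauses for contexts of the form $\te\catop\ctx'$ to guarantee that either the recursion is guarded ($\notder\isEmpty(\te)$, so no deeper descent into $\te_1$ is needed because (n-cat-l) applies directly once $\te_1 \notrans{\ev}$ is settled) or $\der\isEmpty(\te)$ and the descent into $\ctx'$ is structural. I would package this as a separate sub-lemma stating that the guarding measure is well-defined and finite for every contractive term, relying on the characterisation of cyclicity via $\partof$ and on the context-based definition of contractivity given just above the statement, and then feed it into the induction for Direction 2. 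The counterexample $\te = \te \orop \te$ flagged in the text is the sanity check that the measure genuinely requires contractivity: there the guarding concatenation is absent, the measure is infinite, and indeed neither judgment is derivable.
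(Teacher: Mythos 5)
The paper only sketches the intuition for this lemma and defers the actual proof to the extended version, so there is no line\mbox{-}by\mbox{-}line argument to compare against; measured against that sketch, your decomposition --- mutual exclusivity of $\extrans{\ev}$ and $\notrans{\ev}$ by rule complementarity, exhaustivity by a well-founded induction enabled by contractivity, with $\te=\te\orop\te$ as the sanity check --- is exactly the intended argument, and your identification of concatenation as the only delicate case is on target.

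Two points need tightening. First, the measure you propose (``depth at which one first reaches a base case or a guarding concatenation $\te\catop\te'$ with $\notder\isEmpty(\te)$'') does not strictly decrease: a guarding concatenation has measure $0$, yet to settle its dichotomy you must still descend into its \emph{left} operand, which may itself be a guarding concatenation of measure $0$. What you actually want is the height of the tree obtained from the (possibly infinite) syntax tree of $\te_1$ by pruning the \emph{right} operand of every concatenation whose left operand satisfies $\notder\isEmpty$; since the only terminating clause in the definition of contractive contexts is $\te\catop\ctx'$ with $\notder\isEmpty(\te)$, every cycle of a contractive term crosses a pruned edge, so this tree has no infinite branch and (being finitely branching) is finite. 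The induction is then on its height, equivalently on the well-founded relation in which $\te\catop\te'$ depends on $\te$ always and on $\te'$ only when $\der\isEmpty(\te)$, while every other constructor depends on all immediate subterms; with this relation the case analysis you outline goes through. Second, in your Direction 1 the case of rule (n-and) needs more than complementarity of side conditions: from premises $\te_1\extrans{\ev}\te'_1;\subs_1$ and $\te_2\extrans{\ev}\te'_2;\subs_2$ with $\subs_1\subsMerge\subs_2$ undefined, you can refute a hypothetical derivation of $\te_1\andop\te_2\extrans{\ev}\te';\subs$ via rule (and) only if the substitutions produced by its premises are the \emph{same} $\subs_1,\subs_2$, i.e.\ you need determinism of the transition judgment (uniqueness of the reduct and of the substitution), which should be proved simultaneously with mutual exclusivity by induction on the pair of derivations. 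Neither point changes the architecture of your proof.
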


\begin{lemma}\label{lemmaLeadsToVar}
 If $(\evtr,\substr)\leadsto(\evtr,\subs)$, then $(\evtr,\remove{\substr}{\dvar})\leadsto(\evtr,\remove{\subs}{\dvar})$.
\end{lemma}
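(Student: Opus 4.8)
The plan is to unfold the definition of $\leadsto$ on both sides and reduce the claim to the single fact that variable deletion commutes with the disjoint union of the substitutions making up a substitution trace. Recall that $(\evtr,\substr)\leadsto(\evtr,\subs)$ holds exactly when $\subs=\bigcup\{\subs'\mid\subs'\in\substr\}$, and that $\remove{\substr}{\dvar}$ denotes the substitution trace obtained by applying $\remove{\cdot}{\dvar}$ pointwise, i.e.\ its $n$-th component is $\remove{\substr(n)}{\dvar}$. Hence the conclusion $(\evtr,\remove{\substr}{\dvar})\leadsto(\evtr,\remove{\subs}{\dvar})$ amounts to the identity $\remove{\subs}{\dvar}=\bigcup\{\remove{\subs'}{\dvar}\mid\subs'\in\substr\}$, and since the hypothesis gives $\subs=\bigcup\{\subs'\mid\subs'\in\substr\}$, it suffices to prove
\[
\remove{\Big(\bigcup\{\subs'\mid\subs'\in\substr\}\Big)}{\dvar}=\bigcup\{\remove{\subs'}{\dvar}\mid\subs'\in\substr\}.
\]

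First I would note that, because $(\evtr,\substr)$ is a concrete instantiated event trace, the substitutions in $\substr$ have mutually disjoint domains, so both unions above are genuine disjoint unions of partial functions and are therefore well defined; moreover, since deletion only shrinks domains, $\remove{\substr}{\dvar}$ is again a family with mutually disjoint domains, so the right-hand union is well defined as well. I would then establish the identity by comparing the two partial functions argument by argument. For the variable $\dvar$ itself, both sides are undefined: the left-hand side by the definition of $\remove{\cdot}{\dvar}$, and the right-hand side because every component $\remove{\subs'}{\dvar}$ is undefined at $\dvar$. For any variable $\dvar'\neq\dvar$, deletion of $\dvar$ is transparent, so each side reduces to the value at $\dvar'$ of the unique component substitution in $\substr$ whose domain contains $\dvar'$ (and is undefined if no such component exists); by disjointness this component is unique, hence the two sides agree.

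I do not expect any genuine obstacle here: the result is essentially the observation that restricting a partial function to the complement of a singleton commutes with disjoint union. The only points requiring care are purely bookkeeping, namely checking that $\remove{\substr}{\dvar}$ is still a legitimate substitution trace (mutually disjoint domains, finite total domain, and the same length as $\evtr$), which all follow from the fact that deletion can only remove bindings, and then verifying the pointwise equality at $\dvar$ and at $\dvar'\neq\dvar$ as above.
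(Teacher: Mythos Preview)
Your argument is correct: unfolding the definition of $\leadsto$ and checking that $\remove{\cdot}{\dvar}$ commutes with the disjoint union of the component substitutions is exactly what is needed, and your case split on $\dvar$ versus $\dvar'\neq\dvar$ handles it cleanly. Note that the paper itself does not include a proof of this lemma in the body (it is among the auxiliary results deferred to the extended version), so there is no in-paper proof to compare against; your approach is the natural one and is what one would expect the extended version to contain.
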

Where $\remove{\substr}{\dvar}$ denotes the substitution sequence where $\dvar$ is removed from the domain of each substitution in $\substr$.

\begin{lemma}\label{lemmaSubsDecomposition}
 Given a substitution function $\subs$ and a term $\te$, we have that $\subs \te = \remove{\subs}{\dvar}\restrict{\subs}{\dvar}\te = \restrict{\subs}{\dvar}\remove{\subs}{\dvar}\te$, for every $\dvar\in\dom(\subs)$.
\end{lemma}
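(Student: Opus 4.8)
The plan is to prove both equalities by exploiting the two-layered way substitution application is defined in Section~\ref{sec:calculus}: inductively on basic data expressions and event type patterns, and coinductively on trace expressions. Accordingly, I would first establish the decomposition for the inductive categories and then lift it to trace expressions by coinduction, stating the coinductive hypothesis in the fully quantified form ``for the subterm at hand, for every substitution $\rho$ and every $\dvar'\in\dom(\rho)$, $\rho\te=\remove{\rho}{\dvar'}\restrict{\rho}{\dvar'}\te=\restrict{\rho}{\dvar'}\remove{\rho}{\dvar'}\te$'', since the let-case will require instantiating it at a substitution other than $\subs$.

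The crux is the observation that substitutions map variables to \emph{data values} $\dv$, which by the grammar in Figure~\ref{fig:syn} contain no variables. I would record this as an auxiliary fact, proved by a routine induction on $\dv$: for every data value $\dv$ and every substitution $\rho$, $\rho\dv=\dv$. This is what makes the two halves $\restrict{\subs}{\dvar}$ and $\remove{\subs}{\dvar}$ commute, because once $\restrict{\subs}{\dvar}$ has replaced each occurrence of $\dvar$ by the ground value $\subs(\dvar)$, any further substitution leaves it untouched. Using this, I would prove $\subs\bv=\remove{\subs}{\dvar}\restrict{\subs}{\dvar}\bv=\restrict{\subs}{\dvar}\remove{\subs}{\dvar}\bv$ by induction on $\bv$; the only interesting base case is a variable $\dvar''$: if $\dvar''=\dvar$ both decompositions yield the ground value $\subs(\dvar)$, if $\dvar''\neq\dvar$ but $\dvar''\in\dom(\subs)$ the roles of the two halves merely swap, and otherwise all three expressions equal $\dvar''$. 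The object, array, and event-type-pattern cases $\evTyName(\bv_1,\ldots,\bv_n)$ follow immediately, since substitution distributes componentwise over each constructor.

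For trace expressions I would proceed by coinduction on the definition of substitution application, matching on the outermost constructor. The cases $\emptyTrace$ and $\eventTy$ are immediate (the latter from the inductive result above), and $\te_1\op\te_2$ for $\op\in\{\catop,\andop,\orop,\shuffleop\}$ follows by distributivity $\subs(\te_1\op\te_2)=\subs\te_1\op\subs\te_2$ together with the coinductive hypothesis applied to each operand.

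The delicate case, which I expect to be the main obstacle, is the parametric expression $\block{\dvar'}{\te'}$, because substitution there removes the bound variable: $\subs\block{\dvar'}{\te'}=\block{\dvar'}{\remove{\subs}{\dvar'}\te'}$. I would split on whether $\dvar'=\dvar$. If $\dvar'=\dvar$, the inner $\dvar$ is shadowed, so $\restrict{\subs}{\dvar}$ vanishes under the binder (since $\remove{\restrict{\subs}{\dvar}}{\dvar}=\emptysub$) and both decompositions collapse to $\block{\dvar}{\remove{\subs}{\dvar}\te'}$, which matches the left-hand side. If $\dvar'\neq\dvar$, pushing the substitutions under the binder turns the middle expression into $\block{\dvar'}{\remove{\remove{\subs}{\dvar}}{\dvar'}\restrict{\subs}{\dvar}\te'}$, using that removing $\dvar'$ does not affect $\restrict{\subs}{\dvar}$; the left-hand side is $\block{\dvar'}{\remove{\subs}{\dvar'}\te'}$, and the two coincide once I apply the coinductive hypothesis to $\te'$ instantiated at the substitution $\remove{\subs}{\dvar'}$ and the variable $\dvar$ (which still lies in its domain because $\dvar\neq\dvar'$), noting that deletions commute, i.e. $\remove{\remove{\subs}{\dvar'}}{\dvar}=\remove{\remove{\subs}{\dvar}}{\dvar'}$, and that $\restrict{\remove{\subs}{\dvar'}}{\dvar}=\restrict{\subs}{\dvar}$. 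The symmetric decomposition $\restrict{\subs}{\dvar}\remove{\subs}{\dvar}\te$ is handled identically. To keep the let-case free of clutter, I would isolate these elementary identities about $\remove{\cdot}{\cdot}$ and $\restrict{\cdot}{\cdot}$ (their commutation when the variables differ, and their collapse when the variables coincide) as a small preliminary remark, so that the binder case reduces to bookkeeping on substitution domains plus one appeal to the coinductive hypothesis.
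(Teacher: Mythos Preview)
The paper does not actually give a proof of this lemma in the main text; it is listed among the auxiliary lemmas and deferred to the extended version. So there is nothing in the paper to compare your argument against directly.

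That said, your approach is correct and is the natural one dictated by the two-layer definition of substitution application in Section~\ref{sec:calculus}: an inner induction on basic data expressions and event type patterns, followed by coinduction on trace expressions. The key insight you isolate---that the codomain of a substitution consists of \emph{data values}, which by the grammar contain no variables and are therefore fixed by any further substitution---is exactly what makes $\restrict{\subs}{\dvar}$ and $\remove{\subs}{\dvar}$ commute in the base case, and the rest is bookkeeping. Your handling of the binder case is also right: the case split on $\dvar'=\dvar$ versus $\dvar'\neq\dvar$ is necessary, and you correctly anticipate that the coinductive hypothesis must be stated in fully quantified form (over all substitutions and all variables in their domain) so that in the $\dvar'\neq\dvar$ sub-case it can be instantiated at $\remove{\subs}{\dvar'}$ rather than at $\subs$ itself. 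The small identities you propose to isolate ($\remove{\remove{\subs}{\dvar'}}{\dvar}=\remove{\remove{\subs}{\dvar}}{\dvar'}$, $\restrict{\remove{\subs}{\dvar'}}{\dvar}=\restrict{\subs}{\dvar}$ when $\dvar\neq\dvar'$, and $\remove{\restrict{\subs}{\dvar}}{\dvar}=\emptysub$) are all immediate from the definitions and make the binder case go through cleanly.
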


\begin{lemma}\label{lemma3}
Let $\te$ be a term, $\subs_1$ be a substitution function s.t. $\dom(\subs_1)=\{\dvar\}$; we have that: 
$$\forall_{(\evtr,\subs_2)\in\sem{\te}}.((\subs_1\subsMerge\subs_2 \text{ is defined}) \implies (\evtr,\remove{\subs_2}{\dvar})\in\sem{\subs_1\te})$$
\end{lemma}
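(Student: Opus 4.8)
The plan is to prove the statement at the level of the concrete semantics and then transfer it to $\sem{\cdot}$ via Lemma~\ref{lemmaLeadsToVar}. Writing $\subs_1=\{\dvar\mapsto v\}$, I would first establish the concrete counterpart: if $(\evtr,\substr)\in\csem{\te}$ and $\subs_1\subsMerge\subs_2$ is defined for $(\evtr,\substr)\leadsto(\evtr,\subs_2)$, then $(\evtr,\remove{\substr}{\dvar})\in\csem{\subs_1\te}$. Lemma~\ref{lemma3} then follows immediately: given $(\evtr,\subs_2)\in\sem{\te}$, pick the witnessing $(\evtr,\substr)\in\csem{\te}$ with $(\evtr,\substr)\leadsto(\evtr,\subs_2)$; the concrete claim yields $(\evtr,\remove{\substr}{\dvar})\in\csem{\subs_1\te}$, and Lemma~\ref{lemmaLeadsToVar} gives $(\evtr,\remove{\substr}{\dvar})\leadsto(\evtr,\remove{\subs_2}{\dvar})$, hence $(\evtr,\remove{\subs_2}{\dvar})\in\sem{\subs_1\te}$.

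The concrete claim I would prove by coinduction on the definition of $\csem{\subs_1\te}$. In the base case $\evtr=\emptyevtr$, membership in $\csem{\te}$ means $\der\isEmpty(\te)$, and I would invoke the auxiliary fact that $\der\isEmpty$ is invariant under substitution ($\der\isEmpty(\te)$ iff $\der\isEmpty(\subs\te)$, an easy structural induction, since substitution touches only the arguments of event type patterns and never the operator structure nor the distinction between $\emptyTrace$ and a pattern). In the step case $\evtr=\ev\catop\evtr'$, $\substr=\subs\catop\substr'$, we have $\te\extrans{\ev}\te';\subs$ and $(\evtr',\substr')\in\csem{\subs\te'}$, and since $\subs$ occurs in $\substr$, global compatibility restricts to $\subs_1\subsMerge\subs$ being defined. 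The engine is then a single-step lemma producing a transition $\subs_1\te\extrans{\ev}\te'';\remove{\subs}{\dvar}$; the residual $\te''$ is glued to the coinductive hypothesis by distinguishing whether $\dvar\in\dom(\subs)$. If $\dvar\in\dom(\subs)$, then compatibility forces $\restrict{\subs}{\dvar}=\subs_1$, so by Lemma~\ref{lemmaSubsDecomposition} the continuation term $\remove{\subs}{\dvar}\te''$ has the same semantics as $\subs\te'$; since every substitution in $\substr'$ has domain disjoint from $\{\dvar\}$ we get $\remove{\substr'}{\dvar}=\substr'$ and close the branch directly. If $\dvar\not\in\dom(\subs)$, then $\remove{\subs}{\dvar}=\subs$, the two substitutions commute (disjoint domains), and I apply the coinductive hypothesis to $(\evtr',\substr')\in\csem{\subs\te'}$ with the still-pending $\subs_1$.

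The single-step lemma is the heart of the argument and the main obstacle. The naive term-level form ``$\subs_1\te\extrans{\ev}\subs_1\te';\remove{\subs}{\dvar}$'' is in fact \emph{false}: because the binary operators are left-preferential, instantiating by $\subs_1$ can make a match on the left operand fail (e.g.\ $p(\dvar)$ no longer matches once $\dvar$ is fixed to an incompatible value) and thereby redirect the unique reduction to a different branch, yielding a residual $\te''$ syntactically different from $\subs_1\te'$; likewise a failed $\subsMerge$ inside an intersection can turn into a success once $\dvar$ is pre-resolved. What survives, and what I would actually prove, is a \emph{semantic} single-step lemma: there is a (by determinism, unique) transition $\subs_1\te\extrans{\ev}\te'';\remove{\subs}{\dvar}$ whose label substitution is $\remove{\subs}{\dvar}$ and whose residual satisfies $\csem{\remove{\subs}{\dvar}\te''}=\csem{\subs_1(\subs\te')}$ (reading $\subs_1$ as $\emptysub$ when $\dvar\in\dom(\subs)$), which is exactly what the two cases above consume.

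This semantic single-step lemma I would prove by induction on the derivation of $\te\extrans{\ev}\te';\subs$, using three ingredients: the behaviour of $\mtch$ under substitution in rule (single), namely that matching against $\subs_1\eventTy$ yields $\remove{\mtch(\ev,\eventTy)}{\dvar}$ whenever the full match is compatible with $\subs_1$ and becomes strictly more restrictive otherwise; Lemma~\ref{lemmaSubsDecomposition} for the (par-t)/(par-f) cases, taking care of when the let-bound variable coincides with or shadows $\dvar$; and the determinism dichotomy of Lemma~\ref{lemma1} to pin down the transition of $\subs_1\te$ and to argue semantic agreement of the possibly rerouted residual in the (or-r), (shuffle-r), (cat-r) and (and)/(n-and) cases. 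The delicate point throughout is threading substitution compatibility across $\subsMerge$, so that disagreements persist on variables other than $\dvar$ while a disagreement on $\dvar$ alone is neutralised by the pre-resolution performed by $\subs_1$; this is precisely where contractivity, through Lemma~\ref{lemma1}, is indispensable, since it guarantees a unique well-defined behaviour of $\subs_1\te$ on which the semantic comparison can be based.
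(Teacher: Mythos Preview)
The paper does not include a proof of this lemma in the provided text (it is deferred to the extended version), so there is nothing to compare your approach against directly. I therefore assess the proposal on its own.

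Your overall plan---lifting to the concrete semantics, coinduction, and a single-step transfer lemma---is the natural one, and you are right that the naive term-level form ``$\subs_1\te\extrans{\ev}\subs_1\te';\remove{\subs}{\dvar}$'' fails because of left-preferential rerouting. The problem is that your proposed \emph{semantic} replacement, asserting $\csem{\remove{\subs}{\dvar}\te''}=\csem{\subs_1(\subs\te')}$ for the (unique) residual $\te''$ of $\subs_1\te$, is just as false, and for exactly the same reason. Take $\te=(\te_1\andop\te_2)\orop w$ with $\te_1=(p(\dvar)\catop a)\orop(s\catop b)$ and $\te_2=(q(\dvar)\catop c)\orop(u\catop d)$, where $\ev$ matches $p(\dvar)$ with $\{\dvar\mapsto v'\}$, matches $q(\dvar)$ with $\{\dvar\mapsto v''\}$ (and $v'\neq v''$), matches each of $s,u,w$ with $\emptyset$, and matches none of $a,b,c,d$; set $\subs_1=\{\dvar\mapsto v\}$ with $v\notin\{v',v''\}$. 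Then $\te_1\andop\te_2\notrans{\ev}$ by (n-and), so $\te\extrans{\ev}\emptyTrace;\emptyset$ via (or-r) and $(\ev,\emptyset)\in\sem{\te}$ with the compatibility hypothesis trivially satisfied. But after instantiation neither $p(v)$ nor $q(v)$ matches $\ev$, each $\subs_1\te_i$ reroutes through its right disjunct with substitution $\emptyset$, and now the merge \emph{succeeds}: $\subs_1(\te_1\andop\te_2)\extrans{\ev}(\emptyTrace\catop b)\andop(\emptyTrace\catop d);\emptyset$ by (and), whence $\subs_1\te$ takes (or-l) to that residual. This residual is not nullable, so $(\ev,\emptyset)\notin\sem{\subs_1\te}$.

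This example simultaneously refutes your semantic single-step lemma and Lemma~\ref{lemma3} as literally stated: the phenomenon you identified---that pre-resolving $\dvar$ can turn a failing $\subsMerge$ in (n-and) into a success---is not merely a syntactic rerouting with equal semantics, it genuinely changes the accepted language. What the (par-t) case of Theorem~\ref{theo} actually needs is the instance where $\subs_1=\restrict{\subs}{\dvar}$ arises from the transition of $\te$ itself, and under that coupling the rerouting cannot occur; your argument sketch would go through there. But for arbitrary $\subs_1$ with $\dom(\subs_1)=\{\dvar\}$, no proof of the stated claim is possible.
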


\begin{lemma}\label{lemma4}
Let $\te$ be a term, $\subs_1$ be a substitution function s.t. $\dom(\subs_1)=\{\dvar\}$; we have that: 
$$\forall_{(\evtr,\subs_2)\in\sem{\subs_1\te}}.((\subs_1\subsMerge\subs_2 \text{ is defined}) \implies (\evtr,\subs_2)\in\sem{\te})$$
\end{lemma}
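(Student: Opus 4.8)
The plan is to prove the statement by coinduction on the concrete semantics $\csem{-}$, reading Lemma~\ref{lemma4} as the converse of Lemma~\ref{lemma3} and reusing the same apparatus in the opposite direction. Fix $(\evtr,\subs_2)\in\sem{\subs_1\te}$ with $\subs_1\subsMerge\subs_2$ defined; by Definition~\ref{def:sem} there is a concrete instantiated trace $(\evtr,\substr_2)\in\csem{\subs_1\te}$ with $(\evtr,\substr_2)\leadsto(\evtr,\subs_2)$. The goal is to exhibit a concrete trace $(\evtr,\substr)\in\csem{\te}$ whose per-step substitutions accumulate, under $\leadsto$, to $\subs_2$, since this yields $(\evtr,\subs_2)\in\sem{\te}$ directly from Definition~\ref{def:sem}. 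I construct $\substr$ by mirroring, step by step, the reduction of $\subs_1\te$ witnessing $(\evtr,\substr_2)$, and argue that the resulting family of concrete traces is closed under the two clauses of $\csem{-}$, hence contained in $\csem{\te}$ by coinduction.

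Two ingredients feed the two clauses. For the empty-trace clause I first record the auxiliary fact that $\der\isEmpty(\te)$ holds if and only if $\der\isEmpty(\subs\te)$ holds for every substitution $\subs$, proved by a routine induction on the rules of Figure~\ref{fig:semantics} defining $\isEmpty$: those rules fix $\emptyTrace$, commute with every binary operator and with the let-construct, and never hold for a single event type pattern, none of which is affected by substitution. This turns $\der\isEmpty(\subs_1\te)$ into $\der\isEmpty(\te)$ and settles the base case. The heart of the proof is a single-step reduction-correspondence lemma: whenever $\subs_1\te\extrans{\ev}\te';\subs'$ is derivable, there exist $\te''$ and $\subs''$ with $\te\extrans{\ev}\te'';\subs''$ derivable, $\remove{\subs''}{\dvar}=\subs'$, $\restrict{\subs''}{\dvar}$ compatible with $\subs_1$, and $\subs''\te''=\subs'\te'$, the last identity obtained from Lemma~\ref{lemmaSubsDecomposition} since $\restrict{\subs''}{\dvar}=\subs_1$ on the consumed variable and $\subs_1\te''=\te'$. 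This is proved by induction on the derivation, with a parallel statement for the negative judgment $\cdot\notrans{\ev}$ needed in the side conditions of rules (or-r), (shuffle-r) and (cat-r); for contractive subterms Lemma~\ref{lemma1} lets me pass freely between derivability of $\extrans{\ev}$ and non-derivability of $\notrans{\ev}$. The only delicate cases are (single), where matching an event against a pattern whose occurrence of $\dvar$ has already been replaced by $\subs_1(\dvar)$ forces the matching against the un-substituted pattern in $\te$ to bind exactly $\dvar\mapsto\subs_1(\dvar)$, and (par-t)/(par-f), where $\subs_1\block{\dvar}{\te_0}=\block{\dvar}{\te_0}$ because $\dom(\subs_1)=\{\dvar\}$, so the inner binder shadows $\subs_1$ and the correspondence is immediate.

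Composing these, I thread the coinduction: each reduction step of $\subs_1\te$ is matched by a step of $\te$ emitting $\subs''$ with $\remove{\subs''}{\dvar}=\subs'$, and the identity $\subs''\te''=\subs'\te'$ keeps the residuals aligned so that the coinductive hypothesis applies to the continuations $\csem{\subs'\te'}$ and $\csem{\subs''\te''}$. The per-step substitutions of the constructed trace therefore differ from those of $\substr_2$ only by a single entry $\dvar\mapsto\subs_1(\dvar)$ contributed by the first (single) step that consumes a pattern mentioning $\dvar$; by Proposition~\ref{prop:sem-three} the domains along the trace are pairwise disjoint, so this entry appears at most once, and I track the two unions with the help of Lemma~\ref{lemmaLeadsToVar}. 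The main obstacle is precisely the reduction-correspondence lemma together with the reconciliation of this forced $\dvar$-binding: keeping the substitution bookkeeping coherent across the mutually recursive $\extrans{\ev}$ and $\notrans{\ev}$ judgments, where contractivity via Lemma~\ref{lemma1} is indispensable, and verifying that the binding forced by event matching agrees with $\subs_1$ and never collides with $\subs'$ — which is exactly the point at which the hypothesis that $\subs_1\subsMerge\subs_2$ is defined is consumed.
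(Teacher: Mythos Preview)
The paper does not prove Lemma~\ref{lemma4} in the body; the proof is deferred to the extended version. So I can only assess your argument on its own merits. Your overall plan---a step-by-step reduction correspondence fed into the coinductive definition of $\csem{-}$---is the natural one, but two concrete problems remain.

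First, your single-step correspondence asserts $\subs''\te''=\subs'\te'$ in all cases. This fails exactly when the step does \emph{not} bind $\dvar$ while $\dvar$ is still free in the residual. Take $\te=\te_1\catop\te_2$ with $\dvar\notin\fv(\te_1)$ and $\dvar\in\fv(\te_2)$: a (cat-l) step yields $\subs''=\subs'$, $\te''=\te_1''\catop\te_2$, $\te'=\te_1''\catop\subs_1\te_2$, hence $\subs''\te''=\subs'\te_1''\catop\subs'\te_2$ but $\subs'\te'=\subs'\te_1''\catop\subs'\subs_1\te_2$, and these differ. Your appeal to Lemma~\ref{lemmaSubsDecomposition} only works when $\restrict{\subs''}{\dvar}=\subs_1$, i.e.\ when $\dvar\in\dom(\subs'')$. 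The invariant that actually threads the coinduction is $\subs'\te'=\subs_1(\subs''\te'')$ whenever $\dvar\notin\dom(\subs'')$, collapsing to equality once $\dvar$ has been bound; with this formulation the coinductive step does go through.

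Second, and more seriously, your construction produces $(\evtr,\substr)\in\csem{\te}$ whose accumulated substitution is $\subs_2$ \emph{possibly extended} by the single binding $\dvar\mapsto\subs_1(\dvar)$. Lemma~\ref{lemmaLeadsToVar} lets you strip $\dvar$ from both sides of $\leadsto$, but it does not let you drop the $\dvar$ entry from $\substr$ while keeping membership in $\csem{\te}$: the transition system is deterministic in the emitted substitution, so you cannot simply discard that entry. What your method actually proves is $(\evtr,\subs_1\subsMerge\subs_2)\in\sem{\te}$ whenever some step consumes a pattern mentioning $\dvar$, and $(\evtr,\subs_2)\in\sem{\te}$ only when none does. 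The statement as written demands the latter unconditionally; yet for $\te=\evTyName(\dvar)$, $\subs_1=\{\dvar\mapsto v\}$ and any $\ev$ with $\mtch(\ev,\evTyName(v))=\emptysub$, one has $(\ev,\emptysub)\in\sem{\subs_1\te}$ and $\subs_1\subsMerge\emptysub$ defined, while $\sem{\te}$ contains only $(\ev,\{\dvar\mapsto v\})$. So either the intended conclusion is $(\evtr,\subs_1\subsMerge\subs_2)\in\sem{\te}$---which is precisely what your approach yields once the first issue is fixed---or an additional hypothesis is missing; in either case your final ``tracking'' step does not close the gap.
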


\begin{lemma}\label{lemma5}
$\te\notrans{\ev}\iff\ev\nopref\sem{\te}$.
\end{lemma}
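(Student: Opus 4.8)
The plan is to prove the two implications separately, in each case pivoting on Lemma~\ref{lemma1} to trade the inductive judgment $\te\notrans{\ev}$ for the (non-)existence of a one-step transition, and on Definition~\ref{def:sem} to connect one-step transitions with semantic prefixes. Since $\te$ is contractive, Lemma~\ref{lemma1} lets me replace ``$\te\notrans{\ev}$ is derivable'' by ``there is no $\te'$ and $\subs$ with $\te\extrans{\ev}\te';\subs$ derivable'', so it suffices to establish
\[
(\text{no transition consuming }\ev\text{ from }\te) \;\iff\; \ev\nopref\sem{\te}.
\]
For this I would first record the bridge obtained by unfolding the coinductive clauses of Definition~\ref{def:sem}: a trace of $\tproj{\sem{\te}}$ begins with $\ev$ exactly when $\te\extrans{\ev}\te';\subs$ is derivable for some $\te',\subs$ \emph{and} $\csem{\subs\te'}\neq\emptyset$; hence $\ev\pref\sem{\te}$ holds iff some $\ev$-transition leads to a term with non-empty semantics.

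The forward implication ($\te\notrans{\ev}\Rightarrow\ev\nopref\sem{\te}$) is the easy half, which I would argue by contraposition: if $\ev\pref\sem{\te}$, the bridge above yields a derivable transition $\te\extrans{\ev}\te';\subs$, and Lemma~\ref{lemma1} then forbids $\te\notrans{\ev}$ from being derivable. This uses only the ``semantic trace $\Rightarrow$ transition'' half of the bridge, which is immediate from the second clause of Definition~\ref{def:sem} and does not rely on any non-emptiness claim.

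The converse ($\ev\nopref\sem{\te}\Rightarrow\te\notrans{\ev}$) is where the real work lies. Again by contraposition and Lemma~\ref{lemma1}, the failure of $\te\notrans{\ev}$ supplies a transition $\te\extrans{\ev}\te';\subs$; to conclude $\ev\pref\sem{\te}$ I must exhibit an \emph{actual} accepted trace through this step, i.e.\ show $\csem{\subs\te'}\neq\emptyset$. The plan is to build such an accepted (possibly infinite) run coinductively: at the current term, if it accepts the empty trace I stop with $(\emptyevtr,\ldots)$; otherwise contractivity, via Lemma~\ref{lemma1}, guarantees a further transition, which I follow, applying the computed substitution to the residual as prescribed by Definition~\ref{def:sem}, and repeat. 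Two side conditions must be discharged along the way: that the disjoint union of the substitutions produced is well-defined with finite domain, which follows from Lemma~\ref{lemma:vars} (every generated substitution has domain inside the finite set $\fv(\te)$), and that the construction is genuinely productive.

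I expect this last point --- \emph{productivity} of the semantics of the reachable terms --- to be the main obstacle, since a single step may be derivable yet lead to a term that neither accepts $\emptyevtr$ nor admits any further step, in which case $\csem{\subs\te'}$ would be empty and the converse would break down. The crux of the argument is therefore to show that a derivable step from a contractive term cannot reach such a ``dead'' term without an accepting trace: contractivity must be exploited precisely to rule out stalling inside a cyclic subterm before any guard is consumed, so that the coinductive run is always defined and witnesses a member of $\csem{\subs\te'}$ starting right after $\ev$, yielding $\ev\pref\sem{\te}$ and closing the contrapositive.
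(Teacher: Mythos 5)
Your forward direction is fine: it needs only the mutual-exclusivity half of Lemma~\ref{lemma1} together with the second clause of Definition~\ref{def:sem}, exactly as you describe. The problem is the converse, and it sits precisely where you yourself place it. Your plan reduces $\ev\pref\sem{\te}$ to the claim that the residual of the derivable step has non-empty concrete semantics, i.e.\ that $\csem{\subs\te'}\neq\emptyset$, and you propose to discharge this by a coinductive run whose productivity is ``guaranteed by contractivity''. But contractivity only forbids unguarded cycles; it says nothing about the semantic satisfiability of the terms reachable after a step, and the productivity claim fails. Take two event type patterns $\eventTy_1$, $\eventTy_2$ matched by disjoint sets of events and consider the acyclic (hence contractive) term $\te=\eventTy\catop(\eventTy_1\andop\eventTy_2)$. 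For any $\ev$ with $\mtch(\ev,\eventTy)$ defined, rules (single) and (cat-l) give $\te\extrans{\ev}\emptyTrace\catop(\eventTy_1\andop\eventTy_2);\subs$, so $\te\notrans{\ev}$ is not derivable; yet the residual neither satisfies $\isEmpty$ (no rule derives $\der\isEmpty(\eventTy')$ for an event type pattern) nor admits any further transition (one of (n-and-l), (n-and-r) applies for every event, and then (n-cat-r)), so its concrete semantics is empty, $\sem{\te}=\emptyset$, and $\ev\nopref\sem{\te}$. The term $\eventTy_1\andop(\eventTy_1\catop\eventTy_1)$ exhibits the same phenomenon: one derivable step into a dead residual.

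Consequently the coinductive construction you sketch genuinely stalls: a derivable step may land on a term with no accepting continuation, and no appeal to contractivity closes this, because contractivity constrains the shape of loops, not the satisfiability of residuals created by intersection or by concatenation into an unsatisfiable tail. This is not a missing technical detail but the crux of the right-to-left implication, which would require the additional invariant that every term reachable by a derivable step has non-empty semantics --- an invariant that does not hold for the calculus as given. The paper relegates its proof of this lemma to the extended version, so I cannot set your argument against the authors' own; but as written your attempt does not establish the converse, and any repair must strengthen the hypotheses (e.g.\ restrict to terms all of whose reachable residuals are satisfiable) rather than lean on contractivity.
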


\begin{lemma}\label{lemmaSubsRemoval}
 If $(\evtr,\subs)\in\sem{\te}$, then $(\evtr,\emptyset)\in\sem{\subs\te}$.
\end{lemma}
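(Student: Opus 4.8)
The plan is to prove the statement by induction on $k=|\dom(\subs)|$, the number of variables bound by $\subs$, reducing the general case to repeated applications of Lemma~\ref{lemma3}. First I would observe that $\dom(\subs)$ is necessarily finite: by Definition~\ref{def:sem} the pair $(\evtr,\subs)$ is derived from a concrete instantiated event trace $(\evtr,\substr)\in\csem{\te}$ with $\subs=\bigcup\{\subs'\mid\subs'\in\substr\}$, and a concrete instantiated event trace has finite total domain by definition (equivalently, by the remarks following Lemma~\ref{lemma:vars}, $\dom(\subs)\subseteq\fv(\te)$). Hence the induction on $k$ is well founded, and the induction hypothesis is taken to quantify over all terms $\te$ and all substitutions of domain size $k-1$.

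For the base case $k=0$ we have $\subs=\emptysub$, so $\subs\te=\te$ and $(\evtr,\emptysub)\in\sem{\te}$ holds by assumption. For the inductive step I would pick any $\dvar\in\dom(\subs)$ and set $\subs_1=\restrict{\subs}{\dvar}$, so that $\dom(\subs_1)=\{\dvar\}$. Since $\subs_1$ is the restriction of $\subs$ to $\dvar$, the two substitutions agree on their only shared variable $\dvar$, so $\subs_1\subsMerge\subs$ is defined. Applying Lemma~\ref{lemma3} to $(\evtr,\subs)\in\sem{\te}$ with this $\subs_1$ then yields $(\evtr,\remove{\subs}{\dvar})\in\sem{\restrict{\subs}{\dvar}\,\te}$.

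At this point $|\dom(\remove{\subs}{\dvar})|=k-1$, so the induction hypothesis applies to the term $\restrict{\subs}{\dvar}\,\te$ together with the substitution $\remove{\subs}{\dvar}$, giving $(\evtr,\emptysub)\in\sem{(\remove{\subs}{\dvar})(\restrict{\subs}{\dvar}\,\te)}$. By Lemma~\ref{lemmaSubsDecomposition} we have $(\remove{\subs}{\dvar})(\restrict{\subs}{\dvar}\,\te)=\subs\,\te$, and therefore $(\evtr,\emptysub)\in\sem{\subs\,\te}$, which is exactly the required conclusion.

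I do not expect a serious obstacle, since the two substitution lemmas carry the weight of the argument. The only point demanding care is the bookkeeping that peeling off one variable at a time via Lemma~\ref{lemma3} reconstructs the simultaneous substitution $\subs\te$; this is precisely what Lemma~\ref{lemmaSubsDecomposition} guarantees at each step, and one should note that the compatibility side condition required by Lemma~\ref{lemma3} is automatically met, because at every level the single-variable substitution being extracted is a restriction of the substitution currently in hand and hence agrees with it on the extracted variable.
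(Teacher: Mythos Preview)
Your argument is correct. The paper itself does not include a proof of this lemma (it is only listed among the auxiliary lemmas, with details deferred to the extended version), so there is no in-paper proof to compare against directly. Your reduction to Lemma~\ref{lemma3} by induction on $|\dom(\subs)|$, using Lemma~\ref{lemmaSubsDecomposition} to reassemble the simultaneous substitution from the one-variable peels, is a clean and natural route given the lemmas the paper already provides; the compatibility side condition of Lemma~\ref{lemma3} is indeed automatic at each step, and the finiteness of $\dom(\subs)$ follows from the definition of concrete instantiated event traces (or equivalently from Lemma~\ref{lemma:vars}), so the induction is well founded.
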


\begin{lemma}\label{lemmaOmega}
If $(\evtr,\substr)\in\csem{\te}$ and $\evtr$ is infinite, then $(\evtr,\substr)\in\csem{\te\catop\te'}$ for every $\te'$.
\end{lemma}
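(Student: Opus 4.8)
The plan is to prove this by coinduction on the definition of the concrete semantics $\csem{\cdot}$, which is the greatest fixed point of the operator induced by the two clauses of Definition~\ref{def:sem}. Since $\evtr$ is infinite, Proposition~\ref{prop:sem-two} guarantees that $\substr$ is infinite as well, so the empty-trace base clause is never relevant and we always operate in the second clause.

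Concretely, I would introduce the candidate set
$$X = \csem{\cdot}\ \cup\ \{(\te\catop\te',(\evtr,\substr)) \mid (\evtr,\substr)\in\csem{\te},\ \evtr \text{ infinite},\ \te' \text{ arbitrary}\},$$
regarding $\csem{\cdot}$ as the set of pairs $(\te,(\evtr,\substr))$ with $(\evtr,\substr)\in\csem{\te}$. It suffices to show that $X$ is consistent with the defining clauses (i.e. a post-fixed point), because by the coinduction principle this yields $X\subseteq\csem{\cdot}$, which is exactly the statement.

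Take an element $(\te\catop\te',(\evtr,\substr))$ of $X$ that is not already in $\csem{\cdot}$; it arises from some $(\evtr,\substr)\in\csem{\te}$ with $\evtr$ infinite. Writing $\evtr=\ev\catop\evtr_0$ and $\substr=\subs\catop\substr_0$, membership in $\csem{\te}$ gives a derivation $\te\extrans{\ev}\te_0;\subs$ together with $(\evtr_0,\substr_0)\in\csem{\subs\te_0}$, where $\evtr_0$ is again infinite. Rule (cat-l) then yields $\te\catop\te'\extrans{\ev}\te_0\catop\te';\subs$, so to justify the element I only need $(\evtr_0,\substr_0)\in\csem{\subs(\te_0\catop\te')}$. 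By the coinductive definition of substitution application, $\subs(\te_0\catop\te')=\subs\te_0\catop\subs\te'$; since $(\evtr_0,\substr_0)\in\csem{\subs\te_0}$ with $\evtr_0$ infinite, the pair $(\subs\te_0\catop\subs\te',(\evtr_0,\substr_0))$ lies in $X$ by construction. This closes the coinductive step, and elements already in $\csem{\cdot}$ are justified by their own defining derivations.

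The only delicate points are bookkeeping: observing that the residual trace $\evtr_0$ stays infinite so the invariant is preserved, that the residual term $\te_0\catop\te'$ is again a concatenation of the required shape, and that substitution distributes over $\catop$ so that the coinductive hypothesis applies with $\subs\te_0$ and $\subs\te'$ in place of $\te$ and $\te'$. No side condition of (cat-l) needs to be checked, and the base clause never fires because the trace is infinite, so I expect no genuine obstacle beyond arranging the invariant so that it is self-supporting.
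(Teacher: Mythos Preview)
Your coinductive argument is correct and is the natural proof given that $\csem{\cdot}$ is defined as a greatest fixed point: the candidate set is well chosen, rule (cat-l) supplies the required transition step with no side condition, substitution distributes over $\catop$ by definition, and the invariant (infinite residual trace, residual term of shape $\_\catop\_$) is preserved exactly as you describe. The paper itself does not spell out a proof of this lemma in the main text---it is listed among the auxiliary lemmas whose details are deferred to the extended version---so a line-by-line comparison is not possible, but your approach is the standard one and matches the style of the coinductive arguments the paper does present.
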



\begin{lemma}\label{lemmaHeadShuffle}
 If $\ev\catop\evtr \in \evtr_1\glpshuffleop{T}\evtr_2$, then $\evtr_1=\ev\catop\evtr_1'$, or $\evtr_2=\ev\catop\evtr_2'$ and $\ev\nopref\evtr_1$.
\end{lemma}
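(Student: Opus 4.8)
The plan is to exploit the inclusion $\evtr_1\glpshuffleop{T}\evtr_2\subseteq\evtr_1\lpshuffleop\evtr_2$ that is built into the definition of the generalized left-preferential shuffle, so that the parameter $T$ plays no role and it suffices to reason about the (plain) left-preferential shuffle. Writing $w=\ev\catop\evtr$ for the given trace, I would fix a partition of $\dom(w)$ into $N_1,N_2$ together with the strictly increasing bijections $\fun{f_1}{\dom(\evtr_1)}{N_1}$ and $\fun{f_2}{\dom(\evtr_2)}{N_2}$ witnessing $w\in\evtr_1\lpshuffleop\evtr_2$. Since $w$ is non-empty, $0\in\dom(w)=N_1\cup N_2$, and the two disjuncts of the claim will correspond exactly to the two cases $0\in N_1$ and $0\in N_2$.

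The key preliminary observation is that whenever $0\in N_i$ one has $f_i(0)=0$. Indeed, $0\in N_i$ forces $\dom(\evtr_i)\neq\emptyset$, hence $0\in\dom(\evtr_i)$ because domains of event traces are downward closed (Section~\ref{sec:back}); since $f_i$ is strictly increasing and $0$ is the least element of its domain, $f_i(0)=\min N_i$, and $\min N_i=0$ because $0\in N_i$. Consequently $\evtr_i(0)=w(f_i(0))=w(0)=\ev$. Applying this with $i=1$ settles the case $0\in N_1$: it yields $\evtr_1(0)=\ev$, i.e.\ $\evtr_1=\ev\catop\evtr_1'$, which is the first disjunct.

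In the remaining case $0\in N_2$ the same observation gives $\evtr_2(0)=\ev$, i.e.\ $\evtr_2=\ev\catop\evtr_2'$, so it only remains to establish $\ev\nopref\evtr_1$. I would argue by contradiction, assuming $\evtr_1(0)=\ev$; then $0\in\dom(\evtr_1)$ and $f_1(0)$ is defined, while $0\in N_2$ together with disjointness of the partition gives $0\notin N_1$, whence $f_1(0)\geq 1>0=f_2(0)$. Strict monotonicity of $f_1$ then makes $\{\,n_1\in\dom(\evtr_1)\mid f_2(0)<f_1(n_1)\,\}$ equal to the whole non-empty $\dom(\evtr_1)$, so its minimum is $m=0$. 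The left-preferential condition instantiated at $n_2=0$ now demands $\evtr_1(0)\neq\evtr_2(0)=\ev$, contradicting the assumption; hence $\ev\nopref\evtr_1$.

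The only real care needed lies in the preliminary observation—combining downward-closedness of event-trace domains with strict monotonicity to pin down $f_i(0)=0$—and in making sure the second disjunct invokes the condition of $\lpshuffleop$ rather than that of $\glpshuffleop{T}$; both points are handled by the inclusion noted at the outset, which lets us discard $T$ entirely. I do not expect any genuine obstacle beyond this bookkeeping.
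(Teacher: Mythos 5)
Your case analysis is the right one and is surely the same skeleton as the paper's proof: split on whether position $0$ of $\ev\catop\evtr$ lies in $N_1$ or $N_2$, pin down $f_i(0)=0$ by combining strict monotonicity and surjectivity of $f_i$ with downward-closedness of event-trace domains, and conclude that the corresponding operand starts with $\ev$. That part is correct. The step I would not let stand is the opening move, discarding $T$ via the inclusion $\evtr_1\glpshuffleop{T}\evtr_2\subseteq\evtr_1\lpshuffleop\evtr_2$. The definition does display that inclusion, but it cannot be the operative clause for arbitrary $T$: the paper itself derives $\evtr_1\glpshuffleop{\emptyset}\evtr_2=\evtr_1\shuffleop\evtr_2$, which is in general a strict superset of $\evtr_1\lpshuffleop\evtr_2$. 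Under the consistent reading (membership governed by the two displayed conditions), the preference condition at $n_2=0$, $m=0$ only yields $\evtr'(0)\neq\evtr_2(0)$ for $\evtr'\in T$ with $0\in\dom(\evtr')$; it constrains $\evtr_1$ only when $\evtr_1\in T$. So your contradiction in the $0\in N_2$ case is not justified for a general $T$.

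The lemma survives, and the repair is cheap, because the conclusion has the form $A\lor(B\land\lnot A)$ with $A$ the assertion $\ev\pref\evtr_1$, which is propositionally equivalent to $A\lor B$. So in the case $0\in N_2$ you need only observe: either $\evtr_1$ starts with $\ev$, giving the first disjunct, or it does not, i.e.\ $\ev\nopref\evtr_1$, giving the second; no preference condition (generalized or not) is needed anywhere. Alternatively, keep your argument but add the hypothesis $\evtr_1\in T$, which holds in the one place the lemma is applied ($T=\tproj{\sem{\te_1}}$ with $(\evtr_1,\subs_1)\in\sem{\te_1}$); then the generalized condition with $\evtr'=\evtr_1$ does yield $\evtr_1(0)\neq\ev$ exactly as you computed. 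Either fix makes the proof independent of how one resolves the tension in the definition of $\glpshuffleop{T}$.
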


\begin{lemma}\label{lemmaEmptyTail}
If $(\evtr,\substr)\in\csem{\te}$ and $\isEmpty(\te')$, then $(\evtr,\substr)\in\csem{\te\catop\te'}$.
\end{lemma}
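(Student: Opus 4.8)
The plan is to argue by coinduction on the definition of the concrete semantics $\csem{\cdot}$, since exhibiting a set closed under the two generating clauses is the natural way to establish membership in a coinductively defined set. First I would isolate the key auxiliary fact that acceptance of the empty trace is preserved by substitution, namely that $\der\isEmpty(\te')$ implies $\der\isEmpty(\subs\te')$ for every substitution $\subs$. This follows by a routine induction on the (finite) derivation of $\der\isEmpty(\te')$: applying $\subs$ only rewrites the arguments of event type patterns, which never accept the empty trace in any case, so every rule (e-$\epsilon$), (e-al), (e-or-l), (e-or-r), (e-par) is stable under $\subs$; in the (e-par) case one simply applies the induction hypothesis to the residual substitution $\remove{\subs}{\dvar}$.

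With this in hand I would set up the coinduction by taking as candidate the term-indexed set
$$S = \{(\te\catop\te',(\evtr,\substr)) \mid (\evtr,\substr)\in\csem{\te},\ \der\isEmpty(\te')\}$$
and showing that $S$ is consistent with the two defining clauses of $\csem{\cdot}$; the coinduction principle then gives $(\evtr,\substr)\in\csem{\te\catop\te'}$ for every element of $S$, which is exactly the statement (the pair $(\evtr,\substr)$ is unchanged, and by Proposition~\ref{prop:sem-four} its substitution domains stay within $\fv(\te)\subseteq\fv(\te\catop\te')$, so it remains a valid concrete instantiated event trace). There are two cases according to the shape of $\evtr$. If $\evtr=\emptyevtr$, and hence $\substr=\emptysubtr$, then $(\emptyevtr,\emptysubtr)\in\csem{\te}$ forces $\der\isEmpty(\te)$, and combining this with the hypothesis $\der\isEmpty(\te')$ via rule (e-al) yields $\der\isEmpty(\te\catop\te')$, matching the first clause.

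If instead $\evtr=\ev\catop\evtr_0$ with $\substr=\subs\catop\substr_0$, then by the second clause there is a term $\te_0$ with $\te\extrans{\ev}\te_0;\subs$ derivable and $(\evtr_0,\substr_0)\in\csem{\subs\te_0}$. Applying rule (cat-l) lifts this to $\te\catop\te'\extrans{\ev}\te_0\catop\te';\subs$, and since $\subs(\te_0\catop\te')=\subs\te_0\catop\subs\te'$ it only remains to place the residual pair $(\evtr_0,\substr_0)$ in the semantics of $\subs\te_0\catop\subs\te'$. But $(\evtr_0,\substr_0)\in\csem{\subs\te_0}$ together with $\der\isEmpty(\subs\te')$ (from the preservation fact) shows that $(\subs\te_0\catop\subs\te',(\evtr_0,\substr_0))$ is again a member of $S$, so the coinductive hypothesis discharges the second clause. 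I would point out that only (cat-l) is ever invoked: the whole trace is consumed by the left operand $\te$, so the right-hand rule (cat-r) never participates.

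The argument is essentially routine once the candidate $S$ is chosen correctly, and I expect the only genuine content to be the preservation lemma for $\der\isEmpty(\cdot)$ under substitution. That lemma is precisely what lets the residual term $\subs\te_0\catop\subs\te'$ re-enter $S$, keeping the coinduction self-sustaining; without it the closure of $S$ under the second clause would fail, so it is the one step worth stating explicitly rather than leaving implicit.
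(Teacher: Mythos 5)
Your proof is correct: the coinduction on $\csem{\cdot}$ with candidate set $S$, the repeated use of rule (cat-l), the (e-al) step for the base case, and the auxiliary fact that $\der\isEmpty(\te')$ is preserved under substitution (proved by induction on the finite derivation, handling (e-par) via $\remove{\subs}{\dvar}$) are exactly the ingredients the statement requires. The paper itself defers this proof to its extended version, but your argument is the natural one its coinductive definition of $\csem{\cdot}$ calls for, and I see no gap in it.
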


\begin{lemma}\label{lemmaConcatExpansion}
Given $(\evtr_1,\substr_1)\in\csem{\te_1}$, $\te_2\trans{\ev_2}\te_2^1;\subs_2^1$ and $(\evtr_2,\substr_2^2)\in\csem{\subs_2^1\te_2^1}$ with $\substr_2=\subs_2^1\catop\substr_2^2$. If $\evtr_1=\ev_1\catop\ldots\catop\ev_n$ is finite, $\te_1\trans{\ev_1}\te_1^1;\subs_1^1$, $\te_1^1\trans{\ev_2}\te_1^2;\subs_1^2$, $\ldots$, $\te_1^{n-1}\trans{\ev_n}\te_1^n;\subs_1^n$, with $\subs_1=\subs_1^1\catop\ldots\catop\subs_1^n$ and
$\te_1^n\notrans{\ev_2}$, then $(\evtr_1\catop\ev_2\catop\evtr_2, \substr_1\catop\substr_2)\in\csem{\te_1\catop\te_2}$.
\end{lemma}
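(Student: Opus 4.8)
The plan is to argue by induction on $n=\lgth{\evtr_1}$, showing that the trace $\evtr_1\catop\ev_2\catop\evtr_2$ is consumed by $\te_1\catop\te_2$ in three phases. The prefix $\evtr_1$ is consumed by repeatedly applying rule (cat-l), carrying the right operand $\te_2$ alongside; the pivot event $\ev_2$ triggers a single application of (cat-r), which discards the (now empty-accepting) residual of $\te_1$ and switches control to $\te_2$; finally the tail $\evtr_2$ is consumed entirely inside $\te_2$'s residual, exactly as witnessed by the hypothesis $(\evtr_2,\substr_2^2)\in\csem{\subs_2^1\te_2^1}$.

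First I would record the fact that licenses the switch. Since $(\evtr_1,\substr_1)\in\csem{\te_1}$ and $\evtr_1$ is finite of length $n$, unfolding the second clause of Definition~\ref{def:sem} exactly $n$ times along the given reduction chain that consumes $\evtr_1$ reaches $(\emptyevtr,\emptysubtr)\in\csem{\te_1^n}$ for the final residual $\te_1^n$, which by the first clause of the same definition means $\der\isEmpty(\te_1^n)$. Together with the assumption $\te_1^n\notrans{\ev_2}$ and the hypothesis $\te_2\extrans{\ev_2}\te_2^1;\subs_2^1$, all three premises of (cat-r) hold at the residual $\te_1^n\catop\te_2$, so $\te_1^n\catop\te_2\extrans{\ev_2}\te_2^1;\subs_2^1$.

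For the base case $n=0$ we have $\evtr_1=\emptyevtr$, $\substr_1=\emptysubtr$ and $\te_1^0=\te_1$, so the goal is $(\ev_2\catop\evtr_2,\substr_2)\in\csem{\te_1\catop\te_2}$; this follows immediately from the (cat-r) step just derived together with the given $(\evtr_2,\substr_2^2)\in\csem{\subs_2^1\te_2^1}$ and $\substr_2=\subs_2^1\catop\substr_2^2$. For the inductive step, write $\evtr_1=\ev_1\catop\evtr_1'$: applying (cat-l) to the first reduction gives $\te_1\catop\te_2\extrans{\ev_1}\te_1^1\catop\te_2;\subs_1^1$, and by Definition~\ref{def:sem} it remains to place $(\evtr_1'\catop\ev_2\catop\evtr_2,\substr_1'\catop\substr_2)$ in $\csem{\subs_1^1(\te_1^1\catop\te_2)}=\csem{\subs_1^1\te_1^1\catop\subs_1^1\te_2}$, which is an instance of the induction hypothesis with $\subs_1^1\te_1^1$ and $\subs_1^1\te_2$ in place of $\te_1$ and $\te_2$ and the shortened prefix $\evtr_1'$.

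The main obstacle, and the only genuinely delicate point, is the substitution bookkeeping in the inductive step: the coinductive continuation pushes the step substitution $\subs_1^1$ into the carried-along right operand, so the induction hypothesis must be invoked for $\subs_1^1\te_2$ rather than $\te_2$, and over the whole prefix $\te_2$ accumulates the composite $\rho=\subs_1^n\cdots\subs_1^1$ of the left steps. I therefore have to show that $\rho\te_2$ fires $\extrans{\ev_2}$ and then consumes $\evtr_2$ in lock-step with the hypotheses stated for $\te_2$, yielding the very same substitution trace $\substr_2$. This is where I would use that $\dom(\rho)\subseteq\fv(\te_1)$ (Proposition~\ref{prop:sem-four}) is disjoint from the domains of the substitutions occurring in $\substr_2$ --- a disjointness forced by the requirement that the resulting pair be a legitimate concrete instantiated event trace, cf.\ Proposition~\ref{prop:sem-three} --- so that pre-applying $\rho$ only instantiates variables that $\te_2$'s reductions never (re)bind. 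Decomposing $\rho$ into its single-variable pieces via Lemma~\ref{lemmaSubsDecomposition} and transporting the reductions with Lemmas~\ref{lemma3} and~\ref{lemma4} then makes this precise and closes the induction; the infinite-$\evtr_1$ and empty-$\evtr_2$ situations are excluded here and are treated separately by Lemmas~\ref{lemmaOmega} and~\ref{lemmaEmptyTail}.
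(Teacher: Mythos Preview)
The paper does not actually prove this lemma in the displayed text; it is listed among the auxiliary results whose proofs are deferred to the extended version, so there is no in-paper argument to compare yours against.

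Your inductive plan is the natural one and almost certainly what the full proof does: peel off the events of $\evtr_1$ one by one with (cat-l), fire a single (cat-r) at the pivot using $\der\isEmpty(\te_1^n)$ (which you correctly extract by unfolding $(\evtr_1,\substr_1)\in\csem{\te_1}$ down to $(\emptyevtr,\emptysubtr)$) together with $\te_1^n\notrans{\ev_2}$, and then let the hypothesis on $\te_2$ finish. That part is fine.

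The gap is in your substitution bookkeeping. You justify the needed disjointness of $\dom(\rho)$ from the domains in $\substr_2$ by saying it is ``forced by the requirement that the resulting pair be a legitimate concrete instantiated event trace, cf.\ Proposition~\ref{prop:sem-three}.'' That is circular: Proposition~\ref{prop:sem-three} tells you that \emph{if} the concatenated pair lies in $\csem{\te_1\catop\te_2}$ \emph{then} the domains are disjoint---exactly the conclusion you are proving, not a hypothesis you may use. And the worry is not cosmetic: take $\te_1=\tau_1(x)$, $\te_2=\tau_2(x)$, an event $\ev_1$ matching $\tau_1(x)$ with $\{x\mapsto 1\}$, and a pivot event matching $\tau_2(x)$ with $\{x\mapsto 2\}$ but not matching $\tau_2(1)$. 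All the printed hypotheses hold with $n=1$, yet after the (cat-l) step the right operand has become $\tau_2(1)$ and the pivot step fails, so the conclusion is false. In other words, the lemma as printed is missing a compatibility hypothesis---essentially that $\subs_1\subsMerge\subs_2$ is defined---which \emph{is} available at its sole point of use (the $\implies^4$ branch of Theorem~\ref{theo}). With that extra assumption your appeal to Lemmas~\ref{lemmaSubsDecomposition}, \ref{lemma3} and \ref{lemma4} can be made to work; without it, no argument can close the step, because the statement is simply not true in that generality.
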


In Theorem~\ref{theo} we claim that for every operator of the trace calculus, the compositional semantics is equivalent to the abstract semantics. To prove such claim, we need to show that, for each operator, every trace belonging to the compositional semantics belongs to the abstract semantics, which means we only consider correct traces (\emph{soundness}); and, every trace belonging to the abstract semantics belongs to the compositional semantics, which means we consider all the correct traces (\emph{completeness}).

Each operator requires a customised proofs, but in principle, all the proofs follow the same reasoning. Both soundness and completeness proof start expanding the compositional semantics definition in terms of its concrete semantics, which in turn is rewritten in terms of the operational semantics. At this point, the compositional operator's operands can be separately analysed in order to be recombined with the corresponding trace calculus operator. Finally, the proofs are concluded going backwards from the operational semantics to the abstract one, through the concrete semantics. For all the operators, except $\orop$ and $\andop$, the proofs are given by coinduction over the terms structure.
In every proof which is not analysed separately ($\iff$ cases), we implicitly apply Lemma~\ref{lemma1}.

\begin{theorem}\label{theo}
  The following claims hold for all contractive terms $\te_1$ and $\te_2$:
  \begin{itemize}
    \item $\sem{\te_1\orop\te_2}=\sem{\te_1}\corop\sem{\te_2}$
    \item $\sem{\te_1\catop\te_2}=\sem{\te_1}\ccatop\sem{\te_2}$
    \item $\sem{\te_1\andop\te_2}=\sem{\te_1}\candop\sem{\te_2}$
    \item $\sem{\te_1\shuffleop\te_2}=\sem{\te_1}\cshuffleop\sem{\te_2}$
    \item $\sem{\block{\dvar}{\te_1}}=\remove{\sem{\te_1}}{\dvar}$     
  \end{itemize}
\end{theorem}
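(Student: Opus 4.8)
The plan is to prove each of the five identities by establishing the two inclusions separately — soundness ($\subseteq$) and completeness ($\supseteq$) — and in every case to first unfold the abstract semantics $\sem{\cdot}$ into the concrete semantics $\csem{\cdot}$ of Definition~\ref{def:sem} via $\leadsto$, and then unfold $\csem{\cdot}$ into the one-step transitions $\extrans{}$ and $\notrans{}$ of Figure~\ref{fig:semantics}. Two tools are shared by all cases. Lemma~\ref{lemma1} supplies, for contractive terms, the dichotomy ``either $\te\extrans{\ev}\te';\subs$ for some $\te',\subs$, or else $\te\notrans{\ev}$'', which is what lets me reason by cases on whether the left operand consumes the current event (and is exactly why contractivity is needed). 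Lemma~\ref{lemma5} converts the operational side conditions of the form $\te\notrans{\ev}$ into the head conditions $\ev\nopref\sem{\te}$ that appear in the compositional operators, while every $\subsMerge$ on the compositional side is matched by rule (and), and the disjointness/finiteness of the emitted substitution traces is exactly what $\leadsto$ packages.

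For $\orop$ and $\andop$ no coinduction is required. For union I would argue on the first event only: rules (e-or-l)/(e-or-r) treat $\emptyevtr$, while for a trace $\ev\catop\evtr'$ exactly one of (or-l), (or-r) fires; (or-l) places the trace inside $\sem{\te_1}$, and (or-r), whose premise $\te_1\notrans{\ev}$ becomes $\ev\nopref\sem{\te_1}$ by Lemma~\ref{lemma5}, places it in the guarded right summand of $\corop$; the converse direction replays each element by (or-l) resp. (or-r), the guard $\ev\nopref\sem{\te_1}$ re-enabling the premise of (or-r) through Lemma~\ref{lemma5}. Intersection is even more direct, since rule (and) forces both operands to advance in lockstep on the \emph{same} event: a concrete run of $\te_1\andop\te_2$ is precisely a pair of concrete runs of $\te_1$ and $\te_2$ over a common event trace whose step substitutions merge, and rules (n-and-l), (n-and-r), (n-and) forbid exactly the cases where one side blocks or $\subs_1\subsMerge\subs_2$ is undefined, which is the very condition $\subs=\subs_1\subsMerge\subs_2$ of $\candop$.

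For $\catop$, $\shuffleop$ and the let-binder I would proceed by coinduction, i.e.\ by checking that the right-hand set satisfies the defining clauses of $\csem{\cdot}$. A run of $\te_1\catop\te_2$ uses (cat-l) while $\te_1$ can move and switches to $\te_2$ by (cat-r) exactly when $\te_1\notrans{\ev}$ and $\der\isEmpty(\te_1)$: an infinite left run never switches and yields the summand $\inft{\sem{\te_1}}$, reinjected on the completeness side by Lemma~\ref{lemmaOmega}, whereas a finite left run $\evtr_1$ followed by $\evtr_2\in\sem{\te_2}$ yields the second summand, the switch condition becoming $(\evtr_1\catop\ev)\nopref\sem{\te_1}$ by Lemma~\ref{lemma5}; the empty-$\evtr_2$ subcase is handled by Lemma~\ref{lemmaEmptyTail} and the reconstruction of the full concatenated run together with $\subs=\subs_1\subsMerge\subs_2$ by Lemma~\ref{lemmaConcatExpansion}. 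For $\block{\dvar}{\te_1}$, rules (par-t)/(par-f) show that a step of the block is a step of $\te_1$ in which, as soon as $\dvar$ is bound, the binding is frozen into the residual by $\restrict{\subs}{\dvar}$ and $\dvar$ is deleted from the emitted substitution; Lemmas~\ref{lemmaSubsDecomposition}, \ref{lemma3}, \ref{lemma4} and~\ref{lemmaSubsRemoval} carry out the bookkeeping relating $\sem{\restrict{\subs}{\dvar}\te'}$ to $\sem{\te'}$, and Lemma~\ref{lemmaLeadsToVar} shows deletion commutes with $\leadsto$, so that $\sem{\block{\dvar}{\te_1}}$ is $\sem{\te_1}$ with $\dvar$ erased from every substitution, namely $\remove{\sem{\te_1}}{\dvar}$.

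The shuffle identity is where I expect the real difficulty. A run of $\te_1\shuffleop\te_2$ interleaves (shuffle-l) and (shuffle-r) steps, the latter guarded by $\te_1\notrans{\ev}$, and the induced interleaving must be matched against the partition and strictly increasing functions of the generalized left-preferential shuffle $\glpshuffleop{\tproj{\sem{\te_1}}}$. Lemma~\ref{lemmaHeadShuffle} gives the case split on the head event — it comes from $\evtr_1$, or from $\evtr_2$ with $\ev\nopref\evtr_1$ — that drives the coinduction, and Lemma~\ref{lemma5} again turns the (shuffle-r) premise into the left-preferential filter. The main obstacle is that the filter imposed by $\glpshuffleop{\tproj{\sem{\te_1}}}$ at the position reached after $m$ moves of $\te_1$ must coincide with the operational side condition $\notrans{}$ on the \emph{residual} of $\te_1$ after those $m$ moves; reconciling the two requires using determinism (Lemma~\ref{lemma1}) to argue that, position by position, the only traces of $\tproj{\sem{\te_1}}$ that matter are the continuations consistent with the already-consumed prefix of $\evtr_1$, so that the residual's admissible head events are exactly those the generalized condition does not exclude. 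Making this correspondence line up across the coinductive step, where the set parameter of the generalized shuffle is effectively refined to the semantics of the residual, is the crux of the whole theorem.
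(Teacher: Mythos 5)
Your proposal is correct and follows essentially the same route as the paper's proof: the same soundness/completeness split with unfolding from the abstract through the concrete semantics to the operational rules, the same reliance on Lemma~\ref{lemma1} and Lemma~\ref{lemma5}, coinduction exactly for concatenation, shuffle and let, and for concatenation the identical four-way soundness case analysis (infinite left trace via Lemma~\ref{lemmaOmega}, empty right operand via Lemma~\ref{lemmaEmptyTail}, and the general case via Lemma~\ref{lemmaConcatExpansion}). The paper itself only details the concatenation case and defers union, intersection, shuffle and let to the extended version, so your sketch matches it at the level of detail the paper provides.
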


The proofs for the union, intersection, shuffle and let cases are omitted and can be found in the extended version \cite{ancona2020determinism}. We decided not to report them due to space constraints. In the proofs that follow, we prove composed implications such as $A_1 \vee \ldots \vee A_n \implies B$, by splitting them into $n$ separate implications $A_1 \implies^1\; B$, $\ldots$, $A_n \implies^n\; B$.

The first operator we are going to analyse is the concatenation, where we are going to show that $(\evtr,\subs)\in\sem{\te_1\catop\te_2}\iff(\evtr,\subs)\in\sem{\te_1}\ccatop\sem{\te_2}$. 

The proof for the empty trace is trivial, and is constructed on top of the definition of the $\isEmpty$ predicate.

\begin{eqnarray*}
 (\emptyevtr,\emptyset)\in\sem{\te_1\catop\te_2} & \iff & (\emptyevtr,\emptyevtr)\in\csem{\te_1\catop\te_2} \wedge (\emptyevtr,\emptyevtr)\leadsto(\emptyevtr,\emptyset) \;\; \text{(by definition of $\sem{\te}$)} \\
 & \iff & \isEmpty(\te_1\catop\te_2) \text{ is derivable} \;\; \text{(by definition of $\csem{\te}$)} \\
 & \iff & \isEmpty(\te_1) \text{ is derivable } \wedge \isEmpty(\te_2) \text{ is derivable} \;\; \text{(by definition of $\isEmpty(\te)$)} \\
 & \iff & (\emptyevtr,\emptyevtr)\in\csem{\te_1} \wedge (\emptyevtr,\emptyevtr)\in\csem{\te_2} \wedge (\emptyevtr,\emptyevtr)\leadsto(\emptyevtr,\emptyset) \;\; \text{(by definition of $\csem{\te}$)} \\
 & \iff & (\emptyevtr,\emptyset)\in\sem{\te_1} \wedge (\emptyevtr,\emptyset)\in\sem{\te_2} \;\; \text{(by definition of $\sem{\te}$)} \\
 & \iff & (\emptyevtr,\emptyset)\in(\sem{\te_1}\ccatop\sem{\te_2}) \;\; \text{(by definition of $\ccatop$)}
\end{eqnarray*}

When the trace is not empty, we present the procedure to prove \emph{completeness} ($\implies$) and \emph{soundness} ($\impliedby$), separately.

Let us start with \emph{completeness}. To prove it, we have to show that the abstract semantics $\sem{\te_1\catop\te_2}$ (based on the original operational semantics) is included in the composition of the abstract semantics $\sem{\te_1}$ and $\sem{\te_2}$, using $\ccatop$ operator. More specifically, in the first part of the proof ($\implies^1$), the first event of the trace belongs to the head of the concatenation. Thus, the head is expanded through operational semantics, causing the term to be rewritten into a concatenation, where the head is substituted with a new term. Since the concrete semantics has been defined coinductively, we can conclude that the proof is satisfied by the so derived concatenation by coinduction. Finally, the proof is concluded recombining the new concatenation in terms of $\ccatop$. The second part of the proof ($\implies^2$) does not require coinduction, since the trace belongs to the tail of the concatenation. Through the operational semantics, the concatenation is rewritten into the new tail, and the proof is straightforwardly concluded following the abstract semantics.

\begin{eqnarray*}
(\ev\catop\evtr,\subs)\in\sem{\te_1\catop\te_2} & \implies & (\ev\catop\evtr,\substr)\in\csem{\te_1\catop\te_2} \wedge (\ev\catop\evtr,\substr)\leadsto(\ev\catop\evtr,\subs) \;\; \text{(by definition of $\sem{\te}$)} \\
& \implies & \te_1\catop\te_2\extrans{\ev}\te';\subs' \text{ is derivable } \wedge (\evtr,\substr')\in\csem{\subs'\te'} \;\; \text{(by definition of $\csem{\te}$)} \\
& \implies & (\te_1\extrans{\ev}\te_1';\subs' \text{ is derivable } \wedge \te_1\catop\te_2\extrans{\ev}\te_1'\catop\te_2;\subs' \text{ is derivable } \wedge \\
& & (\evtr,\substr')\in\csem{\subs'(\te_1'\catop\te_2)}) \vee \\
& & (\te_1\notrans{\ev} \wedge \isEmpty(\te_1) \wedge \te_2\extrans{\ev}\te_2';\subs' \text{ is derivable } \wedge \te_1\catop\te_2\extrans{\ev}\te_2';\subs' \text{ is derivable} \wedge \\
& & (\evtr,\substr')\in\csem{\subs'\te_2'}) \;\; \text{(by operational semantics)} \\
& \implies^1 & \te_1\extrans{\ev}\te_1';\subs' \text{ is derivable } \wedge \te_1\catop\te_2\extrans{\ev}\te_1'\catop\te_2;\subs' \text{ is derivable } \wedge \\
& & (\evtr,\subs'')\in\sem{\subs'(\te_1'\catop\te_2)} \wedge (\evtr,\substr')\leadsto(\evtr,\subs'') \wedge \subs=\subs''\subsMerge\subs' \\ 
& & \text{(by definition of $\sem{\te}$)} \\
& \implies^1 & \te_1\extrans{\ev}\te_1';\subs' \text{ is derivable } \wedge \te_1\catop\te_2\extrans{\ev}\te_1'\catop\te_2 \text{ is derivable } \wedge \\
& & (\evtr,\subs'')\in\sem{\subs'\te_1'}\ccatop\sem{\subs'\te_2} \wedge (\evtr,\substr')\leadsto(\evtr,\subs'') \wedge \subs=\subs''\subsMerge\subs' \\ 
& & \text{(by coinduction over $\sem{\te}$)} \\
& \implies^1 & \te_1\extrans{\ev}\te_1';\subs' \text{ is derivable } \wedge \te_1\catop\te_2\extrans{\ev}\te_1'\catop\te_2 \text{ is derivable } \wedge \\
& & (\evtr_1,\subs_1'')\in\sem{\subs'\te_1'} \wedge (\evtr_2,\subs_2'')\in\sem{\subs'\te_2} \wedge \evtr=\evtr_1\catop\evtr_2 \wedge \\
& & (\evtr_2=\emptyevtr \vee (\evtr_2=\ev'\catop\evtr_3 \wedge \evtr_1\catop\ev'\nopref\sem{\subs'\te_1'}))\;\; \text{(by definition of $\ccatop$)} \\
& \implies^1 & \te_1\extrans{\ev}\te_1';\subs' \text{ is derivable } \wedge (\evtr_1,\substr_1)\in\csem{\subs'\te_1'} \wedge (\evtr_1,\substr_1)\leadsto(\evtr_1,\subs_1'') \wedge \\
& & (\evtr_2,\subs_2'')\in\sem{\subs'\te_2} \wedge \evtr=\evtr_1\catop\evtr_2 \wedge \\
& & (\evtr_2=\emptyevtr \vee (\evtr_2=\ev'\catop\evtr_3 \wedge \evtr_1\catop\ev'\nopref\sem{\subs'\te_1'}))\;\; \text{(by definition of $\sem{\te}$)} \\
& \implies^1 & (\ev\catop\evtr_1,\subs'\catop\substr_1)\in\csem{\te_1} \wedge (\evtr_1,\substr_1)\leadsto(\evtr_1,\subs_1'') \wedge \\
& & (\evtr_2,\subs_2'')\in\sem{\subs'\te_2} \wedge \evtr=\evtr_1\catop\evtr_2 \wedge \\
& & (\evtr_2=\emptyevtr \vee (\evtr_2=\ev'\catop\evtr_3 \wedge \evtr_1\catop\ev'\nopref\sem{\subs'\te_1'}))\;\; \text{(by definition of $\csem{\te}$)} \\
& \implies^1 & (\ev\catop\evtr_1,\subs_1'')\in\sem{\subs'\te_1} \wedge (\evtr_2,\subs_2''\subsMerge\subs')\in\sem{\te_2} \wedge \evtr=\evtr_1\catop\evtr_2 \wedge \\
& & (\evtr_2=\emptyevtr \vee (\evtr_2=\ev'\catop\evtr_3 \wedge \evtr_1\catop\ev'\nopref\sem{\subs'\te_1'}))\\
& &  \text{(by definition of $\sem{\te}$ and Lemma~\ref{lemmaSubsRemoval})} \\
& \implies^1 & (\ev\catop\evtr,\subs)\in\sem{\te_1}\ccatop\sem{\te_2} \;\; \text{(by definition of $\ccatop$)} \\
& \implies^2 & (\ev\catop\evtr,\substr)\in\csem{\te_2} \wedge (\emptyevtr,\emptyevtr)\in\csem{\te_1} \wedge \te_1\notrans{\ev} \;\; \text{(by definition of $\csem{\te}$)} \\
& \implies^2 & (\ev\catop\evtr,\subs)\in\sem{\te_2} \wedge (\emptyevtr,\emptyset)\in\sem{\te_1} \wedge \te_1\notrans{\ev} \;\; \text{(by definition of $\sem{\te}$)} \\
& \implies^2 &  (\ev\catop\evtr,\subs)\in\sem{\te_2} \wedge (\emptyevtr,\emptyset)\in\sem{\te_1} \wedge (\emptyevtr\catop\ev)\nopref\sem{\te_1} \;\; \text{(by Lemma~\ref{lemma5})} \\
& \implies^2 & (\ev\catop\evtr,\subs)\in\sem{\te_1}\ccatop\sem{\te_2} \;\; \text{(by definition of $\ccatop$)} \\
 \end{eqnarray*}
 
 We now prove \emph{soundness}. To prove it, we show that the composition of abstract semantics $\sem{\te_1}$ and $\sem{\te_2}$ using the $\ccatop$ operator is included in the abstract semantics of the related concatenation term $\sem{\te_1\catop\te_2}$. The resulting proof is splitted in four separated cases. When the trace belongs to $\sem{\te_1}$ is infinite ($\implies^1$). The proof is based on the fact that an infinite trace concatenated to another trace is always equal to itself. In all the other cases, the proof can be  fully derived by a direct application of the operational semantics. 

 \begin{eqnarray*}
(\ev\catop\evtr,\subs)\in\sem{\te_1}\ccatop\sem{\te_2} & \implies & (\ev\catop\evtr)\in\inft{\sem{\te_1}} \vee \\ 
& & (\ev\catop\evtr = \evtr_1\catop\evtr_2 \wedge (\evtr_1,\subs_1)\in\sem{\te_1} \wedge (\evtr_2,\subs_2)\in\sem{\te_2} \wedge \subs = \subs_1\subsMerge\subs_2 \wedge \\
& & (\evtr_2 = \emptyevtr \vee (\evtr_2 = \ev'\catop\evtr_3 \wedge \evtr_1\catop\ev'\nopref\sem{\te_1}))) \;\; \text{(by definition of $\ccatop$)} \\
& \implies & (\ev\catop\evtr)\in\inft{\sem{\te_1}} \vee \\ 
& & (\evtr_1=\emptyevtr \wedge (\emptyevtr,\emptyset)\in\sem{\te_1} \wedge (\ev\catop\evtr,\subs)\in\sem{\te_2} \wedge \ev\nopref\sem{\te_1}) \vee \\
& & (\evtr_2 = \emptyevtr \wedge (\ev\catop\evtr)\in\sem{\te_1} \wedge (\emptyevtr,\emptyset)\in\sem{\te_2}) \vee \\
& & (\evtr_1=\ev\catop\evtr_1' \wedge \evtr_2=\ev'\catop\evtr_3 \wedge \evtr_1\catop\ev'\nopref{}\sem{\te_1} \wedge \\
& & (\ev\catop\evtr_1,\subs_1)\in\sem{\te_1} \wedge (\ev'\catop\evtr_3)\in\sem{\te_2} \wedge \subs=\subs_1\subsMerge\subs_2) \\
\end{eqnarray*}
\vspace*{-1.3cm}
\begin{eqnarray*}
(\ev\catop\evtr,\subs)\in\inft{\sem{\te_1}} & \implies^1 & (\ev\catop\evtr,\subs)\in\sem{\te_1} \wedge \evtr \text{ infinite} \;\; \text{(by definition of $\inft{}$)} \\
& \implies^1 & (\ev\catop\evtr,\substr)\in\csem{\te_1} \wedge (\ev\catop\evtr,\substr)\leadsto(\ev\catop\evtr,\subs) \wedge \\
& &  \evtr \text{ infinite} \;\; \text{(by definition of $\sem{\te}$)} \\
& \implies^1 & \te_1\extrans{\ev}\te_1';\subs' \text{ is derivable } \wedge (\evtr,\substr')\in\csem{\subs'\te_1'} \wedge \\
& & (\ev\catop\evtr,\substr)\leadsto(\ev\catop\evtr,\subs) \wedge \evtr \text{ infinite} \;\; \text{(by definition of $\csem{\te}$)} \\
& \implies^1 & \te_1\extrans{\ev}\te_1';\subs' \text{ is derivable } \wedge (\evtr,\substr')\in\csem{\subs'(\te_1'\catop\te_2)} \wedge \\
& & (\ev\catop\evtr,\substr)\leadsto(\ev\catop\evtr,\subs) \wedge \evtr \text{ infinite} \;\; \text{(by Lemma~\ref{lemmaOmega})} \\
& \implies^1 & \te_1\catop\te_2\extrans{\ev}\te_1'\catop\te_2;\subs' \text{ is derivable } \wedge (\evtr,\substr')\in\csem{\subs'(\te_1'\catop\te_2)} \wedge \\
& & (\ev\catop\evtr,\substr)\leadsto(\ev\catop\evtr,\subs) \wedge \evtr \text{ infinite} \;\; \text{(by operational semantics)} \\
& \implies^1 & (\ev\catop\evtr,\substr)\in\csem{\te_1\catop\te_2} \wedge (\ev\catop\evtr,\substr)\leadsto(\ev\catop\evtr,\subs) \;\; \text{(by definition of $\csem{\te}$)} \\
& \implies^1 & (\ev\catop\evtr,\subs)\in\sem{\te_1\catop\te_2} \;\; \text{(by definition of $\sem{\te}$)} \\
\end{eqnarray*}
\vspace*{-1cm}
\begin{eqnarray*}
(\evtr_1=\emptyevtr \wedge (\emptyevtr,\emptyset)\in\sem{\te_1} \wedge & & \\
(\ev\catop\evtr,\subs)\in\sem{\te_2} \wedge \ev\nopref\sem{\te_1}) & \implies^2 & \isEmpty(\te_1) \text{ is derivable } \wedge (\ev\catop\evtr,\subs)\in\sem{\te_2} \wedge \\
& & \ev\nopref\sem{\te_1} \;\; \text{(by definition of $\sem{\te}$)} \\
& \implies^2 & \isEmpty(\te_1) \text{ is derivable } \wedge \te_2\trans{\ev}\te_2';\subs' \text{ is derivable } \wedge \\
& & (\evtr,\substr')\in\csem{\subs'\te_2'} \wedge \ev\nopref\sem{\te_1} \wedge (\ev\catop\evtr,\substr)\leadsto(\ev\catop\evtr,\subs) \\
& & \text{(by definition of $\csem{\te}$)} \\
& \implies^2 & \te_1\catop\te_2\trans{\ev}\te_2';\subs' \text{ is derivable } \wedge (\evtr,\substr')\in\csem{\subs'\te_2'} \\
& & \text{(by operational semantics)} \\
& \implies^2 & (\ev\catop\evtr,\substr)\in\csem{\te_1\catop\te_2} \wedge (\ev\catop\evtr,\substr)\leadsto(\ev\catop\evtr,\subs) \\
& & \text{(by definition of $\csem{\te}$)} \\
& \implies^2 & (\ev\catop\evtr,\subs)\in\sem{\te_1\catop\te_2} \;\; \text{(by definition of $\sem{\te}$)} \\
 \end{eqnarray*}
\vspace*{-1.2cm}
\begin{eqnarray*}
(\evtr_2 = \emptyevtr \wedge (\ev\catop\evtr)\in\sem{\te_1} \wedge (\emptyevtr,\emptyset)\in\sem{\te_2}) & \implies^3 & (\ev\catop\evtr,\subs)\in\sem{\te_1\catop\te_2}  \;\; \text{(by Lemma~\ref{lemmaEmptyTail})}\\
\end{eqnarray*}
\vspace*{-1.2cm}
\begin{eqnarray*}
(\evtr_1=\ev\catop\evtr_1' \wedge \evtr_2=\ev'\catop\evtr_3 \wedge & &  \\
\evtr_1\catop\ev'\nopref{}\sem{\te_1} \wedge (\ev\catop\evtr_1,\subs_1)\in\sem{\te_1} \wedge & & \\ 
(\ev'\catop\evtr_3,\subs_2)\in\sem{\te_2} \wedge \subs=\subs_1\subsMerge\subs_2) & \implies^4 & \te_1\extrans{\ev}\te_1';\subs_1' \text{ is derivable } \wedge (\evtr_1,\substr_1)\in\csem{\subs_1'\te_1} \wedge \\
& & (\evtr_1,\substr_1)\leadsto(\evtr_1,\subs_1'') \wedge \te_2\trans{\ev'}\te_2';\subs_2' \text{ is derivable } \wedge \\
& & (\evtr_3,\substr_2')\in\sem{\subs_2'\te_2'} \wedge (\evtr_2,\substr_2')\leadsto(\evtr_2,\subs_2'') \wedge \te_1\notrans{\ev'} \text{ is derivable} \\
& & \wedge \subs_1=\subs_1'\subsMerge\subs_1'' \wedge \subs_2=\subs_2'\subsMerge\subs_2'' \;\; \text{(by operational semantics)} \\
& \implies^4 & (\evtr_1\catop\evtr_2,\substr)\in\csem{\te_1\catop\te_2} \;\; \text{(by Lemma~\ref{lemmaConcatExpansion})} \\
\end{eqnarray*}
\vspace*{-1cm}

\added{
\section{Related Work}
\label{sect:related_work}

Compositionality, determinism and events-based semantics are topics
very central to concurrent systems.  Winskel has introduced the notion
of event structure \cite{Winskel86} to model computational processes
as sets of event occurrences together with relations representing
their causal dependencies.  Vaandrager \cite{Vaandrager91} has proved
that for concurrent deterministic systems it is sufficient to observe
the beginning and end of events to derive its causal structure.  Lynch
and Tuttle have introduced input/output automata \cite{LynchT87} to
model concurrent and distributed discrete event systems with a trace
semantics consisting of both finite and infinite sequences of actions.

The rest of this section describes some of the main RV techniques and state-of-the-art tools and compares them with respect to \rml;
more comprehensive surveys on RV can be found in literature \cite{DelgadoGR04,HavelundG05,rv,SokolskyHL12,Falcone13,BartocciFFR18,HavelundRTZ18} which mention formalisms for parameterised runtime verification that have not deliberately presented here for space limitation.
\paragraph{Monitor-oriented programming:}
Similarly as \rml, which does not depend on the monitored system and its instrumentation, other proposals introduce different levels of separation of concerns.
\emph{Monitor-oriented programming} (\emph{MOP} \cite{ChenR07}) is an infrastructure for RV that is neither tied to any particular programming language nor to a single specification language. 
In order to add support for new logics, one has to develop an appropriate plug-in converting specifications to one of the format supported by the MOP instance of the language of choice; 
the main formalisms implemented in existing MOP include finite state machines, extended regular expressions, context-free grammars and
temporal logics.
Finite state machines (or, equivalently, regular expressions) can be easily translated to \rml, have limited expressiveness,
but are widely used in RV  because they are well-understood among software developers
as opposite to other more sophisticated approaches, as temporal logics.
Extended regular expressions include intersection and complement; although such
operators allow users to write more compact specifications, they do not increase the formal expressive power since
regular languages are closed under both.
Deterministic Context-Free grammars (that is, deterministic pushdown automata) can be translated in \rml using recursion, concatenation, union, and the empty trace, while the relationship with Context-Free grammars (that is, pushdown automata) has not been fully investigated yet;
as stated in the introduction, \rml can express several non Context-Free properties, hence \rml cannot be less expressive
than Context-Free grammars, but we do not know whether Context-Free grammars are less expressive than \rml.



\paragraph{Temporal logics:}
Since RV has its roots in model checking, it is not surprising that logic-based formalism previously introduced in the context of the latter have been applied to the former.
\emph{Linear Temporal Logic} (\emph{LTL}) \cite{ltl}, is one of the most used formalism in verification.

Since the standard semantics of LTL is defined on infinite traces only, and RV monitors can only
check finite trace prefixes (as opposed to static formal verification), a three-valued semantics for LTL, named \emph{LTL\textsubscript{3}} has been proposed \cite{ltl3}.
Beyond the basic ``true'' and ``false'' truth values, a third ``inconclusive'' one is considered (LTL specification syntax is unchanged, only the semantics is modified to take into account the new value).
This allows one to distinguish the satisfaction/violation of the desired property (``false'') from the lack of sufficient evidence among the events observed so far (``inconclusive''), making this semantics more suited to RV.
Differently from LTL, the semantics of LTL\textsubscript{3} is defined on finite prefixes, making it more suitable for comparison with other RV formalisms.
Further development of LTL\textsubscript{3} led to \emph{RV-LTL} \cite{ltl4}, a 4-valued semantics on which \rml monitor verdicts are based on.

The expressive power of LTL is the same as of star-free $\omega$-regular languages \cite{PnueliZ93}.
When restricted to finite traces, \rml is much more expressive than LTL as any regular expression can be trivially translated to it; however,
on infinite traces, the comparison is slightly more intricate since \rml and LTL\textsubscript{3} have incomparable expressiveness \cite{AnconaFM16}.
There exist many extensions of LTL that deal with time in a more quantitative way (as opposed to the strictly qualitative approach of standard LTL) without increasing the expressive power, like \emph{interval temporal logic} \cite{CauZ97}, \emph{metric temporal logic} \cite{ThatiR05} and \emph{timed LTL} \cite{ltl3}.
Other proposals go beyond regularity \cite{AlurEM04} and even context-free languages \cite{BolligDL12}.

Several temporal logics are embeddable in \emph{recHML} \cite{Larsen90}, a variant of the \emph{modal $\mu$-calculus} \cite{Kozen83};
this allows the formal study of monitorability \cite{AcetoEtAl2019} in a general framework, to derive results for free about any formalism that can be
expressed in such calculi.
It would be interesting to study whether the \rml trace calculus could be derivable to get theoretical results that are missing from this presentation.
Unfortunately, it is not clear whether our calculus and \emph{recHML} are comparable at all.
For instance, \emph{recHML} is a fixed-point logic including both least and greatest fixpoint operators, while our calculus implicitly uses a greatest fixpoint semantics for recursion. Nonetheless, \emph{recHML} does not include a shuffle operator, and we are not aware of a way to derive it from other operators.

Regardless of the formal expressiveness, \rml and temporal logics are essentially different: \rml is closer to formalisms
with which software developers are more familiar, as regular expressions and
Context-Free languages, but does not offer direct support for time;
however, if the instrumentation provides timestamps, then some time-related properties can still be expressed exploiting parametricity.

\paragraph{State machines:}

As opposite to the language-based approach, as \rml, specifications can be defined using \emph{state machines} (a.k.a.\ automata or finite-state machines).
Though the core concept of a finite set of states and a (possibly input-driven) transition function between them is always there, in the field of automata theory different formalizations and extensions bring the expressiveness anywhere from simple deterministic finite automata to Turing machines.

An example of such formalisms is \emph{DATE} (Dynamic Automata with Timers and Events \cite{DATEs}), an extension of the finite-state automata computational model based on communicating automata with timers and transitions triggered by observed events.
This is the basis of \emph{LARVA} 
\cite{ColomboPS09}, a Java RV tool focused on control-flow and real-time properties, exploiting the expressiveness of the underlying system (DATE).

The main feature of LARVA that is missing in \rml is the support for temporized properties, as observed events can trigger timers for other expected events.
On the other hand, the parametric verification support of \rml is more general.
LARVA scope mechanism works at the object level, thus parametricity is based on \emph{trace slicing} \cite{HavelundRTZ18} and implemented by spawning new monitors and associating them with different objects.
The \rml approach is different as specifications can be parametric with respect to any observed data thanks to 
event type patterns and the let-construct to control the scope of the variables occurring in them.
Limitations of the parametric trace slicing approach described above, as well as possible generalizations to overcome them, have been explored by \cite{ChenR09,BarringerFHRR12,RegerCR15}.

Finally, the goals of the two tools are different:
while \rml strives to be system-independent, LARVA is devoted to Java verification, and the implementation relies on AspectJ 
\cite{KiczalesHHKPG01} as an ``instrumentation'' layer allowing one to inject code (the monitor) to be executed at specific locations in the program.
}

\section{Conclusion}\label{sec:conclu}

We have moved a first step towards a compositional semantics of the \rml trace calculus, by introducing
the notion of instantiated event trace, defining the semantics of trace expressions in terms of sets of
instantiated event traces and showing how each basic trace expression operator can be interpreted as
an operation over sets of instantiated event traces; we have formally proved that such an interpretation is
equivalent to the semantics derived from the transition system of the calculus if one considers only
contractive terms.

For simplicity, here we have considered only the core of the calculus, but we plan to extend our result to
the full calculus, which includes also the prefix closure operator and a top-level layer with constructs to support
generic specifications \cite{FranceschiniPhD2020}. Another interesting direction for further investigation consists in studying how the notion
of contractivity influences the expressive power of the calculus and, hence, of \rml; although we have failed so far to find a non-contractive
term whose semantics is not equivalent to a corresponding contractive trace expression, we have not formally proved that contractivity does not limit the
expressive power of the calculus.

\bibliographystyle{eptcs}
\bibliography{main}

\begin{thebibliography}{10}
\providecommand{\bibitemdeclare}[2]{}
\providecommand{\surnamestart}{}
\providecommand{\surnameend}{}
\providecommand{\urlprefix}{Available at }
\providecommand{\url}[1]{\texttt{#1}}
\providecommand{\href}[2]{\texttt{#2}}
\providecommand{\urlalt}[2]{\href{#1}{#2}}
\providecommand{\doi}[1]{doi:\urlalt{http://dx.doi.org/#1}{#1}}
\providecommand{\bibinfo}[2]{#2}

\bibitemdeclare{article}{AcetoEtAl2019}
\bibitem{AcetoEtAl2019}
\bibinfo{author}{L.~\surnamestart Aceto\surnameend},
  \bibinfo{author}{A.~\surnamestart Achilleos\surnameend},
  \bibinfo{author}{A.~\surnamestart Francalanza\surnameend},
  \bibinfo{author}{A.~\surnamestart Ing\'{o}lfsd\'{o}ttir\surnameend} \&
  \bibinfo{author}{K.~\surnamestart Lehtinen\surnameend}
  (\bibinfo{year}{2019}): \emph{\bibinfo{title}{Adventures in Monitorability:
  From Branching to Linear Time and Back Again}}.
\newblock {\sl \bibinfo{journal}{Proc. ACM Program. Lang.}}
  \bibinfo{volume}{3}(\bibinfo{number}{POPL}), pp.
  \bibinfo{pages}{52:1--52:29}, \doi{10.1145/3290365}.

\bibitemdeclare{article}{AhrendtCPS17}
\bibitem{AhrendtCPS17}
\bibinfo{author}{Wolfgang \surnamestart Ahrendt\surnameend},
  \bibinfo{author}{Jes{\'{u}}s~Mauricio \surnamestart Chimento\surnameend},
  \bibinfo{author}{Gordon~J. \surnamestart Pace\surnameend} \&
  \bibinfo{author}{Gerardo \surnamestart Schneider\surnameend}
  (\bibinfo{year}{2017}): \emph{\bibinfo{title}{Verifying data- and
  control-oriented properties combining static and runtime verification: theory
  and tools}}.
\newblock {\sl \bibinfo{journal}{Formal Methods in System Design}}
  \bibinfo{volume}{51}(\bibinfo{number}{1}), pp. \bibinfo{pages}{200--265},
  \doi{10.1007/s10703-017-0274-y}.

\bibitemdeclare{inproceedings}{AlurEM04}
\bibitem{AlurEM04}
\bibinfo{author}{Rajeev \surnamestart Alur\surnameend}, \bibinfo{author}{Kousha
  \surnamestart Etessami\surnameend} \& \bibinfo{author}{P.~\surnamestart
  Madhusudan\surnameend} (\bibinfo{year}{2004}): \emph{\bibinfo{title}{A
  Temporal Logic of Nested Calls and Returns}}.
\newblock In: {\sl \bibinfo{booktitle}{Tools and Algorithms for the
  Construction and Analysis of Systems, 10th International Conference, {TACAS}
  2004, Held as Part of the Joint European Conferences on Theory and Practice
  of Software, {ETAPS} 2004, Barcelona, Spain, March 29 - April 2, 2004,
  Proceedings}}, pp. \bibinfo{pages}{467--481},
  \doi{10.1007/978-3-540-24730-2\_35}.

\bibitemdeclare{article}{AnconaBB0CDGGGH16}
\bibitem{AnconaBB0CDGGGH16}
\bibinfo{author}{Davide \surnamestart Ancona\surnameend},
  \bibinfo{author}{Viviana \surnamestart Bono\surnameend},
  \bibinfo{author}{Mario \surnamestart Bravetti\surnameend},
  \bibinfo{author}{Joana \surnamestart Campos\surnameend},
  \bibinfo{author}{Giuseppe \surnamestart Castagna\surnameend},
  \bibinfo{author}{Pierre{-}Malo \surnamestart Deni{\'{e}}lou\surnameend},
  \bibinfo{author}{Simon~J. \surnamestart Gay\surnameend},
  \bibinfo{author}{Nils \surnamestart Gesbert\surnameend},
  \bibinfo{author}{Elena \surnamestart Giachino\surnameend},
  \bibinfo{author}{Raymond \surnamestart Hu\surnameend},
  \bibinfo{author}{Einar~Broch \surnamestart Johnsen\surnameend},
  \bibinfo{author}{Francisco \surnamestart Martins\surnameend},
  \bibinfo{author}{Viviana \surnamestart Mascardi\surnameend},
  \bibinfo{author}{Fabrizio \surnamestart Montesi\surnameend},
  \bibinfo{author}{Rumyana \surnamestart Neykova\surnameend},
  \bibinfo{author}{Nicholas \surnamestart Ng\surnameend}, \bibinfo{author}{Luca
  \surnamestart Padovani\surnameend}, \bibinfo{author}{Vasco~T. \surnamestart
  Vasconcelos\surnameend} \& \bibinfo{author}{Nobuko \surnamestart
  Yoshida\surnameend} (\bibinfo{year}{2016}): \emph{\bibinfo{title}{Behavioral
  Types in Programming Languages}}.
\newblock {\sl \bibinfo{journal}{Foundations and Trends in Programming
  Languages}} \bibinfo{volume}{3}(\bibinfo{number}{2-3}), pp.
  \bibinfo{pages}{95--230}, \doi{10.1561/2500000031}.

\bibitemdeclare{inproceedings}{AnconaC14}
\bibitem{AnconaC14}
\bibinfo{author}{Davide \surnamestart Ancona\surnameend} \&
  \bibinfo{author}{Andrea \surnamestart Corradi\surnameend}
  (\bibinfo{year}{2014}): \emph{\bibinfo{title}{Sound and Complete Subtyping
  between Coinductive Types for Object-Oriented Languages}}.
\newblock In: {\sl \bibinfo{booktitle}{{ECOOP} 2014}}, pp.
  \bibinfo{pages}{282--307}, \doi{10.1007/978-3-662-44202-9\_12}.

\bibitemdeclare{inproceedings}{AnconaC16}
\bibitem{AnconaC16}
\bibinfo{author}{Davide \surnamestart Ancona\surnameend} \&
  \bibinfo{author}{Andrea \surnamestart Corradi\surnameend}
  (\bibinfo{year}{2016}): \emph{\bibinfo{title}{Semantic subtyping for
  imperative object-oriented languages}}.
\newblock In: {\sl \bibinfo{booktitle}{{OOPSLA} 2016}}, pp.
  \bibinfo{pages}{568--587}, \doi{10.1145/2983990.2983992}.

\bibitemdeclare{inproceedings}{AnconaDM12}
\bibitem{AnconaDM12}
\bibinfo{author}{Davide \surnamestart Ancona\surnameend},
  \bibinfo{author}{Sophia \surnamestart Drossopoulou\surnameend} \&
  \bibinfo{author}{Viviana \surnamestart Mascardi\surnameend}
  (\bibinfo{year}{2012}): \emph{\bibinfo{title}{Automatic Generation of
  Self-monitoring MASs from Multiparty Global Session Types in Jason}}.
\newblock In: {\sl \bibinfo{booktitle}{Declarative Agent Languages and
  Technologies {X} - 10th International Workshop, {DALT} 2012, Valencia, Spain,
  June 4, 2012, Revised Selected Papers}}, pp. \bibinfo{pages}{76--95},
  \doi{10.1007/978-3-642-37890-4\_5}.

\bibitemdeclare{inproceedings}{AnconaFM16}
\bibitem{AnconaFM16}
\bibinfo{author}{Davide \surnamestart Ancona\surnameend},
  \bibinfo{author}{Angelo \surnamestart Ferrando\surnameend} \&
  \bibinfo{author}{Viviana \surnamestart Mascardi\surnameend}
  (\bibinfo{year}{2016}): \emph{\bibinfo{title}{Comparing Trace Expressions and
  Linear Temporal Logic for Runtime Verification}}.
\newblock In: {\sl \bibinfo{booktitle}{Theory and Practice of Formal Methods -
  Essays Dedicated to Frank de Boer on the Occasion of His 60th Birthday}}, pp.
  \bibinfo{pages}{47--64}, \doi{10.1007/978-3-319-30734-3\_6}.

\bibitemdeclare{inproceedings}{AnconaFM17}
\bibitem{AnconaFM17}
\bibinfo{author}{Davide \surnamestart Ancona\surnameend},
  \bibinfo{author}{Angelo \surnamestart Ferrando\surnameend} \&
  \bibinfo{author}{Viviana \surnamestart Mascardi\surnameend}
  (\bibinfo{year}{2017}): \emph{\bibinfo{title}{Parametric Runtime Verification
  of Multiagent Systems}}.
\newblock In: {\sl \bibinfo{booktitle}{Proceedings of the 16th Conference on
  Autonomous Agents and MultiAgent Systems, {AAMAS} 2017, S{\~{a}}o Paulo,
  Brazil, May 8-12, 2017}}, pp. \bibinfo{pages}{1457--1459},
  \doi{10.5555/3091125.3091328}.

\bibitemdeclare{misc}{ancona2020determinism}
\bibitem{ancona2020determinism}
\bibinfo{author}{Davide \surnamestart Ancona\surnameend},
  \bibinfo{author}{Angelo \surnamestart Ferrando\surnameend} \&
  \bibinfo{author}{Viviana \surnamestart Mascardi\surnameend}
  (\bibinfo{year}{2020}): \emph{\bibinfo{title}{Can determinism and
  compositionality coexist in RML? (extende version)}}.
\newblock \urlprefix\url{https://arxiv.org/abs/2008.06453}.

\bibitemdeclare{inproceedings}{AnconaFFM19}
\bibitem{AnconaFFM19}
\bibinfo{author}{Davide \surnamestart Ancona\surnameend}, \bibinfo{author}{Luca
  \surnamestart Franceschini\surnameend}, \bibinfo{author}{Angelo \surnamestart
  Ferrando\surnameend} \& \bibinfo{author}{Viviana \surnamestart
  Mascardi\surnameend} (\bibinfo{year}{2019}): \emph{\bibinfo{title}{A
  Deterministic Event Calculus for Effective Runtime Verification}}.
\newblock In \bibinfo{editor}{Alessandra \surnamestart Cherubini\surnameend},
  \bibinfo{editor}{Nicoletta \surnamestart Sabadini\surnameend} \&
  \bibinfo{editor}{Simone \surnamestart Tini\surnameend}, editors: {\sl
  \bibinfo{booktitle}{Proceedings of the 20th Italian Conference on Theoretical
  Computer Science, {ICTCS} 2019, Como, Italy, September 9-11, 2019}}, {\sl
  \bibinfo{series}{{CEUR} Workshop Proceedings}} \bibinfo{volume}{2504},
  \bibinfo{publisher}{CEUR-WS.org}, pp. \bibinfo{pages}{248--260}.
\newblock \urlprefix\url{http://ceur-ws.org/Vol-2504/paper28.pdf}.

\bibitemdeclare{inproceedings}{BarringerFHRR12}
\bibitem{BarringerFHRR12}
\bibinfo{author}{Howard \surnamestart Barringer\surnameend},
  \bibinfo{author}{Yli{\`{e}}s \surnamestart Falcone\surnameend},
  \bibinfo{author}{Klaus \surnamestart Havelund\surnameend},
  \bibinfo{author}{Giles \surnamestart Reger\surnameend} \&
  \bibinfo{author}{David~E. \surnamestart Rydeheard\surnameend}
  (\bibinfo{year}{2012}): \emph{\bibinfo{title}{Quantified Event Automata:
  Towards Expressive and Efficient Runtime Monitors}}.
\newblock In: {\sl \bibinfo{booktitle}{{FM} 2012: Formal Methods - 18th
  International Symposium, Paris, France, August 27-31, 2012. Proceedings}},
  pp. \bibinfo{pages}{68--84}, \doi{10.1007/978-3-642-32759-9\_9}.

\bibitemdeclare{incollection}{BartocciFFR18}
\bibitem{BartocciFFR18}
\bibinfo{author}{Ezio \surnamestart Bartocci\surnameend},
  \bibinfo{author}{Yli{\`{e}}s \surnamestart Falcone\surnameend},
  \bibinfo{author}{Adrian \surnamestart Francalanza\surnameend} \&
  \bibinfo{author}{Giles \surnamestart Reger\surnameend}
  (\bibinfo{year}{2018}): \emph{\bibinfo{title}{Introduction to Runtime
  Verification}}.
\newblock In: {\sl \bibinfo{booktitle}{Lectures on Runtime Verification -
  Introductory and Advanced Topics}}, pp. \bibinfo{pages}{1--33},
  \doi{10.1007/978-3-319-75632-5\_1}.

\bibitemdeclare{inproceedings}{ltl4}
\bibitem{ltl4}
\bibinfo{author}{Andreas \surnamestart Bauer\surnameend},
  \bibinfo{author}{Martin \surnamestart Leucker\surnameend} \&
  \bibinfo{author}{Christian \surnamestart Schallhart\surnameend}
  (\bibinfo{year}{2007}): \emph{\bibinfo{title}{The Good, the Bad, and the
  Ugly, But How Ugly Is Ugly?}}
\newblock In \bibinfo{editor}{Oleg \surnamestart Sokolsky\surnameend} \&
  \bibinfo{editor}{Serdar \surnamestart Ta{\c{s}}{\i}ran\surnameend}, editors:
  {\sl \bibinfo{booktitle}{Runtime Verification}}, \bibinfo{publisher}{Springer
  Berlin Heidelberg}, \bibinfo{address}{Berlin, Heidelberg}, pp.
  \bibinfo{pages}{126--138}, \doi{10.1007/978-3-540-77395-5\_11}.

\bibitemdeclare{article}{ltl3}
\bibitem{ltl3}
\bibinfo{author}{Andreas \surnamestart Bauer\surnameend},
  \bibinfo{author}{Martin \surnamestart Leucker\surnameend} \&
  \bibinfo{author}{Christian \surnamestart Schallhart\surnameend}
  (\bibinfo{year}{2011}): \emph{\bibinfo{title}{Runtime verification for {LTL}
  and {TLTL}}}.
\newblock {\sl \bibinfo{journal}{ACM Transactions on Software Engineering and
  Methodology (TOSEM)}} \bibinfo{volume}{20}(\bibinfo{number}{4}), pp.
  \bibinfo{pages}{14:1--14:64}, \doi{10.1145/2000799.2000800}.

\bibitemdeclare{inproceedings}{BolligDL12}
\bibitem{BolligDL12}
\bibinfo{author}{Benedikt \surnamestart Bollig\surnameend},
  \bibinfo{author}{Normann \surnamestart Decker\surnameend} \&
  \bibinfo{author}{Martin \surnamestart Leucker\surnameend}
  (\bibinfo{year}{2012}): \emph{\bibinfo{title}{Frequency Linear-time Temporal
  Logic}}.
\newblock In: {\sl \bibinfo{booktitle}{Sixth International Symposium on
  Theoretical Aspects of Software Engineering, {TASE} 2012, 4-6 July 2012,
  Beijing, China}}, pp. \bibinfo{pages}{85--92}, \doi{10.1109/TASE.2012.43}.

\bibitemdeclare{article}{CastagnaEtAl12}
\bibitem{CastagnaEtAl12}
\bibinfo{author}{G.~\surnamestart Castagna\surnameend},
  \bibinfo{author}{M.~\surnamestart Dezani-Ciancaglini\surnameend} \&
  \bibinfo{author}{L.~\surnamestart Padovani\surnameend}
  (\bibinfo{year}{2012}): \emph{\bibinfo{title}{On Global Types and Multi-Party
  Session}}.
\newblock {\sl \bibinfo{journal}{Logical Methods in Computer Science}}
  \bibinfo{volume}{8}(\bibinfo{number}{1}), \doi{10.2168/LMCS-8(1:24)2012}.

\bibitemdeclare{inproceedings}{CauZ97}
\bibitem{CauZ97}
\bibinfo{author}{Antonio \surnamestart Cau\surnameend} \&
  \bibinfo{author}{Hussein \surnamestart Zedan\surnameend}
  (\bibinfo{year}{1997}): \emph{\bibinfo{title}{Refining Interval Temporal
  Logic Specifications}}.
\newblock In: {\sl \bibinfo{booktitle}{Transformation-Based Reactive Systems
  Development, 4th International {AMAST} Workshop on Real-Time Systems and
  Concurrent and Distributed Software, ARTS'97, Palma, Mallorca, Spain, May
  21-23, 1997, Proceedings}}, pp. \bibinfo{pages}{79--94},
  \doi{10.1007/3-540-63010-4\_6}.

\bibitemdeclare{inproceedings}{ChenR07}
\bibitem{ChenR07}
\bibinfo{author}{Feng \surnamestart Chen\surnameend} \&
  \bibinfo{author}{Grigore \surnamestart Rosu\surnameend}
  (\bibinfo{year}{2007}): \emph{\bibinfo{title}{Mop: an efficient and generic
  runtime verification framework}}.
\newblock In: {\sl \bibinfo{booktitle}{Proceedings of the 22nd Annual {ACM}
  {SIGPLAN} Conference on Object-Oriented Programming, Systems, Languages, and
  Applications, {OOPSLA} 2007, October 21-25, 2007, Montreal, Quebec, Canada}},
  pp. \bibinfo{pages}{569--588}, \doi{10.1145/1297027.1297069}.

\bibitemdeclare{inproceedings}{ChenR09}
\bibitem{ChenR09}
\bibinfo{author}{Feng \surnamestart Chen\surnameend} \&
  \bibinfo{author}{Grigore \surnamestart Rosu\surnameend}
  (\bibinfo{year}{2009}): \emph{\bibinfo{title}{Parametric Trace Slicing and
  Monitoring}}.
\newblock In: {\sl \bibinfo{booktitle}{Tools and Algorithms for the
  Construction and Analysis of Systems, 15th International Conference, {TACAS}
  2009, Held as Part of the Joint European Conferences on Theory and Practice
  of Software, {ETAPS} 2009, York, UK, March 22-29, 2009. Proceedings}}, pp.
  \bibinfo{pages}{246--261}, \doi{10.1007/978-3-642-00768-2\_23}.

\bibitemdeclare{inproceedings}{DATEs}
\bibitem{DATEs}
\bibinfo{author}{Christian \surnamestart Colombo\surnameend},
  \bibinfo{author}{Gordon~J. \surnamestart Pace\surnameend} \&
  \bibinfo{author}{Gerardo \surnamestart Schneider\surnameend}
  (\bibinfo{year}{2008}): \emph{\bibinfo{title}{Dynamic Event-Based Runtime
  Monitoring of Real-Time and Contextual Properties}}.
\newblock In: {\sl \bibinfo{booktitle}{Formal Methods for Industrial Critical
  Systems, 13th International Workshop, {FMICS} 2008, L'Aquila, Italy,
  September 15-16, 2008, Revised Selected Papers}}, pp.
  \bibinfo{pages}{135--149}, \doi{10.1007/978-3-642-03240-0\_13}.

\bibitemdeclare{inproceedings}{ColomboPS09}
\bibitem{ColomboPS09}
\bibinfo{author}{Christian \surnamestart Colombo\surnameend},
  \bibinfo{author}{Gordon~J. \surnamestart Pace\surnameend} \&
  \bibinfo{author}{Gerardo \surnamestart Schneider\surnameend}
  (\bibinfo{year}{2009}): \emph{\bibinfo{title}{{LARVA} -- Safer Monitoring of
  Real-Time {J}ava Programs}}.
\newblock In: {\sl \bibinfo{booktitle}{{SEFM} 2009}}, pp.
  \bibinfo{pages}{33--37}, \doi{10.1109/SEFM.2009.13}.

\bibitemdeclare{article}{Courcelle83}
\bibitem{Courcelle83}
\bibinfo{author}{Bruno \surnamestart Courcelle\surnameend}
  (\bibinfo{year}{1983}): \emph{\bibinfo{title}{Fundamental Properties of
  Infinite Trees}}.
\newblock {\sl \bibinfo{journal}{Theor. Comput. Sci.}} \bibinfo{volume}{25},
  pp. \bibinfo{pages}{95--169}, \doi{10.1016/0304-3975(83)90059-2}.

\bibitemdeclare{inproceedings}{DavisCSL18}
\bibitem{DavisCSL18}
\bibinfo{author}{James~C. \surnamestart Davis\surnameend},
  \bibinfo{author}{Christy~A. \surnamestart Coghlan\surnameend},
  \bibinfo{author}{Francisco \surnamestart Servant\surnameend} \&
  \bibinfo{author}{Dongyoon \surnamestart Lee\surnameend}
  (\bibinfo{year}{2018}): \emph{\bibinfo{title}{The impact of regular
  expression denial of service {(ReDoS)} in practice: an empirical study at the
  ecosystem scale}}.
\newblock In: {\sl \bibinfo{booktitle}{Proceedings of the 2018 {ACM} Joint
  Meeting on European Software Engineering Conference and Symposium on the
  Foundations of Software Engineering, {ESEC/SIGSOFT} {FSE} 2018, Lake Buena
  Vista, FL, USA, November 04-09, 2018}}, pp. \bibinfo{pages}{246--256},
  \doi{10.1145/3236024.3236027}.

\bibitemdeclare{article}{DelgadoGR04}
\bibitem{DelgadoGR04}
\bibinfo{author}{Nelly \surnamestart Delgado\surnameend},
  \bibinfo{author}{Ann~Q. \surnamestart Gates\surnameend} \&
  \bibinfo{author}{Steve \surnamestart Roach\surnameend}
  (\bibinfo{year}{2004}): \emph{\bibinfo{title}{A Taxonomy and Catalog of
  Runtime Software-Fault Monitoring Tools}}.
\newblock {\sl \bibinfo{journal}{{IEEE} Trans. Software Eng.}}
  \bibinfo{volume}{30}(\bibinfo{number}{12}), pp. \bibinfo{pages}{859--872},
  \doi{10.1109/TSE.2004.91}.

\bibitemdeclare{incollection}{Falcone13}
\bibitem{Falcone13}
\bibinfo{author}{Yli{\`e}s \surnamestart Falcone\surnameend},
  \bibinfo{author}{Klaus \surnamestart Havelund\surnameend} \&
  \bibinfo{author}{Giles \surnamestart Reger\surnameend}
  (\bibinfo{year}{2013}): \emph{\bibinfo{title}{A Tutorial on Runtime
  Verification}}.
\newblock In: {\sl \bibinfo{booktitle}{Engineering Dependable Software
  Systems}}, pp. \bibinfo{pages}{141--175},
  \doi{10.3233/978-1-61499-207-3-141}.

\bibitemdeclare{inproceedings}{FalconeKRT18}
\bibitem{FalconeKRT18}
\bibinfo{author}{Yli{\`{e}}s \surnamestart Falcone\surnameend},
  \bibinfo{author}{Srdan \surnamestart Krstic\surnameend},
  \bibinfo{author}{Giles \surnamestart Reger\surnameend} \&
  \bibinfo{author}{Dmitriy \surnamestart Traytel\surnameend}
  (\bibinfo{year}{2018}): \emph{\bibinfo{title}{A Taxonomy for Classifying
  Runtime Verification Tools}}.
\newblock In: {\sl \bibinfo{booktitle}{Runtime Verification - 18th
  International Conference, {RV} 2018, Limassol, Cyprus, November 10-13, 2018,
  Proceedings}}, pp. \bibinfo{pages}{241--262},
  \doi{10.1007/978-3-030-03769-7\_14}.

\bibitemdeclare{phdthesis}{FranceschiniPhD2020}
\bibitem{FranceschiniPhD2020}
\bibinfo{author}{Luca \surnamestart Franceschini\surnameend}
  (\bibinfo{year}{March 2020}): \emph{\bibinfo{title}{{\rml: Runtime Monitoring
  Language}}}.
\newblock Ph.D. thesis, \bibinfo{school}{{DIBRIS - University of Genova}}.
\newblock \urlprefix\url{http://hdl.handle.net/11567/1001856}.

\bibitemdeclare{article}{FrischEtAl08}
\bibitem{FrischEtAl08}
\bibinfo{author}{A.~\surnamestart Frisch\surnameend},
  \bibinfo{author}{G.~\surnamestart Castagna\surnameend} \&
  \bibinfo{author}{V.~\surnamestart Benzaken\surnameend}
  (\bibinfo{year}{2008}): \emph{\bibinfo{title}{Semantic subtyping: Dealing
  set-theoretically with function, union, intersection, and negation types}}.
\newblock {\sl \bibinfo{journal}{J. ACM}}
  \bibinfo{volume}{55}(\bibinfo{number}{4}), \doi{10.1145/1391289.1391293}.

\bibitemdeclare{inproceedings}{HavelundG05}
\bibitem{HavelundG05}
\bibinfo{author}{Klaus \surnamestart Havelund\surnameend} \&
  \bibinfo{author}{Allen \surnamestart Goldberg\surnameend}
  (\bibinfo{year}{2005}): \emph{\bibinfo{title}{Verify Your Runs}}.
\newblock In: {\sl \bibinfo{booktitle}{Verified Software: Theories, Tools,
  Experiments, First {IFIP} {TC} 2/WG 2.3 Conference, {VSTTE} 2005, Zurich,
  Switzerland, October 10-13, 2005, Revised Selected Papers and Discussions}},
  pp. \bibinfo{pages}{374--383}, \doi{10.1007/978-3-540-69149-5\_40}.

\bibitemdeclare{incollection}{HavelundRTZ18}
\bibitem{HavelundRTZ18}
\bibinfo{author}{Klaus \surnamestart Havelund\surnameend},
  \bibinfo{author}{Giles \surnamestart Reger\surnameend},
  \bibinfo{author}{Daniel \surnamestart Thoma\surnameend} \&
  \bibinfo{author}{Eugen \surnamestart Zalinescu\surnameend}
  (\bibinfo{year}{2018}): \emph{\bibinfo{title}{Monitoring Events that Carry
  Data}}.
\newblock In: {\sl \bibinfo{booktitle}{Lectures on Runtime Verification -
  Introductory and Advanced Topics}}, pp. \bibinfo{pages}{61--102},
  \doi{10.1007/978-3-319-75632-5\_3}.

\bibitemdeclare{inproceedings}{KiczalesHHKPG01}
\bibitem{KiczalesHHKPG01}
\bibinfo{author}{Gregor \surnamestart Kiczales\surnameend},
  \bibinfo{author}{Erik \surnamestart Hilsdale\surnameend},
  \bibinfo{author}{Jim \surnamestart Hugunin\surnameend}, \bibinfo{author}{Mik
  \surnamestart Kersten\surnameend}, \bibinfo{author}{Jeffrey \surnamestart
  Palm\surnameend} \& \bibinfo{author}{William~G. \surnamestart
  Griswold\surnameend} (\bibinfo{year}{2001}): \emph{\bibinfo{title}{An
  Overview of {AspectJ}}}.
\newblock In: {\sl \bibinfo{booktitle}{{ECOOP} 2001 - Object-Oriented
  Programming, 15th European Conference, Budapest, Hungary, June 18-22, 2001,
  Proceedings}}, pp. \bibinfo{pages}{327--353},
  \doi{10.1007/3-540-45337-7\_18}.

\bibitemdeclare{article}{Kozen83}
\bibitem{Kozen83}
\bibinfo{author}{Dexter \surnamestart Kozen\surnameend} (\bibinfo{year}{1983}):
  \emph{\bibinfo{title}{Results on the Propositional mu-Calculus}}.
\newblock {\sl \bibinfo{journal}{Theor. Comput. Sci.}} \bibinfo{volume}{27},
  pp. \bibinfo{pages}{333--354}, \doi{10.1016/0304-3975(82)90125-6}.

\bibitemdeclare{article}{Larsen90}
\bibitem{Larsen90}
\bibinfo{author}{Kim~Guldstrand \surnamestart Larsen\surnameend}
  (\bibinfo{year}{1990}): \emph{\bibinfo{title}{Proof Systems for
  Satisfiability in {Hennessy-Milner} Logic with Recursion}}.
\newblock {\sl \bibinfo{journal}{Theor. Comput. Sci.}}
  \bibinfo{volume}{72}(\bibinfo{number}{2{\&}3}), pp.
  \bibinfo{pages}{265--288}, \doi{10.1016/0304-3975(90)90038-J}.

\bibitemdeclare{article}{rv}
\bibitem{rv}
\bibinfo{author}{Martin \surnamestart Leucker\surnameend} \&
  \bibinfo{author}{Christian \surnamestart Schallhart\surnameend}
  (\bibinfo{year}{2009}): \emph{\bibinfo{title}{A brief account of runtime
  verification}}.
\newblock {\sl \bibinfo{journal}{The Journal of Logic and Algebraic
  Programming}} \bibinfo{volume}{78}(\bibinfo{number}{5}), pp.
  \bibinfo{pages}{293--303}, \doi{10.1016/j.jlap.2008.08.004}.

\bibitemdeclare{inproceedings}{LynchT87}
\bibitem{LynchT87}
\bibinfo{author}{Nancy~A. \surnamestart Lynch\surnameend} \&
  \bibinfo{author}{Mark~R. \surnamestart Tuttle\surnameend}
  (\bibinfo{year}{1987}): \emph{\bibinfo{title}{Hierarchical Correctness Proofs
  for Distributed Algorithms}}.
\newblock In \bibinfo{editor}{Fred~B. \surnamestart Schneider\surnameend},
  editor: {\sl \bibinfo{booktitle}{Proceedings of the Sixth Annual {ACM}
  Symposium on Principles of Distributed Computing, Vancouver, British
  Columbia, Canada, August 10-12, 1987}}, \bibinfo{publisher}{{ACM}}, pp.
  \bibinfo{pages}{137--151}, \doi{10.1145/41840.41852}.

\bibitemdeclare{article}{left-rec}
\bibitem{left-rec}
\bibinfo{author}{RC~\surnamestart Moore\surnameend} (\bibinfo{year}{2000}):
  \emph{\bibinfo{title}{Removing left recursion from context-free grammars}}.
\newblock {\sl \bibinfo{journal}{NAACL 2000: Proceedings of the 1st North
  American chapter of the Association for Computational Linguistics
  conference}}.
\newblock \urlprefix\url{https://www.aclweb.org/anthology/A00-2033}.

\bibitemdeclare{inproceedings}{ltl}
\bibitem{ltl}
\bibinfo{author}{Amir \surnamestart Pnueli\surnameend} (\bibinfo{year}{1977}):
  \emph{\bibinfo{title}{The temporal logic of programs}}.
\newblock In: {\sl \bibinfo{booktitle}{18th Annual Symposium on Foundations of
  Computer Science, 1977}}, \bibinfo{organization}{IEEE}, pp.
  \bibinfo{pages}{46--57}, \doi{10.1109/SFCS.1977.32}.

\bibitemdeclare{inproceedings}{PnueliZ93}
\bibitem{PnueliZ93}
\bibinfo{author}{Amir \surnamestart Pnueli\surnameend} \&
  \bibinfo{author}{Lenore~D. \surnamestart Zuck\surnameend}
  (\bibinfo{year}{1993}): \emph{\bibinfo{title}{In and Out of Temporal Logic}}.
\newblock In: {\sl \bibinfo{booktitle}{Proceedings of the Eighth Annual
  Symposium on Logic in Computer Science {(LICS} '93), Montreal, Canada, June
  19-23, 1993}}, pp. \bibinfo{pages}{124--135}, \doi{10.1109/LICS.1993.287594}.

\bibitemdeclare{inproceedings}{RegerCR15}
\bibitem{RegerCR15}
\bibinfo{author}{Giles \surnamestart Reger\surnameend},
  \bibinfo{author}{Helena~Cuenca \surnamestart Cruz\surnameend} \&
  \bibinfo{author}{David~E. \surnamestart Rydeheard\surnameend}
  (\bibinfo{year}{2015}): \emph{\bibinfo{title}{{MarQ}: Monitoring at Runtime
  with {QEA}}}.
\newblock In: {\sl \bibinfo{booktitle}{Tools and Algorithms for the
  Construction and Analysis of Systems - 21st International Conference, {TACAS}
  2015, Held as Part of the European Joint Conferences on Theory and Practice
  of Software, {ETAPS} 2015, London, UK, April 11-18, 2015. Proceedings}}, pp.
  \bibinfo{pages}{596--610}, \doi{10.1007/978-3-662-46681-0\_55}.

\bibitemdeclare{article}{SokolskyHL12}
\bibitem{SokolskyHL12}
\bibinfo{author}{Oleg \surnamestart Sokolsky\surnameend},
  \bibinfo{author}{Klaus \surnamestart Havelund\surnameend} \&
  \bibinfo{author}{Insup \surnamestart Lee\surnameend} (\bibinfo{year}{2012}):
  \emph{\bibinfo{title}{Introduction to the special section on runtime
  verification}}.
\newblock {\sl \bibinfo{journal}{{STTT}}}
  \bibinfo{volume}{14}(\bibinfo{number}{3}), pp. \bibinfo{pages}{243--247},
  \doi{10.1007/s10009-011-0218-6}.

\bibitemdeclare{article}{ThatiR05}
\bibitem{ThatiR05}
\bibinfo{author}{Prasanna \surnamestart Thati\surnameend} \&
  \bibinfo{author}{Grigore \surnamestart Rosu\surnameend}
  (\bibinfo{year}{2005}): \emph{\bibinfo{title}{Monitoring Algorithms for
  Metric Temporal Logic Specifications}}.
\newblock {\sl \bibinfo{journal}{Electr. Notes Theor. Comput. Sci.}}
  \bibinfo{volume}{113}, pp. \bibinfo{pages}{145--162},
  \doi{10.1016/j.entcs.2004.01.029}.

\bibitemdeclare{article}{Vaandrager91}
\bibitem{Vaandrager91}
\bibinfo{author}{Frits~W. \surnamestart Vaandrager\surnameend}
  (\bibinfo{year}{1991}): \emph{\bibinfo{title}{Determinism - (Event Structure
  Isomorphism = Step Sequence Equivalence)}}.
\newblock {\sl \bibinfo{journal}{Theor. Comput. Sci.}}
  \bibinfo{volume}{79}(\bibinfo{number}{2}), pp. \bibinfo{pages}{275--294},
  \doi{10.1016/0304-3975(91)90333-W}.

\bibitemdeclare{inproceedings}{Winskel86}
\bibitem{Winskel86}
\bibinfo{author}{Glynn \surnamestart Winskel\surnameend}
  (\bibinfo{year}{1986}): \emph{\bibinfo{title}{Event Structures}}.
\newblock In \bibinfo{editor}{Wilfried \surnamestart Brauer\surnameend},
  \bibinfo{editor}{Wolfgang \surnamestart Reisig\surnameend} \&
  \bibinfo{editor}{Grzegorz \surnamestart Rozenberg\surnameend}, editors: {\sl
  \bibinfo{booktitle}{Petri Nets: Central Models and Their Properties, Advances
  in Petri Nets 1986, Part II, Proceedings of an Advanced Course, Bad Honnef,
  Germany, 8-19 September 1986}}, {\sl \bibinfo{series}{Lecture Notes in
  Computer Science}} \bibinfo{volume}{255}, \bibinfo{publisher}{Springer}, pp.
  \bibinfo{pages}{325--392}, \doi{10.1007/3-540-17906-2\_31}.

\end{thebibliography}
\end{document}